\pdfoutput=1

\documentclass[acmsmall]{acmart}\settopmatter{printfolios=true,printccs=false,printacmref=false}

\usepackage{verbatim}
\usepackage{proof}
\usepackage{latexsym}
\usepackage{amssymb}
\usepackage{url}
\usepackage{hyperref}
\usepackage{amsmath}
\usepackage{amsthm}
\usepackage{prooftree}
\usepackage{color}
\usepackage{spi}

\renewcommand{\vDash}{\models}

\newcommand{\OM}{\ensuremath{\mathcal{O\!M}}}

\newcommand{\ttt}{\mathtt{t\hspace*{-.25em}t}}
\newcommand{\fff}{\mathtt{f\hspace*{-.25em}f}}

\newcommand{\boxm}[1]{\mathopen{\big[ #1 \big]}} 
\newcommand{\diam}[1]{\mathopen{\big\langle #1 \big\rangle}}
\newcommand{\respecting}{\mathrel{\textrm{respecting}}}
\newcommand{\match}[1]{\mathopen{\left[ #1 \right]}}
\newcommand{\yields}{\supset}

\newcommand{\var}{\hat u}

\usepackage[all]{xy}

\usepackage{lipsum}                     
\usepackage{xargs}                      
\usepackage[colorinlistoftodos,prependcaption,textsize=small]{todonotes}
\newcommandx{\alwen}[2][1=]{\todo[linecolor=red,backgroundcolor=red!25,bordercolor=red,#1]{Alwen: #2}}
\newcommandx{\ross}[2][1=]{\todo[linecolor=blue,backgroundcolor=blue!25,bordercolor=blue,#1]{Ross: #2}}

\usepackage{spi}

\title[
A Bisimilarity Congruence
for the Applied $\pi$-Calculus
]{
A Bisimilarity Congruence for the Applied $\pi$-Calculus Sufficiently Coarse to Verify Privacy Properties
}


\setcopyright{none}

\bibliographystyle{ACM-Reference-Format}
\citestyle{acmauthoryear}   

\author{Ross Horne}

\affiliation{
Computer Science Research Unit, University of Luxembourg 
\\
\texttt{ross.horne@uni.lu}
}



\subjclass{F.4.1 Mathematical Logic; F.3.2 Semantics of Programming Languages; F.1.2 Modes of Computation}

\keywords{cryptographic calculi, bisimilarity, privacy, intuitionistic modal logic}


\begin{document}

\begin{abstract}
This paper is the first thorough investigation into the coarsest notion of bisimilarity for the applied $\pi$-calculus that is a congruence relation: \textit{open barbed bisimilarity}.
An open variant of labelled bisimilarity (\textit{quasi-open bisimilarity}), better suited to constructing bisimulations, is proven to coincide with open barbed bisimilarity.
These bisimilary congruences are shown to be characterised by an \textit{intuitionistic modal logic} that can be used, for example, to describe an attack on privacy whenever a privacy property is violated.
Open barbed bisimilarity provides a compositional approach to verifying cryptographic protocols, since properties proven can be reused in any context, including under input prefix. Furthermore, \textit{open barbed bisimilarity} is sufficiently coarse for reasoning about security and privacy properties of cryptographic protocols; in constrast to the finer bisimilarity congruence, \textit{open bisimilarity}, which cannot verify certain privacy properties.


\end{abstract}

\maketitle


\section{Introduction}

There has been much debate surrounding bisimilarity in the context of the applied $\pi$-calculus, since the calculus was first introduced~\cite{abadi01popl} as a generalisation and extension of the $\pi$-calculus~\cite{Milner92pi} for verifying cryptographic protocols.
A central concern revolves around the treatment of mobility of channels in the original presentation of bisimilarity. 

According to the original definition, the following two processes
are \textbf{mistakenly} bisimilar.
\[
P \triangleq \nu z. \cout{x}{z,y}.z(w)
\qquad\mbox{v.s.}\qquad
Q \triangleq \nu z. \cout{x}{z,y}
\]
However, these two processes should be distinguished since another process can: receive the pair $\pair{z}{y}$, take the first projection to obtain private channel $z$, and then use that private channel to send a message. Thus there is a distinguishing context\footnote{Here, action $\cout{\fst{u}}{\textit{y}}$ can be read as ``send message $y$ on the channel obtained by taking the first projection of $u$.'' } 
 $\context{\ \cdot\ } \triangleq {x(u).\cout{\fst{u}}{\textit{y}}} \cpar \left\{\ \cdot\ \right\}$,
such that $\context{P}$ can perform two communications to reach a state with no actions, which cannot be matched by $\context{Q}$. 
We argue that mobility, implemented by passing private channels as messages, is central to the $\pi$-calculus paradigm; hence this limitation of the notion of bisimilarity originally proposed for the applied $\pi$-calculus is significant.

The time now is right to move on from the above issue with mobility. 
The above limitation of the original conference version of bisimilarity~\cite{abadi01popl} has been addressed in a journal version~\cite{Abadi16}.
The trick is simple: allow channels to be messages. This way, a ``recipe'' to produce the channel name can used to indirectly refer to channels, such as $\fst{u}$ in the example context above, as permitted in ProVerif~\cite{Blanchet2008}.
Other notions of bisimilarity for the applied $\pi$-calculus~\cite{Delaune2007,Liu2010}, each following the old conference style for channels, can also be repaired by allowing messages as channels.
Note the $\psi$-calculus~\cite{Bengtson2009} was introduced as an alternative response to this problem; however the aforementioned journal paper~\cite{Abadi16} instead makes minimal changes necessary to repair the applied $\pi$-calculus (permitting channels to be messages).

\paragraph{Bisimilarity congruences for the applied $\pi$-calculus.}
This paper takes work on bisimilarity for the applied $\pi$-calculus a step further.
We explore notions of bisimilarity closed under any context, not just under any parallel context.
That is, we seek equivalences that are simultaneously a bisimilarity and a congruence. 
We identify three advantages of employing \textbf{bisimilarity congruences}:
\begin{itemize}
\item Improved \textit{algebraic reasoning}: since full compositionality is guaranteed, a process can replace an equivalent process anywhere inside a larger process term.

\item Improved \textit{robustness}: once two processes are proven to be equivalent, even if an attacker has the power to change the context of the process during runtime, an attack distinguishing the processes cannot be performed.

\item Improved \textit{state-space exploration}: bisimilarity congruences can instantiate inputs lazily, hence less resources are required to prove larger processes are equivalent.
\end{itemize}

There is a precedent for this investigation. A bisimilarity congruence called \textit{open bisimilarity} has been studied for the $\pi$-calculus~\cite{sangiorgi96acta}, and for a more restricted predecessor to the applied $\pi$-calculus called the spi-calculus~\cite{briais06entcs,tiu07aplas}.
The lazy approach to instantiating inputs made open bisimilarity the favoured bisimilarity for the mobility workbench~\cite{Victor94} --- the first toolkit to implement the $\pi$-calculus. 
Open bisimilarity has also been used in decision procedures for the spi-calculus~\cite{Tiu2010CSF}, by exploiting most general unifiers as finite representation of infinitely many inputs.



This work explores a notion of bisimilarity for the applied $\pi$-calculus, called \textit{open barbed bisimilarity}~\cite{Sangiorgi2001}.
Open barbed bisimilarity is a canonical choice, being the coarsest bisimilarity congruence for the applied $\pi$-calculus, i.e., a bisimilarity congruence with respect to which all bisimilarity congruences are sound.
The definition of open barbed bisimilarity is language independent, hence can be used to address unresolved design decisions.
We address two issues in particular. 
\begin{itemize}
\item Firstly, how can we define a bisimilarity congruence that can be used to reason about arbitrary message theories, not limited to xor~\cite{Ayala-Rincon2017} and blind signatures~\cite{Bursuc2014}?
Existing work on open bisimilarity for the spi-calculus is hard wired to handle Dolev-Yao~\cite{dolev83tit} symmetric encryption only.

\item Secondly, how do we define a bisimilarity congruence sufficiently coarse to verify privacy properties? 
Many privacy protocols involve \texttt{if-then-else} branching to provide dummy information to avoid privacy attacks via control flow analysis.
\end{itemize}
The latter problem is surprisingly subtle. Until recently, there was no definition of a bisimilarity that is a congruence and can handle \texttt{if-then-else} branching 
in the $\pi$-calculus, even without cryptographic primitives.
Recent work~\cite{Horne2018}, explains an approach to \texttt{if-then-else} in the $\pi$-calculus;
but with a warning: additional care must be taken to ensure the bisimilarity congruence can verify privacy properties.
Without care, an excessively lazy bisimilarity congruence, will claim to discover attacks that do not exist.

\paragraph{Subtle privacy properties.}
We illustrate a recurrent problem for verifying privacy protocols. The following example is a drastically cut down version of a classic private server example~\cite{Abadi2004,Cheval2017}, sufficient to explain the essence of the problem.
\[
\begin{array}{rl}
\mbox{Server A:}&\qquad \nu k. \nu r. \cout{a}{\pk{k}}.a(x).\cout{a}{r}
\\
\mbox{Server B:}&\qquad
\nu k. \nu r. \cout{a}{\pk{k}}.a(x).\texttt{if}\,x = \pk{k}\,\texttt{then}\,\cout{a}{\aenc{\pair{m}{r}}{\pk{k}}}\,\texttt{else}\,\cout{a}{r}
\end{array}
\]
Both processes above first transmit a public key, then receive a message.
Server A then transmits a random fresh name (a nonce) regardless of message received.
In contrast, Server B makes a decision based on the input.
If the input is the public key previously transmitted, then Server B responds with a message-nonce pair encrypted with the public key.
Otherwise, Server B sends a dummy random message, behaving as Server A.

Server A and Server B are indistinguishable to an external observer --- the attacker.
An attacker cannot learn that Server B responds in a special way to input $\pk{k}$ (the public key corresponding to private key $k$).
The idea is an attacker without private key $k$ cannot learn that Server B serves some data $m$ to the owner of $k$.
Thus the privacy of the intended recipient of the data is preserved.

We can verify this privacy property by showing Server A and Server B are bisimilar.
The warning is: we must take care about which bisimilarity congruence we employ. 
If, instead of open barbed bisimilarity, we employ the more famous \textit{open bisimilarity}~\cite{sangiorgi96acta},
the processes are \textbf{not} equivalent.
The law of excluded middle is invalidated for open bisimilarity~\cite{Ahn2017}; hence Server B can reach a state where it is not yet decided whether $x = \pk{k}$ or $x \not= \pk{k}$ at which point the \textit{if-then-else} branching cannot yet be resolved; but Server A cannot reach an equivalent state.
This distinguishing strategy, does not correspond to a real attack on the privacy of Server B; hence open bisimulation is not sufficiently coarse to verify this privacy property.

Fortunately, \textit{open barbed bisimilarity} address the above limitation of open bisimilarity. Open barbed bisimilarity is also intuitionistic, but private information, such as $\pk{k}$, is treated classically.
Thereby, after receiving the input either $x = \pk{k}$ or $x \not= \pk{k}$ holds; from which we can establish Server A and Server B are \textit{open barbed bisimilar}. 

\paragraph{Describing attacks.}
When there is a genuine attack it can be described using a modal logic formula.
The modal logic we propose is ``intuitionistic $\FM$'', which is proven in this work to logically characterise open barbed bisimilarity.
Thus whenever two processes are not open barbed bisimilar, we can construct a formula in intuitionistic $\FM$ 
that holds for one process only.

As an example of a distinguishing formula, consider a slight modification of Server B, without nonce $r$ in the encrypted message (from a cryptographic perspective this means a deterministic asymmetric encryption scheme is employed to encrypt $m$).
\[
\mbox{Server C:}\qquad
\nu k. \nu r. \cout{a}{\pk{k}}.a(x).\texttt{if}\,x = \pk{k}\,\texttt{then}\,\cout{a}{\aenc{m}{\pk{k}}}\,\texttt{else}\,\cout{a}{r}
\]
Now, Server C is not open barbed bisimilar to Server A.
The attack on the privacy of the protocol can be described by the following modal logic formula.
\[
\mbox{Server C} \vDash \diam{\co{a}(v)}\diam{a\,v}\diam{\co{a}(w)}\left( \aenc{m}{v} = w \right)
\]
The formula above is satisfied by Server C, but not by Server A (nor, by equivalence, Server B).
The attack described by the formula above is as follows: the attacker takes an output, named $v$, and feeds it back in as an input, then receives another output $w$.
At this point the attacker can reconstruct message $w$ using messages $v$ and $m$ (where $m$ is an open term representing a known or guessable plaintext).
Thus the attacker can determine that the server responds differently when the input received is $v$, i.e., message $\pk{k}$; hence the privacy of Server C is compromised.


All examples above, elaborated on in the body of the paper, are selected to be a minimal explanation to subtleties of \texttt{if-then-else} branching addressed by open barbed bisimilarity.

\paragraph{Summary.}
The body of the paper develops the theory of open barbed bisimilarity, as a robust foundation for verifying cryptographic protocols.
Section~\ref{section:api-open} introduces (strong) open barbed bisimilarity.
Section~\ref{section:quasi} introduces a variant of labelled bisimilarity called \textit{quasi-open bisimilarity} and proves that it coincides with open barbed bisimilarity.
Section~\ref{section:modal} defines an intuitionistic modal logic characterising quasi-open bisimilarity; hence also open barbed bisimilarity.
Section~\ref{section:privacy} provides more substantial examples of security and privacy properties.
Section~\ref{section:related} compares open barbed bisimilarity to other bisimilarities, including established notions of labelled bisimilarity.

\section{The Coarsest Bisimilarity Congruence for the Applied $\pi$-calculus}
\label{section:api-open}

This section concerns the coarsest (strong) bisimilarity congruence, open barbed bisimilarity.
Open barbed bisimilarity has not previously been explored for any cryptographic calculus.
However, it is a natural choice of bisimilarity, being, by definition, the greatest bisimilarity congruence.
Since open barbed bisimilarity has an objective language-independent definition, there are no design decisions --- there is only one reasonable definition as explored in this section.

\subsection{An example message term language and equational theory.}
\label{section:static}

In the applied $\pi$-calculus messages can be defined with respect to any message language subject to any equational theory ($=_{E}$).
The example we provide in Fig.~\ref{figure:messages} is for the purpose of meaningful examples.
Further theories can also be devised not limited to: 
sub-term convergent theories~\cite{Abadi2006};
blind signatures and homomorphic encryption~\cite{Bursuc2014}; and locally stable theories with inverses~\cite{Ayala-Rincon2017}.

\begin{figure}[h]
\[
\begin{gathered}
\begin{array}{rlr}
M, N, K \Coloneqq& x & \mbox{variable} \\
          \mid& \pk{M} & \mbox{public key} \\
          \mid& \hash{M} & \mbox{hash} \\
          \mid& \left\langle M, N\right\rangle & \mbox{tuple} \\
          \mid& \aenc{M}{N} & \mbox{encryption} \\
          \mid& \adec{M}{N} & \mbox{decryption} \\
          \mid& \fst{M} & \mbox{left} \\
          \mid& \snd{M} & \mbox{right} \\
\end{array} 
\qquad
\begin{array}{c}
\adec{\aenc{M}{\pk{K}}}{K} =_{E} M
\\[12pt]
\aenc{\adec{M}{K}}{\pk{K}} =_{E} M
\\[12pt]
\fst{\pair{M}{N}} =_{E} M
\\[12pt]
\snd{\pair{M}{N}} =_{E} N
\end{array}
\end{gathered}
\]
\caption{The applied $\pi$-calculus can be instantiated with \textbf{any} message language and equational theory for messages.
This example message theory is provided only for the purpose of providing meaningful examples.
}
\label{figure:messages}
\end{figure}

The example theory provided in Fig.~\ref{figure:messages} covers asymmetric encryption.
A message encrypted with public key $\pk{k}$ can only be decrypted using private key $k$.
The theory includes a collision-resistant hash function, with no equations.
This theory assumes we have the power to detect whether a message is a pair, but cannot distinguish a failed decryption from a random number.

\subsection{Active substitutions and open early transitions.}

We define the syntax of the applied $\pi$-calculus.
The syntax is similar to the $\pi$-calculus, except messages and channels can be any term rather than just variables.
There is no separate syntactic class of terms for names --- names are variables bound by new name binders.
In addition to processes, \textit{extended processes} are defined, which allow \textit{active substitutions} to float alongside processes and in the scope of new name binders, defined as follows.
\[
\begin{gathered}
\begin{array}{rlr}
P, Q \Coloneqq& 0 & \mbox{deadlock} \\
          \mid& \cout{M}{N}.P & \mbox{send} \\
          \mid& \cin{M}{y}.P & \mbox{receive} \\
          \mid& \mathopen{\left[M = N\right]}P & \mbox{match} \\
          \mid& \mathopen{\left[M \not= N\right]}P & \mbox{mismatch} \\
          \mid& \mathopen\nu x. P & \mbox{new} \\
          \mid& P \cpar Q & \mbox{parallel} \\
          \mid& P + Q & \mbox{choice} \\
          \mid& \bang P & \mbox{replication} 
\end{array}
\qquad
\begin{array}{l}
\mbox{Extended processes in normal form:}
\\[5pt]
\begin{array}{rlr}
A, B  \Coloneqq& \sigma \cpar P & \mbox{process with active substitution} \\ 
          \mid& \mathopen{\nu x.} A & \mbox{new} \\
\end{array}
\\[20pt]
\mbox{actions on labels:}
\\[5pt]
\begin{array}{rlr}
\pi \Coloneqq & \tau      & \mbox{progress} \\
         \mid & \co{M}(z) & \mbox{bound output} \\
         \mid & M\,N      & \mbox{free input} \\
\end{array}
\end{array}
\end{gathered}
\]
Extended processes in normal form $\mathopen{\nu \vec{x}.}\left(\sigma \cpar P\right)$ are subject to the restriction
that the variables in $\dom{\sigma}$ are fresh for $\vec{x}$, $\fv{P}$ and $\fv{y\sigma}$, for all variables $y$ (i.e., $\sigma$ is idempotent, and substitutions are fully applied to $P$).
We follow the convention that 
operational rules are defined directly on extended processes in normal forms.
This avoids numerous complications caused by the structural congruence in the original definition of bisimulation for the applied $\pi$-calculus.
We require the following definitions for composing extended processes in parallel and with substitutions,
defined whenever $z\not\in \fv{B} \cup \fv{\rho}$ and $\dom{\sigma} \cap \dom{\theta} = \emptyset$.
\begin{gather*}
\mathopen{\nu z.} A \cpar B
\triangleq
\mathopen{\nu{z}.} \left( A \cpar B \right)
\quad
B \cpar \mathopen{\nu z.} A
\triangleq
\mathopen{\nu{z}.} \left( B \cpar A \right)
\qquad
(\sigma \cpar P) \cpar (\theta \cpar Q) \triangleq \sigma\cdot\theta \cpar (P \cpar Q)
\\
\rho \cpar \nu z.A \triangleq \mathopen{\nu z.}\left(\rho \cpar A\right)
\qquad
\sigma \cpar \theta \cpar Q
\triangleq
\sigma\cdot\theta \cpar Q
\end{gather*}

\textit{Intuitionistic mismatch.}
Mismatch requires special attention. Mismatch models the \texttt{else} branch of an \texttt{if-then-else} statement with an equality guard.
We define $\texttt{if}\,M=N\,\texttt{then}\,P\,\texttt{else}\,Q$ as an abbreviation for $\match{M=N}P + \match{M\not=N}Q$.

As uncovered in related work~\cite{Horne2018},
the trick for handling mismatch such that we obtain a congruence is to treat mismatch intuitionistically.
Intuitionistic negation enjoys the property that it is preserved under substitutions; a property that fails for classical negation in general.
E.g., there are substitutions under which $\match{x \not= \hash{y}}a(z)$ can perform an input transition and others where it cannot, hence neither $x = \hash{y}$ nor $x \not= \hash{y}$ holds in the intuitionistic setting, until more information is provided about the environment.
In order to define intuitionistic negation, we require the notion of a fresh substitution; which is also critical for the logical characterisation introduced later in Section~\ref{section:modal}.
\begin{definition}[fresh]\label{def:respects}
Given a set of variables $\env $ and substitution $\sigma$,
we say $\sigma$ is fresh for $\env$
whenever $\dom{\sigma} \cap \env = \emptyset$, and, for all $y \not\in \env $, we have $\fv{y\sigma} \cap \env  = \emptyset$.
We say entailment $\env  \vDash M \not= N$ holds whenever
there is no $\sigma$ fresh for $\env $ such that $M\sigma =_{E} N\sigma$.
\end{definition}
Consider the following examples that hold or fail to hold for different reasons.
Entailment $\emptyset \vDash x \not= h(x)$ holds, since there exists no unifier, witnessed by a simple occurs check.
In contrast, $\emptyset \vDash x \not= h(y)$ does not hold, since there exists substitution $\sub{x}{h(y)}$ unifying messages $x$ and $h(y)$, so it is still possible the messages could be equal; thus, there is insufficient information to decided whether the messages are equal or not.
By extending the environment such that $y$ is a private name, entailment $y \vDash x \not= h(y)$ holds, since most general unifier $\sub{x}{h(y)}$ is not fresh for $\left\{ y \right\}$ --- an observer who can influence $x$, cannot make $x$ equal to $h(y)$ without access to $y$.

To define open barbed bisimulation, we require an \textit{open early labelled transition system} for the applied $\pi$-calculus in Fig.~\ref{figure:active}.
There are three types of label: $\tau$ representing some internal progress due to communication; bound output $\co{M}(x)$ representing that something bound to $x$ is sent on channel $M$;
and free input $M\,N$ representing message $N$ is received on channel $M$.

\begin{figure}
\[
\begin{gathered}
\begin{array}{c}
\begin{prooftree}
\justifies
\env  \colon \mathopen{\cin{M}{x}.}P \lts{M\,N} {P\sub{x}{N}}
\using
\mbox{\textsc{Inp}}
\end{prooftree}
\quad
\begin{prooftree}
x \not \in \fv{M}\cup\fv{N}\cup\fv{P} \cup \env 
\justifies
\env  \colon \cout{M}{N}.P \lts{\co{M}(x)} \sub{x}{N} \cpar P
\using
\mbox{\textsc{Out}}
\end{prooftree}
\quad
\begin{prooftree}
 \env  \colon P \lts{\pi} A
\justifies
 \env  \colon P + Q \lts{\pi} A
\using
\mbox{\textsc{Sum-l}}
\end{prooftree}
\\[15pt]
\begin{prooftree}
 \env  \colon P \lts{\pi} A
\justifies
 \env  \colon {\mathopen{\left[M=M\right]}{P}\lts{\pi}{A}}
\using
\mbox{\textsc{Mat}}
\end{prooftree}
\qquad\qquad
\begin{prooftree}
 \env  \colon P \lts{\pi} A
\qquad
 \env  \vDash M \not= N
\justifies
 \env  \colon {\mathopen{\left[M \not= N\right]}{P}\lts{\pi}{A}}
\using
\mbox{\textsc{Mismatch}}
\end{prooftree}
\\[15pt]
\begin{prooftree}
\env , x \colon A \lts{\pi} B
\quad
x \not\in \mathrm{n}(\pi) \cup \env 
\justifies
 \env  \colon {{\nu x.A}\lts{\pi}{\nu x.B}}
\using
\mbox{\textsc{Res}}
\end{prooftree}
\qquad\qquad
\begin{prooftree}
 \env  \colon {P \lts{\pi} A}
\quad
\mbox{$\bn{\pi} \cap \fv{Q} = \emptyset$}
\justifies
 \env  \colon {{P \cpar Q} \lts{\pi} {A \cpar Q}}
\using
\mbox{\textsc{Par-l}}
\end{prooftree}
\\[15pt]
\begin{prooftree}
 \env  \colon {P \lts{\pi\sigma} A}
\quad
\mbox{$\sigma$ fresh for $\bn{\pi}$}
\justifies
 \env  \colon {{\sigma \cpar P} \lts{\pi} {\sigma \cpar A}}
\using
\mbox{\textsc{Alias}}
\end{prooftree}
\qquad\qquad\qquad
\begin{prooftree}
\env  \colon P \lts{\pi} A
\justifies
\env  \colon \bang P \lts{\pi} A \cpar \bang P
\using
\mbox{\textsc{Rep-act}}
\end{prooftree}
\\[15pt]
\begin{prooftree}
 \env  \colon P \lts{\co{M}(x)} \nu \mathopen{\vec{z}.}\left(\sub{x}{N} \cpar P'\right)
\qquad
 \env  \colon Q \lts{M\,N} Q' 
\qquad
 \left(\left\{x \right\} \cup \vec{z}\right) \cap \fv{Q} = \emptyset
\justifies
 \env  \colon {P \cpar Q}\lts{\tau}{\mathopen{\nu \vec{z}.}\left(P' \cpar Q'\right)} 
\using
\mbox{\textsc{Close-l}}
\end{prooftree}
\\[15pt]
\begin{prooftree}
\env  \colon P \lts{\co{M}(x)} \mathopen{\nu \vec{z}.}\left(\sub{x}{N} \cpar Q\right)
\qquad
\env  \colon P \lts{M\,N} R
\qquad
 \vec{z} \cap \fv{P} = \emptyset
\justifies
\env  \colon \bang P \lts{\tau} \mathopen{\nu \vec{z}.}\left( Q \cpar R \cpar \bang P\right) 
\using
\mbox{\textsc{Rep-close}}
\end{prooftree}
\end{array}
\end{gathered}
\]
\caption{An \textit{open early} labelled transition system, plus symmetric rules for parallel composition and choice.
The equational theory over message terms can be applied to equate the occurrences of $M$ in the rules, \textsc{Inp}, \textsc{Out}, \textsc{Mat}, \textsc{Close-l}, and \textsc{Rep-close}.
The set of free variables and $\alpha$-conversion are as standard, where $\nu x.P$ and $M(x).P$ bind $x$ in $P$. 
Define the bound names such that $\bn{\pi} = \left\{ x \right\}$ only if $\pi = \co{M}(x)$ and $\bn{\pi} = \emptyset$ otherwise.
Define the names such that $\n{M\,N} = \fv{M} \cup \fv{N}$, $\n{M(x)} = \fv{M} \cup \left\{x\right\}$ and $\n{\tau} = \emptyset$.
}\label{figure:active}
\end{figure}

\textbf{The \textsc{Mismatch} and \textsc{Res} rules.}
The \textsc{Mismatch} rule is defined in terms of the entailment relation in Def.~\ref{def:respects}. 
The \textsc{Res} rule can also influence mismatches by introducing fresh private names.
For example, the following derivation shows an input transition is enabled.
\[
\begin{prooftree}
\begin{prooftree}
\begin{prooftree}
\justifies
y \colon z(w) \lts{z\,w} 0
\using
\textsc{Inp}
\end{prooftree}
\qquad
y \vDash x \not= h(y)
\justifies
y \colon \match{x \not= h(y)}z(w) \lts{z\,w} 0
\using
\textsc{Mismatch}
\end{prooftree}
\justifies
\emptyset \colon
\nu y. \match{x \not= h(y)}z(w) \lts{z\,w} \nu y. 0
\using
\textsc{Res}
\end{prooftree}
\] 
Notice, the bound variable $y$ is added to the set of private names, enabling entailment $y \vDash {x \not= h(y)}$.

\textbf{The \textsc{Alias} rule.}
A special alias rule is used in this normal-form presentation of the applied $\pi$-calculus; serving the purpose of applying active substitutions, while avoiding problems caused by the structural congruence.
For a non-trivial example of the \textsc{Alias} rule, \textsc{Res} rule and equational theory working together observe the following transition is  derivable.
\[
\begin{prooftree}
\begin{prooftree}
\begin{prooftree}
\justifies
m \colon
m(x)
\lts{\fst{\pair{m}{n}}\,x}
0
\using
\textsc{Inp}
\end{prooftree}
\justifies
m \colon
\sub{w}{\pair{m}{n}} \cpar m(x)
\lts{\fst{w}\,x}
\sub{w}{\pair{m}{n}} \cpar 0
\using
\textsc{Alias}
\end{prooftree}
\justifies
\emptyset \colon
\mathopen{\nu m.}\left( \sub{w}{\pair{m}{n}} \cpar m(x) \right)
\lts{\fst{w}\,x}
\mathopen{\nu m.}\left( \sub{w}{\pair{m}{n}} \cpar 0 \right)
\using
\textsc{Res}
\end{prooftree}
\]
The conditions on the \textsf{Res} rule ensure bound name $m$ cannot appear in the terms on the label.
Fortunately, the \textsc{Alias} rule allows a $m$ to be expressed in terms of extruded variable $w$.
Since $m =_E \fst{\pair{m}{n}}$ and the equational theory can be applied in rule \textsc{Inp}, the above input action on channel $\fst{w}$ is enabled,
indirectly representing that channel $m$ is used for the input action.

Note a device with the same effect as the \textsc{Alias} rule is used in the proof of the recently corrected definition of \textit{labelled bisimilarity}~\cite{Abadi16}. Note in particular the normal form presentations of labelled transitions in the definition between B.9 and B.10 in the extended Arxiv version of the same paper~\cite{Arxiv}.
A normal form presentation is also used in ProVerif; hence there should be no controversy employing normal forms and the \textsc{Alias} rule.

\textbf{The \textsc{Out} rule.}
A rule differing significantly from standard presentations of the core $\pi$-calculus is the \textsc{Out} rule.
Instead of recording the message sent on the label, the message is recorded in an active substitution. The domain of the active substitution is chosen to be a fresh variable appearing as the bound variable in the output action on the label.

In the following example a message is sent using the \textsc{Out} rule, then the \textsc{Res} rule is applied such that the private name $n$ in the active substitution appears bound before and after the transition.
\[
\begin{prooftree}
\begin{prooftree}
\justifies
n, k \colon
\cout{a}{\aenc{n}{\pk{k}}}.n(x) 
\lts{\co{a}(w)} {\sub{w}{\aenc{n}{\pk{k}}}} \cpar n(x)
\end{prooftree}
\justifies
k \colon
\mathopen{\nu n.}\cout{a}{\aenc{n}{\pk{k}}}.n(x) 
\lts{\co{a}(w)} \mathopen{\nu n.}\left(\sub{w}{\aenc{n}{\pk{k}}} \cpar n(x)\right)
\end{prooftree}
\]
Observe, by rule \textsc{Inp}, the following input action is enabled.
\[
k \colon
{a(w).\cout{\adec{w}{k}}{a}} \lts{a\,\aenc{n}{\pk{k}}} \cout{\adec{\aenc{n}{\pk{k}}}{k}}{a}
\]
Hence by \textsc{Close-l} the following interaction is enabled,
using $\adec{\aenc{n}{\pk{k}}}{k} =_{E} n$.
\[
k \colon
{\mathopen{\nu n.}\cout{a}{\aenc{n}{\pk{k}}}.n(x) 
\cpar
a(w).\cout{\adec{w}{k}}{a}}
\lts{\tau}
\mathopen{\nu n.}\left(
n(x)
\cpar
\cout{n}{a}
\right)
\]
Note this labelled approach to interaction follows closely how interaction traditionally works in the $\pi$-calculus.
An advantage of our labelled transition approach is \textit{strong} and \textit{weak} variants of bisimilarities can be studied.
In contrast, the original system proposed for the applied $\pi$-calculus~\cite{abadi01popl} used a hybrid labelled/reduction system that can only be used to formalise \textit{weak} bisimilarities.

\begin{definition}
As a convention, write $A \lts{\pi} B$ whenever $\emptyset \colon A \lts{\pi} B$.
\end{definition}

\subsection{An objective bisimilarity congruence: open barbed bisimilarity.}

A barb represents the ability to observe an input or output action on a channel.
Barbs are typically used to define \textit{barbed equivalence}, or \textit{observational equivalence}~\cite{Milner1992}.
However, barbed equivalence is a congruence but not a bisimilarity; while observational equivalence is a bisimilarity but not a congruence. 
For this reason, we prefer \textit{open barbed bisimilarity}~\cite{Sangiorgi2001},
which is, by definition, both a bisimilarity and a congruence.
\begin{definition}[open barbed bisimilarity]\label{def:open-barbed}
A process $P$ has barb $M$, written $\barb{P}{M}$,
whenever,
for some $A$, 
$P \lts{\co{M} (z)} A$,
or $P \lts{M\,N} A$.
An open barbed bisimulation $\rel$ is a symmetric relation over processes
such that whenever $P \rel Q$ holds 
the following hold:
\begin{itemize}
\item For \textit{all} contexts $\context{\ \cdot\ }$, $\context{P} \rel \context{Q}$.
\item If $\barb{P}{M}$ then $\barb{Q}{M}$.
\item If $P \lts{\tau} P'$, there exists $Q'$ such that $Q \lts{\tau} Q'$ and $P' \rel Q'$ holds.
\end{itemize}
Open barbed bisimilarity $\bsim$ is the greatest open barbed bisimulation.
\end{definition}
The power of open barbed bisimilarity comes from
closing by all contexts at every step, not only at the beginning of execution.
Closing by all contexts at every step ensures the robustness of open barbed bisimilarity even if the environment changes at runtime;
i.e.,\ we stay within a congruence relation at every step of the bisimulation game.

Recall a congruence is an equivalence relation closed under all contexts.
Symmetry and context closure are immediate from definition of open barbed bisimilarity. Reflexivity is trivial since the identity relation over extended processes is an open barbed bisimulation.
Transitivity is slightly more involved, proven by showing that the transitive closure of two open barbed bisimulations is an open barbed bisimulation.

Open barbed bisimilarity is concise --- the definition requires only the open labelled transition system in Fig.~\ref{figure:active} and the three clauses in Definition~\ref{def:open-barbed}.
Furthermore, objectively, open barbed bisimilarity is the coarsest bisimilarity congruence,
in the sense that it is by definition a congruence, and defined independently of the content of the messages sent and received.
Notice, due to the independence of the information on the labels, open barbed bisimilarity applies to any language;
indeed open barbed bisimilarity is a generalisation of dynamic observational equivalence~\cite{Sassone1992}, that, historically, was used to objectively identify the greatest bisimulation congruence for \textsf{CCS}.

For the above reasons, open barbed bisimilarity is an ideal reference definition.
However, as with all barbed congruences it is unwieldy due to the closure under all contexts.
This leads us to the notion of quasi-open bisimilarity in the next section which is easier to use.

\section{Quasi-open bisimilarity for the applied $\pi$-calculus}
\label{section:quasi}

As highlighted in the previous section, open barbed bisimilarity is concise to define but difficult to check, due to the quantification over all contexts.
An open variant of labelled bisimilarity, called \textit{quasi-open bisimilarity}, avoids quantifying over all contexts;
and furthermore, coincides with open barbed bisimilarity.
In this section, we lift quasi-open bisimilarity to the setting of the applied $\pi$-calculus,
generalising established results for the $\pi$-calculus~\cite{Sangiorgi2001}
and the $\pi$-calculus with mismatch~\cite{Horne2018}.

\subsection{Recalling the standard definition of static equivalence.}
To extend quasi-open bisimulation to the applied $\pi$-calculus the notion of static equivalence is required. 
Static equivalence is defined over the static information in an extended process --- the active substitutions and name restrictions.
\begin{definition}[static equivalence]\label{def:static}
Two normal form extended processes 
$\mathopen{\nu\vec{x}.}\left( \sigma \cpar P \right)$ 
and 
$\mathopen{\nu\vec{y}.}\left( \theta \cpar Q \right)$ 
are statically equivalent
whenever 
for all messages $M$ and $N$ such that $\left(\fv{M} \cup \fv{N}\right) \cap \left(\vec{x} \cup \vec{y}\right) = \emptyset$,
$M\sigma =_{E} N\sigma$ if and only if $M\theta =_{E} N\theta$.
\end{definition}
In the above definition, messages $M$ and $N$ represent to different ``recipes'' for producing messages.
Two extended processes are distinguished by static equivalence only when the two recipes produce equivalent messages under one substitution, but distinct messages under the other substitution.

\paragraph{Static equivalence examples.}
The concept of static equivalence is no different from original work on the applied $\pi$-calculus~\cite{Abadi16}. However, for a self-contained presentation we provide examples.
The following extended processes are not statically equivalent.
\[
\mathopen{\nu m,n.}\left(\sub{v,w}{m,n} \cpar 0\right)
\qquad
\mbox{v.s.}
\qquad
\mathopen{\nu m.}\left(\sub{v,w}{m,\hash{m}} \cpar 0\right)
\]
They are distinguished by messages $\hash{v}$ and $w$.
To see why, $\hash{v}\sub{v,w}{m,\hash{m}}$ and $w\sub{v,w}{m,\hash{m}}$ are both equal to $\hash{m}$; but $\hash{v}\sub{v,w}{m,n}$ is distinct from $w\sub{v,w}{m,n}$.

For a less obvious example, consider the following extended processes.
\[
\mathopen{\nu m, k, n.}\left( \sub{x_1, x_2}{\aenc{m}{\pk{k}}, n} \cpar 0 \right)
\qquad
\mbox{v.s.}
\qquad
\mathopen{\nu m, k.}\left( \sub{x_1, x_2}{\aenc{m}{\pk{k}}, k} \cpar 0 \right)
\]
Perhaps surprisingly, the above extended processes are statically equivalent. 
This relies on the fact that the example message theory, in Fig.~\ref{figure:messages}, does not allow successful decryption to be detected.
This assumption about asymmetric encryption avoids common problems, including Bleichenbacher's vulnerability on SSL~\cite{Bleichenbacher1998}.
Thus, for example, recipe $\adec{x_1}{x_2}$ produces what looks like a random number for both processes.

If a protocol requires successful decryption to be detected, entropy should be introduced when a nonce is encrypted.
For example, consider the following extended processes, where the nonce $m$ is tagged with $t$ before being encrypted.
\[
\mathopen{\nu m, k, n.}\left( \sub{x_3,x_4}{\aenc{\pair{t}{m}}{\pk{k}}, n} \cpar 0 \right)
\qquad
\mbox{v.s.}
\qquad
\mathopen{\nu m, k.}\left( \sub{x_3,x_4}{\aenc{\pair{t}{m}}{\pk{k}}, k} \cpar 0 \right)
\]
In contrast to the previous example, the above are not statically equivalent.
The above processes can be distinguished by recipes $\fst{\adec{x_3}{x_4}}$ and $t$, which  only produce equal messages according to the extended process above right, in contrast to the extended process above on the left.

\subsection{Introducing the new definition of quasi-open bisimilarity.}
For an elegant definition of quasi-open bisimilarity, we employ the following reachability relation.
\begin{definition}[reachability]\label{definition:reachable}
Given extended processes $A$ and $B$, we say $A$ can reach $B$ by substitution $\sigma$ and
environment extension $\vec{z}.\rho$, written $\reachable{A}{\sigma,\nu\vec{z}.\rho}{B}$,
whenever:
$A$ is of the form $\mathopen{\nu \vec{x}.}\left(\theta \cpar P\right)$,
$\sigma$ and $\rho$ are idempotent and fresh for $\dom{\theta}$,
$\vec{z} \cap \left(\dom{\rho} \cup \dom{\theta}\right) = \emptyset$
and
 $B = \mathopen{\nu \vec{z}.}\left(\rho \cpar A\sigma\rho \right)$.
\end{definition}
The first substitution in the definition above allows free variables to be instantiated. For example\footnote{The environment extension can be omitted when it is the identity extension: $\nu \emptyset.id$.},
$\match{z = \hash{x}}\tau \leq_{\sub{z}{\hash{x}}} \match{\hash{x} = \hash{x}}\tau$ indicates variable $z$ in process $\match{z = \hash{x}}\tau$, can take on value $\hash{x}$, thereby reaching a process where the match guard is enabled.
In contrast, since, as standard we assume substitution is capture avoiding, when we apply the same substitution to $\mathopen{\nu x.}\left( {\sub{u}{\hash{x}}} \cpar \match{z = \hash{x}}\tau \right)$, the bound name $x$ is renamed to $x'$ to avoid a clash with variable $x$ in the range of the substitution:
$\mathopen{\nu x.}\left( {\sub{u}{\hash{x}}} \cpar \match{z = \hash{x}}\tau \right)
\leq_{\sub{z}{\hash{x}}}
\mathopen{\nu x'.}\left( {\sub{u}{\hash{x'}}} \cpar \match{\hash{x} = \hash{x'}}\tau \right)$.
In this example, no substitution can enable the match guard.

Environment extensions are used to distinguish pairs of messages in a mismatch, in scenarios where neither message is ground.
For example, the following process $\match{x \not= z}a(y).\match{x = y}\tau$ cannot yet act --- there is no evidence $x$ and $z$ are distinct.
However, by extending the environment with a fresh name, we have $\mathopen{\nu x.}\left({\sub{u}{x}} \cpar \match{x \not= z}a(y).\match{x = y}\tau\right) \lts{a\,u} \mathopen{\nu x.}\left({\sub{u}{x}} \cpar \match{x = x}\tau\right)$.
Notice alternative environment extensions also enable an input transition, such as the following: $\mathopen{\nu z.}\left({\sub{v}{z}} \cpar \match{x \not= z}a(y).\match{x = y}\tau\right) \lts{a\,x} \mathopen{\nu z.}\left({\sub{v}{z}} \cpar \match{x = x}\tau\right)$. 
Hence there is no unique most general substitution and set of names permanently distinguishing $x$ from $z$.

In order to define quasi-open bisimilarity, we require the notion of an \textit{open relation} between extended processes.
An open relation is preserved under reachability, defined above.
\begin{definition}[open]\label{def:close}
A relation over extended processes $\mathcal{R}$ is \textit{open}, whenever
if $A \mathrel{\mathcal{R}} B$ and $A \leq_{\sigma, \nu \vec{z}.\rho} A'$ and $B \leq_{\sigma, \nu \vec{z}.\rho} B'$,
then $A' \mathrel{\mathcal{R}} B'$.
\end{definition}

Given the definition of an open relation, static equivalence, and the labelled transition system, we can provide the following concise definition of quasi-open bisimilarity for the applied $\pi$-calculus.
\begin{definition}[quasi-open bisimilarity]\label{definition:quasi-open}
An \textbf{open} symmetric relation between extended processes $\mathrel{\mathcal{R}}$ is a quasi-open bisimulation whenever,
if $A \mathrel{\mathcal{R}} B$ then the following hold:
\begin{itemize}
\item $A$ and $B$ are statically equivalent.
\item If $A \lts{\pi} A'$ there exists $B'$ such that $B \lts{\pi} B'$ and $A' \mathrel{\mathcal{R}} B'$.
\end{itemize}
Quasi-open bisimilarity $\sim$ is the greatest quasi-open bisimulation.
\end{definition}

The keyword in the definition above is ``open'' in the sense of Def.~\ref{def:close}.
Without ensuring properties are preserved under reachability, the above definition would simply be the strong version of \textit{labelled bisimilarity} for the applied $\pi$-calculus~\cite{Abadi16}.
We illustrate the impact of insisting on an open relation and allowing messages as channels in the following examples.

We remark that 
the definition of quasi-open bisimilarity above is arguably simpler than in the original setting of the $\pi$-calculus~\cite{Sangiorgi2001}. In contrast to the original definition, since private names are recorded in extended processes, all types of action are handled by one clause and there is no need to index a bisimulation with extruded private names.

\paragraph{Mobility example.}
Recall processes $\nu z. \cout{x}{z,y}.z(w)$
and $\nu z. \cout{x}{z,y}$ from the introduction.
Recall these processes are \textbf{mistakenly} bisimilar according to the original definition of bisimilarity for the applied $\pi$-calculus~\cite{abadi01popl}.
These processes should not be equivalent, indeed they are valid polyadic $\pi$-calculus processes (the $\pi$-calculus with tuples)~\cite{Milner1993}, and the applied $\pi$-calculus should be conservative with respect the polyadic $\pi$-calculus.

Fortunately, these processes are correctly distinguished by quasi-open bisimilarity. To see why, firstly, consider the following two transitions with matching actions.
\[
\nu z.\cout{x}{z,y}.z(w) \lts{\co{x}(v)} \mathopen{\nu z.}\left({\sub{v}{\pair{z}{y}}} \cpar z(w)\right)
\qquad\mbox{and}\qquad
{\nu z.\cout{x}{z,y}} \lts{\co{x}(v)} \mathopen{\nu z.}\left({\sub{v}{\pair{z}{y}}} \cpar 0\right)
\]
The trick now is to use the \textsc{Alias} rule to enable the following labelled transition for the process on the left:
$\mathopen{\nu z.}\left({\sub{v}{\pair{z}{y}}} \cpar z(w)\right) 
\lts{\fst{v}\,x} \mathopen{\nu z.}\left({\sub{v}{\pair{z}{y}}} \cpar 0 \right)$.
The other process $\mathopen{\nu z.}\left({\sub{v}{\pair{z}{y}}} \cpar 0\right)$ is deadlocked, so cannot match this transition. Notice the use of message $\fst{v}$ as a channel.

\paragraph{Example showing impact of an open relation on static equivalence.}
By insisting that a quasi-open bisimulation is an open relation (Def.~\ref{def:close}), static equivalence must also be preserved by all fresh substitutions.
This has an impact on examples such as the following.

Consider for example the processes $\nu x.\cout{a}{\aenc{x}{z}}$ and $\nu x.\cout{a}{\aenc{\pair{x}{y}}{z}}$ that are labelled bisimilar but not quasi-open bisimilar.
To see why, firstly, observe both process can perform a $\co{a}(v)$-transition to the respective extended processes $\mathopen{\nu x.}\left({\sub{v}{\aenc{x}{z}}} \cpar 0 \right)$ and $\mathopen{\nu x.}\left({\sub{v}{\aenc{\pair{x}{y}}{z}}} \cpar 0 \right)$. Note that these extended process are \textbf{statically} equivalent.
However, since a quasi-open bisimulation must be preserved under fresh substitutions and $\sub{z}{\pk{w}}$ is fresh for $\left\{v\right\}$,
we should also check $\mathopen{\nu x.}\left({\sub{v}{\aenc{x}{z}}} \cpar 0\right)\sub{z}{\pk{w}}$ and $\mathopen{\nu x.}\left({\sub{v}{\aenc{\pair{x}{y}}{z}}} \cpar 0\right)\sub{z}{\pk{w}}$.
After applying the substitution, the extended processes are no longer statically equivalent, witnessed by distinguishing 
recipes $\snd{\adec{v}{w}}$ and $y$.
Thus the processes are not quasi-open bisimilar.


\paragraph{Example of privacy property.} 
We now have the mechanisms to verify the minimal privacy example from the introduction.
We prove the following by constructing a quasi-open bisimulation. 
\[\footnotesize
{\nu k. \nu r. \cout{a}{\pk{k}}.a(x).\cout{a}{r}}
\sim
\nu k. \nu r. \cout{a}{\pk{k}}.a(x).\texttt{if}\,x = \pk{k}\,\texttt{then}\,\cout{a}{\aenc{\pair{m}{r}}{\pk{k}}}\,\texttt{else}\,\cout{a}{r}
\]
Define quasi-open bisimulation $\mathcal{S}$ to be the least open symmetric relation such that: for all $M$ and $N$ fresh for $\left\{k,r\right\}$ and $u$ fresh for $\left\{ a, k, r \right\} \cup \fv{M} \cup \fv{N}$ and $v$ fresh for $\left\{ a, k, r, u \right\} \cup \fv{M} \cup \fv{N}$.
{
\[\scriptsize 
\begin{array}{rcl}
{\nu k. \nu r. \cout{a}{\pk{k}}.a(x).\cout{a}{r}}
&\mathrel{\mathcal{S}}&
\nu k. \nu r. \cout{a}{\pk{k}}.a(x).\texttt{if}\,x = \pk{k}\,\texttt{then}\,\cout{a}{\aenc{\pair{M}{r}}{\pk{k}}}\,\texttt{else}\,\cout{a}{r}
\\
\mathopen{\nu k. \nu r.} \left( \sub{u}{\pk{k}} \cpar a(x).\cout{a}{r} \right)
&\mathrel{\mathcal{S}}&
\mathopen{\nu k. \nu r.} \left( \sub{u}{\pk{k}} \cpar a(x).\texttt{if}\,x = \pk{k}\,\texttt{then}\,\cout{a}{\aenc{\pair{M}{r}}{\pk{k}}}\,\texttt{else}\,\cout{a}{r} \right)
\\
\mathopen{\nu k. \nu r.} \left( \sub{u}{\pk{k}} \cpar \cout{a}{r} \right)
&\mathrel{\mathcal{S}}&
\mathopen{\nu k. \nu r.} \left( \sub{u}{\pk{k}} \cpar \texttt{if}\,N\sub{u}{\pk{k}} = \pk{k}\,\texttt{then}\,\cout{a}{\aenc{\pair{M}{r}}{\pk{k}}}\,\texttt{else}\,\cout{a}{r} \right)
\\
\mathopen{\nu k. \nu r.} \left( \sub{u, v}{\pk{k}, r} \cpar 0 \right)
&\mathrel{\mathcal{S}}&
\mathopen{\nu k. \nu r.} \left( \sub{u, v}{\pk{k}, \aenc{\pair{M}{r}}{\pk{k}}} \cpar 0 \right)
\\
\mathopen{\nu k. \nu r.} \left( \sub{u, v}{\pk{k}, r} \cpar 0 \right)
&\mathrel{\mathcal{S}}&
\mathopen{\nu k. \nu r.} \left( \sub{u, v}{\pk{k}, r} \cpar 0 \right)
\end{array}
\]}
Critically, message $N$ ranges over all permitted inputs. For $N =_E u$, we have the following pair in relation $\mathcal{S}$. Observe the branch sending an encrypted message is enabled.
\[\footnotesize
\mathopen{\nu k. \nu r.} \left( \sub{u}{\pk{k}} \cpar \cout{a}{r} \right)
\mathrel{\mathcal{S}}
\mathopen{\nu k. \nu r.} \left( \sub{u}{\pk{k}} \cpar \texttt{if}\,\pk{k} = \pk{k}\,\texttt{then}\,\cout{a}{\aenc{\pair{M}{r}}{\pk{k}}}\,\texttt{else}\,\cout{a}{r} \right)
\]
If $N$ is any term not equivalent to $u$ then we have $k, r \vDash N\sub{u}{\pk{k}} \not= \pk{k}$ since if $N$ were a message term fresh for $\left\{k,r\right\}$ such that $N\sub{u}{\pk{k}} = \pk{k}$, then $N$ must be equivalent to $u$. Thus in all other cases the else branch is enabled.

Also, remark $\mathopen{\nu k. \nu r.} \left( \sub{u, v}{\pk{k}, r} \cpar 0 \right)$
and
$\mathopen{\nu k. \nu r.} \left( \sub{u, v}{\pk{k}, \aenc{\pair{M}{r}}{\pk{k}}} \cpar 0 \right)$ (reachable when $N =_E u$) 
are statically equivalent.
An attacker neither has the key $k$ to decrypt $\aenc{\pair{M}{r}}{\pk{k}}$,
nor can an attacker reconstruct the message $\pair{M}{r}$, without knowing $r$.

\subsection{Guaranteed fully compositional reasoning, including under input prefixes.}\label{section:compositionality}
We illustrate how quasi-open bisimilarity improves compositionality guarantees.
Consider the following two processes, which are sub-terms of Server A and Server C from the introduction.
\[
A' \triangleq {\cout{a}{r}}
\qquad
\mbox{v.s.}
\qquad
C' \triangleq 
\texttt{if}\,x = \pk{k}\,\texttt{then}\,\cout{a}{\aenc{m}{\pk{k}}}\,\texttt{else}\,\cout{a}{r}
\]
Now ask the question: should the above processes be equivalent or distinguished?
With respect to the original labelled bisimilarity proposed for the applied $\pi$-calculus~\cite{Abadi16}, the above processes are labelled bisimilar.
Reasoning classically, $x$ and $\pk{k}$ cannot be equal, since $x$ and $k$ are treated as ground terms rather than variables.
By this reasoning, the \texttt{else} branch in process $C'$ is enabled. 
The \texttt{else} branch in process $C'$ behaves as process $A'$, on the left above; hence, classically, the above processes are equivalent according labelled bisimilarity.

Na{\"i}vely, it may be tempting at this point to attempt to reason compositionally, with respect to 
common context $\context{\ \cdot\ } \triangleq \mathopen{\nu k. \nu r. \cout{a}{\pk{k}}.a(x).}\left\{\ \cdot\ \right\}$. 
However, for this example, $\context{ A' }$ (Server A) is not bisimilar to $\context{ C' }$ (Server C), according to labelled bisimilarity (nor quasi-open bisimilarity).
Hence such reasoning with respect to labelled bisimilarity and context $\context{\ \cdot\ }$ is \textbf{unsound}.

Such failures of compositionality for labelled bisimilarity with respect to contexts are addressed by quasi-open bisimilarity.
As we verify in the next section, whenever two processes are proven to be quasi-open bisimilar, they are quasi-open bisimilar in any context.
For the above example, this means that, because $\context{ A' }$ is not quasi-open bisimilar to $\context{ C' }$, we should also have that $A'$ is not quasi-open bisimilar to $C'$.
Indeed this property is satisfied by quasi-open bisimilarity.

Processes $A'$ and $C'$ are not quasi-open bisimilar.
Recall that a quasi-open bisimulation is preserved under fresh substitutions and $\sub{x}{\pk{k}}$ is fresh.
Thus we have the following.
\[
A'\sub{x}{\pk{k}} \lts{\co{a}(u)} {\sub{u}{r}} \cpar 0
\qquad
\mbox{and}
\qquad
C'\sub{x}{\pk{k}} \lts{\co{a}(u)} {\sub{u}{\cout{a}{\aenc{m}{\pk{k}}}}} \cpar 0
\]
The resulting processes are clearly not statically equivalent (consider $r = u$ under each substitutions above).
There is also a more subtle distinguishing strategy, we will return to in Section~\ref{section:modal}.

When processes are proven to be equivalent using quasi-open bisimilarity, compositional reasoning can be applied in confidence.
For example, standard rules expected in a \textit{structural congruence} hold according to quasi-open bisimilarity; hence can be safely applied anywhere in any process.
The following properties are useful in later in this work.
\begin{lemma}\label{lemma:algebra}
For all $P$, $Q$, $R$ and $S$, such that $x \not\in\fv{S}$, we have $0 \cpar P \sim P$ and, 
$\nu x.P \cpar S \sim \mathopen{\nu x.}\left(P \cpar S\right)$ only if , $\nu x.\nu y.P \sim \nu y.\nu x.P$ and $\nu x.0 \sim 0$,
$\left(P \cpar Q\right) \cpar R \sim P \cpar \left(Q \cpar R\right)$, and $P \cpar Q \sim Q \cpar P$.
\end{lemma}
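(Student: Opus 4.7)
The plan is to prove each of the six equalities by exhibiting a quasi-open bisimulation containing the pair in question. For each equation, I would take the candidate relation $\mathcal{R}$ to consist of all pairs $(A, B)$ where $A$ and $B$ are reachable (in the sense of Definition~\ref{definition:reachable}) from two extended processes that coincide under the corresponding structural identity on normal forms; taking the symmetric closure and union with the identity then yields an open symmetric relation by construction. The static equivalence clause is then immediate, because in every pair the active substitution part (once pushed through the defining operators on normal forms) is literally the same on both sides, so any pair of recipes $(M, N)$ yields equal messages on the left exactly when it does on the right.

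The core of the argument is the transition matching clause. Here I would exploit the fact that the compositional operators $\nu z.\,A \cpar B$, $B \cpar \nu z.\,A$, $(\sigma \cpar P) \cpar (\theta \cpar Q)$, $\rho \cpar \nu z. A$, and $\sigma \cpar \theta \cpar Q$ from the preamble of Section~\ref{section:api-open} are \emph{defined} to satisfy precisely the scope extrusion, associativity, and commutativity identities appearing in the lemma, under side conditions on freshness of bound variables. Hence for the equations $0 \cpar P \sim P$, $\nu x.0 \sim 0$, $\nu x.\nu y.P \sim \nu y.\nu x.P$, $(P \cpar Q) \cpar R \sim P \cpar (Q \cpar R)$, and $P \cpar Q \sim Q \cpar P$, the two sides of each pair become identical extended processes in normal form (up to $\alpha$-conversion and associativity/commutativity of the purely syntactic substitution-composition operation $\sigma \cdot \theta$). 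A transition derived by the rules in Fig.~\ref{figure:active} on one side can therefore be replayed on the other, using the same derivation tree with at most renaming of bound variables, leading to a resultant pair that lies in $\mathcal{R}$ by the same construction.

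For the scope extrusion equation $\nu x.P \cpar S \sim \mathopen{\nu x.}(P \cpar S)$ with $x \not\in \fv{S}$, the argument is essentially the same once normal-form composition is unfolded: the defining clause $\mathopen{\nu z.}A \cpar B \triangleq \mathopen{\nu z.}(A \cpar B)$ applies directly, provided $x$ is kept distinct from $\fv{S}$ throughout the bisimulation game. This maintenance of freshness across reachability steps is the main technical obstacle: when closing under an arbitrary fresh substitution $\sigma$ or an environment extension $\nu \vec{z}.\rho$, the bound name $x$ may need to be $\alpha$-renamed to a variable still fresh for the range of $\sigma$, for $\vec{z}$, and for $\fv{S\sigma)}$. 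I would handle this uniformly with a lemma stating that reachability respects $\alpha$-equivalence of normal forms, and that if $(A, B) \in \mathcal{R}$ with matching side conditions, and $A \leq_{\sigma, \nu\vec{z}.\rho} A'$, $B \leq_{\sigma,\nu\vec{z}.\rho} B'$, then $A'$ and $B'$ can be chosen (by $\alpha$-conversion) so that the freshness side condition still holds, putting $(A', B')$ back into $\mathcal{R}$.

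Once these reachability-and-freshness bookkeeping observations are in place, the bisimulation game is essentially trivial: each transition on one side is matched by the corresponding transition on the other, produced by the same derivation with the structural operators swapped according to the equation being proved. The resulting pair is again in $\mathcal{R}$ by inspection. The main obstacle is thus purely administrative, concerning bound-name discipline under the reachability relation, rather than involving any deep reasoning about the transition semantics or the message theory.
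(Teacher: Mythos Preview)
Your proposal is correct and takes essentially the same approach as the paper: the paper's own proof is the single line ``Take the least open relation containing the bisimulation sets typically employed,'' which is precisely what you spell out --- form the standard structural-congruence relations, close them under reachability (Def.~\ref{def:close}) to make them open, and check that static equivalence and transition matching go through because the two sides have identical frames and transitions differ only by the symmetric rule variants (e.g.\ \textsc{Par-l} versus \textsc{Par-r}). Your additional remarks about $\alpha$-renaming bound names to maintain freshness across reachability steps are exactly the bookkeeping the paper suppresses.
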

\begin{proof}
Take the least open relation containing the bisimulation sets typically employed.
\end{proof}

Compositional reasoning can also be useful for reusability. Some properties may be verified on a sub-protocol, and, by compositionality, we can deduce they hold on a larger protocol.


\subsection{Quasi-open bisimilarity and open barbed bisimilarity coincide.}

As illustrated in the previous sub-section, a core guarantee offered by quasi-open bisimilarity is that it is a congruence relation.
In this section, we prove quasi-open bisimilarity is preserved by all contexts, notably under input prefixes; and, furthermore, coincides exactly with open barbed bisimilarity, which is the coarsest (strong) bisimilarity congruence.

We deliberately provide all important step of proofs in this section, to avoid uncertainty about this non-trivial result.
The novel cases for the following theorem are those showing quasi-open bisimilarity is preserved under mismatch, par and replication.
\begin{theorem}[contexts]\label{theorem:congruence}
If $P \sim Q$ then for all contexts $\context{\ \cdot\ }$, we have $\context{P} \sim \context{Q}$.
\end{theorem}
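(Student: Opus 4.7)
\noindent\textit{Proof plan.}
The plan is to proceed by induction on the structure of $\context{\ \cdot\ }$, reducing the problem to showing closure under each one-hole context built from a single process constructor: output prefix, input prefix, match, mismatch, restriction, parallel composition, choice, and replication. For each such constructor $C$, I would exhibit a candidate relation $\mathcal{R}_C$ containing the pair $(C[P], C[Q])$, made open (in the sense of Definition~\ref{def:close}) and symmetric by construction, and verify that $\mathcal{R}_C$ is a quasi-open bisimulation by checking static equivalence and matching every transition on one side by a transition on the other; the outer induction on context depth then closes the result.

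First I would dispatch the straightforward cases. For output $\cout{M}{N}.{\ \cdot\ }$, the only firing rule introduces a fresh alias $\sub{x}{N}$ in parallel with the hole, reducing the obligation to the assumption $P \sim Q$. For match, a guard is discharged precisely when $M =_E N$, symmetrically in $P$ and $Q$. For restriction, the freshly bound name is pushed into the environment via the $\nu\vec{z}.\rho$ component of reachability, and openness of $\sim$ does the rest. For choice, a transition either comes from the hole, handled by $\sim$, or from the other summand, handled by reflexivity (the identity is trivially a quasi-open bisimulation).

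The essential cases are input, mismatch, parallel, and replication. For input $\cin{M}{x}.{\ \cdot\ }$, the action $M\,N$ sends $P$ to $P\sub{x}{N}$, and we need $P\sub{x}{N} \sim Q\sub{x}{N}$; this follows because $\sim$ is an open relation and $\sub{x}{N}$ is fresh for the empty environment, so instantiating $x$ by $N$ preserves the bisimulation. For mismatch, the entailment of Definition~\ref{def:respects} is quantified over fresh substitutions, which aligns with the openness requirement, so closure of $\mathcal{R}_C$ under reachability guarantees that a mismatch fires on one side exactly when it fires on the other. For parallel $\{\ \cdot\ \} \cpar R$, the candidate is the open symmetric closure of $\{(A \cpar R, B \cpar R) : A \sim B\} \cup {\sim}$; independent transitions are matched via \textsc{Par-l} or \textsc{Par-r}, while synchronisations via \textsc{Close-l} pair an output on one side with a matching input on the other, invoking Lemma~\ref{lemma:algebra} to hoist new-bound variables to the outside of the parallel composition.

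The hard part will be handling the communication rules \textsc{Close-l} and \textsc{Rep-close} in combination with the openness requirement. A synchronisation introduces an alias $\sub{x}{N}$ whose bound variable and any extruded names $\vec{z}$ must be $\alpha$-renamed to avoid clashing with free variables of the partner process uniformly on both sides simultaneously, which forces a careful choice of representative in each equivalence class. For replication, \textsc{Rep-act} and \textsc{Rep-close} leave an extra copy of $\bang P$ in the residual, so the candidate must be closed under parallel composition with arbitrary finite products of $\sim$-related pairs; I would address this by first establishing the parallel case and then using it iteratively to absorb the spawned copies, noting that an openness-triggering substitution can fire several unfoldings at once, so the candidate must in fact be defined as the open closure of all such finite products.
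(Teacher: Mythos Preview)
Your overall decomposition (structural induction on contexts, one constructor at a time, a tailored open candidate relation per case) matches the paper's proof exactly, and your treatment of output, input, restriction, match, mismatch and choice is essentially what the paper does. The gap is in the parallel case, and it propagates to replication. Your candidate, the open closure of $\{(A \cpar R,\, B \cpar R) : A \sim B\}\cup{\sim}$, is the relation one uses for the plain $\pi$-calculus, but it is \emph{not} closed under $\tau$-transitions in the applied $\pi$-calculus. The problem is \textsc{Close-l}: when the hole outputs, $A$ and $B$ may emit \emph{different} messages $K$ and $L$ behind the same alias $u$ (the bisimulation only gives you $\nu\vec v.(\sub{u}{K}\cpar A')\sim\nu\vec w.(\sub{u}{L}\cpar B')$, not $K=L$). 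The context $R$ then receives $K$ on one side and $L$ on the other, so its residuals $R'\sub{u}{K}$ and $R'\sub{u}{L}$ differ, and the resulting pair is not of the form $(A'\cpar S,\,B'\cpar S)$ for any common $S$. Taking the open closure does not help, since reachability applies the \emph{same} substitution on both sides.

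The paper's Lemma~\ref{lemma:par} fixes this by letting the two active substitutions act on the context separately: the candidate pairs $\nu\vec x.(\sigma\cpar A\cpar R\sigma)$ with $\nu\vec y.(\theta\cpar B\cpar R\theta)$ whenever $\nu\vec x.(\sigma\cpar A)\mathrel{\mathcal R}\nu\vec y.(\theta\cpar B)$. After a \textsc{Close-l} step, one instantiates the new context $S$ (with free alias $u$) using the post-output pair in $\mathcal R$, whose active substitutions are $\sigma\cdot\sub{u}{K}$ and $\theta\cdot\sub{u}{L}$, yielding exactly $S\sigma\sub{u}{K}$ and $S\theta\sub{u}{L}$ on the two sides. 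Static equivalence of the $\mathcal R$-pair is what guarantees the channel test needed to fire the matching communication on the $B$-side. The replication case (Lemma~\ref{lemma:rep}) requires the same idea, layered: the candidate $\mathcal S_n$ composes $n$ many $\mathcal R$-related factors with $\bang P$ versus $\bang Q$, again applying the respective substitutions to each factor. Your plan of ``reusing the parallel case iteratively'' would work once the parallel candidate is corrected, but as stated it inherits the same defect. A minor remark: you invoke Lemma~\ref{lemma:algebra} to hoist binders during the parallel case; the paper avoids this and works directly on normal forms via \textsc{Alias} and \textsc{Res}, which keeps the argument self-contained.
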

\begin{proof}
The proof can be broken into several lemmas, showing quasi-open bisimilarity is preserved under each process construct.
The most involved case, closure under parallel composition, is provided in Lemma~\ref{lemma:par}. Closure under replication is also quite involved hence provided in Lemma~\ref{lemma:rep}.
More immediate cases are presented below. In each of the following assume $P \sim Q$, and hence there exists quasi-open bisimulation such that $P \mathrel{\mathcal{R}} Q$.

Closure under input prefix is almost immediate.
Let $\mathcal{S}$ be the least \textit{open} relation (Def.~\ref{def:close}) extending $\mathcal{R}$ such that $M(x).P \mathrel{\mathcal{S}} M(x).Q$.
Now $M(x).P \lts{M\,N} P\sub{x}{N}$ and $M(x).Q \lts{M\,N} Q\sub{x}{N}$.
Furthermore, $P\sub{x}{N} \mathrel{\mathcal{S}} Q\sub{x}{N}$, since, by definition, $\mathcal{R}$ is an open relation.\footnote{Note closure under inputs does not hold for labelled bisimilarity, since in contrast to quasi-open bisimilarity, a labelled bisimulation is not necessarily preserved under substitutions. This is a key advantage of quasi-open bisimilarity.}

Closure under restriction is immediate, since, $\mathcal{R}$ is open, hence $\nu x.P \mathrel{\mathcal{R}} \nu x.Q$.

Closure under output, is also immediate.
Since $\mathcal{R}$ is an open relation, $\sub{u}{N} \cpar P \mathrel{\mathcal{R}} {\sub{u}{N}} \cpar Q$, for fresh $u$.
Also $\cout{M}{N}.P \lts{\co{M}(u)} {\sub{u}{N}} \cpar P$ and $\cout{M}{N}.Q \lts{\co{M}(u)} {\sub{u}{N}} \cpar Q$.
Hence relation $\mathcal{O}$, defined as the least open relation extending $\mathcal{R}$ such that $\cout{M}{N}.P \mathrel{\mathcal{O}} \cout{M}{N}.Q$ is a quasi-open bisimulation.

Closure under equality prefixes follows since $P \lts{\pi} A$ iff $\left(\match{M=N}P\right)\mathclose{\sigma} \lts{\pi\sigma} A\sigma$ for all $\sigma$ fresh for $\bn{\pi}$ such that $M\sigma = N\sigma$.
Since $P \mathrel{\mathcal{R}} Q$, there exists $B$ such that $Q \lts{\pi} B$ and $A \mathrel{\mathcal{R}} B$. Hence $\left(\match{M=N}Q\right)\mathclose{\sigma} \lts{\pi\sigma} B\sigma$ and, $\mathcal{R}$ is \textit{open} (Def.~\ref{def:close}), $\sigma$ is fresh for $\bn{\pi}$, and $\bn{\pi}$ is the domain of active substitutions in $A$ and $B$, we have $A\sigma \mathrel{\mathcal{R}} Q'\sigma$. Hence the least \textit{open} relation $\mathcal{M}$ extending $\mathcal{R}$ such that $\match{M=N}P \mathrel{\mathcal{M}} \match{M=N}Q$ is a quasi-open bisimulation.

Closure under mismatch is less obvious.
Now, assume $P \lts{\pi} A$, in which case, there exists $B$ such that $Q \lts{\pi} B$ and $A \mathrel{\mathcal{R}} B$.
Also, observe, by the \textsc{Mismatch} rule, 
$P \lts{\pi} A$ 
iff
 $\env \colon \left(\match{M\not=N}P\right)\mathclose{\sigma} \lts{\pi\sigma} A\sigma$ and
for all $\sigma$ fresh for $\bn{\pi}$ for all $\env$ such that $\env \vDash M\sigma \not= N\sigma$,
which holds iff 
$\mathopen{\nu \env.}\left( \sigma \cpar \left(\match{M\not=N}P\right)\mathclose{\sigma} \right) \lts{\pi} \mathopen{\nu \env.}\left( \sigma \cpar A\sigma \right)$.
Similarly, we have 
$Q \lts{\pi} B$ iff we have $\env \colon \left(\match{M\not=N}Q\right)\mathclose{\sigma} \lts{\pi\sigma} \mathopen{\nu \env.}\left( \sigma \cpar  B\sigma\right)$, for all $\sigma$ and $\env$ such that $\env \vDash M\sigma \not= N\sigma$.
Now consider the least \textit{open} relation $\mathcal{D}$ extending $\mathcal{R}$ such that $\match{M\not=N}P \mathrel{\mathcal{D}} \match{M\not=N}Q$.
Now, since $\mathcal{D}$ is \textit{open}, 
$\mathopen{\nu \env.}\left( \sigma \cpar \left(\match{M\not=N}P\right)\mathclose{\sigma} \right) \mathrel{\mathcal{D}} \mathopen{\nu \env.}\left( \sigma \cpar \left(\match{M\not=N}Q\right)\mathclose{\sigma} \right)$.\footnote{This is the raison d'{\^e}tre for extending environments in the definition reachability (Def.~\ref{definition:reachable}): to range over all extensions of environments and active substitutions enabling a mismatch.}
Also, $\mathcal{R}$ is \textit{open}, and $\sigma$ is fresh for $\bn{\pi}$ hence the domain of both $A$ and $B$, we have $\mathopen{\nu \env.}\left( \sigma \cpar A\mathclose{\sigma} \right) \mathrel{\mathcal{D}} \mathopen{\nu \env.}\left( \sigma \cpar B\mathclose{\sigma} \right)$.
Hence $\mathcal{D}$ is a quasi-open bisimulation, as required.

Closure under choice is standard. Take the least open relation $\mathcal{C}$ extending both $\mathcal{R}$ and the identity relation such that $P + R \mathrel{\mathcal{C}} Q + R$ and consider when $P + R \lts{\pi} A$.
Now, if $P \lts{\pi} A$, there exists $B$ such that $Q \lts{\pi} B$ and $A \mathrel{\mathcal{R}} B$; hence $Q + R \lts{\pi} Q'$ and $A \mathrel{\mathcal{C}} B$.
Otherwise $R \lts{\pi} A$; hence $Q + R \lts{\pi} A$ and $A \mathrel{\mathcal{C}} A$, as required.
\end{proof}

\begin{lemma}\label{lemma:par}
If $P \sim Q$, then $P \cpar R \sim Q \cpar R$.
\end{lemma}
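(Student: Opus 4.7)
The plan is to construct a quasi-open bisimulation $\mathcal{S}$ witnessing $P \cpar R \sim Q \cpar R$. Fix a quasi-open bisimulation $\mathcal{R}$ with $P \mathrel{\mathcal{R}} Q$, and take $\mathcal{S}$ to be the least open symmetric relation (Def.~\ref{def:close}) containing every pair $(A \cpar T,\, B \cpar T)$ where $A \mathrel{\mathcal{R}} B$ and $T$ is an extended process whose bound names and active-substitution domain are, after $\alpha$-renaming, disjoint from those of $A$ and $B$. By construction $P \cpar R \mathrel{\mathcal{S}} Q \cpar R$, so it remains to verify the two clauses of Def.~\ref{definition:quasi-open}.

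Static equivalence of $A \cpar T$ and $B \cpar T$ reduces to static equivalence of $A$ and $B$: any distinguishing recipe pair $(M,N)$ becomes a distinguishing pair $(M\sigma_T, N\sigma_T)$ on $A$ against $B$ alone, where $\sigma_T$ is the active substitution of $T$. For transitions, I case-split on the rule deriving $A \cpar T \lts{\pi} C$. The \textsc{Par-l} case is resolved by playing the bisimulation game on $\mathcal{R}$ and then recomposing with $T$; the \textsc{Par-r} case is immediate since $T$'s transition is replayed verbatim on the $B$-side.

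The principal difficulty lies in the \textsc{Close-l} case (with \textsc{Close-r} symmetric). Suppose the derivation uses $A \lts{\co{M}(x)} A'$, where the active substitution of $A'$ extends that of $A$ by $\sub{x}{N_A}$, together with $T \lts{M\,N_A} T_A$, yielding the $\tau$-residue $\nu\vec{z}.(A' \cpar T_A)$. On the $B$-side, the bisimulation game on $\mathcal{R}$ supplies $B \lts{\co{M}(x)} B'$ with $A' \mathrel{\mathcal{R}} B'$, where $B'$ extends $B$'s active substitution by $\sub{x}{N_B}$ for some possibly distinct $N_B$; the matching $\tau$-transition of $B \cpar T$ is obtained by closing with $T \lts{M\,N_B} T_B$. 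The key observation is that both $T_A$ and $T_B$ factor as $T_0\sub{x}{N_A}$ and $T_0\sub{x}{N_B}$, where $T_0$ is obtained from $T$ by replacing its bound input variable by the fresh alias $x$ common to both output labels. Under the normal-form convention for parallel composition, placing $T_0$ alongside $A'$ causes the active substitution $\sub{x}{N_A}$ in $A'$ to be absorbed into $T_0$, recovering $T_A$; symmetrically on the $B$-side with $\sub{x}{N_B}$. Consequently the two resulting extended processes coincide in normal form with $\nu\vec{z}.(A' \cpar T_0)$ and $\nu\vec{z}.(B' \cpar T_0)$, which lie in $\mathcal{S}$ by construction since $A' \mathrel{\mathcal{R}} B'$ and $T_0$ is common. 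Bookkeeping the freshness of $x$ and $\vec{z}$ with respect to $T$ is routine by $\alpha$-conversion, and openness of $\mathcal{S}$ inherits from openness of $\mathcal{R}$ together with the fact that the reachability transformations of Def.~\ref{definition:reachable} distribute over parallel composition.
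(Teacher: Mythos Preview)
Your candidate relation is too small, and the gap shows up precisely in the \textsc{Close} case you flag as ``principal''. You pair $A \cpar T$ with $B \cpar T$ for a \emph{common} extended process $T$, but after a close interaction the two $T$-components genuinely diverge: on the left $T$ has received the concrete message $N_A$ output by $A$, on the right the (in general distinct) message $N_B$ output by $B$. Your repair via a common template $T_0$ relies on the active substitution $\sub{x}{N_A}$ inside $A'$ being ``absorbed'' into $T_0$, but this fails for two reasons. First, the \textsc{Close-l} rule in Fig.~\ref{figure:active} \emph{drops} the binding $\sub{x}{N_A}$: the $\tau$-residue of $A \cpar T$ carries only the original active substitution $\sigma$, not $\sigma\cdot\sub{x}{N_A}$, so there is nothing in the residue to absorb. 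Second, even were the binding present, $A' \cpar T_0$ would violate the normal-form restriction on extended processes, since $x$ lies simultaneously in the domain of $A'$'s active substitution and in the free variables of $T_0$'s process part; the paper's parallel composition on extended processes does not silently re-normalise.

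The paper handles exactly this by formulating $\mathcal{S}$ differently: the parallel component is a \emph{plain process} $R$, not an extended process, and it appears as $R\sigma$ on the left and $R\theta$ on the right, where $\sigma,\theta$ are the active substitutions of the two $\mathcal{R}$-related sides. After a close, the new template is the input continuation $S$ of $R$ with the fresh alias $u$ as placeholder; since the $\mathcal{R}$-related output residues carry the extended substitutions $\sigma\cdot\sub{u}{K}$ and $\theta\cdot\sub{u}{L}$, the definition of $\mathcal{S}$ instantiates $S$ to $S\sigma\sub{u}{K}$ and $S\theta\sub{u}{L}$ --- precisely the two $\tau$-residues. Your $T_0$ is morally this $S$, but your relation lacks the mechanism (instantiation under each side's current active substitution) needed to accommodate its two different images.
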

\begin{proof}
Assume $P \sim Q$. Hence there exists quasi-open bisimulation $\mathcal{R}$ such that $P \mathrel{\mathcal{R}} Q$.
Now construct $\mathcal{S}$ to be the least \textbf{open} relation  (Def.~\ref{def:close}) such that
if
$\mathopen{\nu\vec{x}.}\left( \sigma \cpar  A\right) \mathrel{\mathcal{R}} \mathopen{\nu\vec{y}.}\left( \theta \cpar  B\right)$ and process $R$ is such that $\fv{R} \cap \left(\vec{x}\cup\vec{y}\right) = \emptyset$
then $\mathopen{\nu\vec{x}.}\left( A \cpar R\sigma \right) \mathrel{\mathcal{S}} \mathopen{\nu\vec{y}.}\left( B \cpar R\theta \right)$.
We aim to show $\mathcal{S}$ is a quasi-open bisimulation. Assume 
$\fv{R} \cap \left(\vec{x} \cup \vec{y}\right) = \emptyset$ and $\mathopen{\nu\vec{x}.}\left( \sigma \cpar  A\right) \mathrel{\mathcal{R}} \mathopen{\nu\vec{y}.}\left( \theta \cpar  B\right)$ in the following.
\begin{itemize}

\item Consider when $R \lts{\co{M}(u)} \mathopen{\nu \vec{z}.}\left( {\sub{u}{N}} \cpar S \right)$.
Since $\fv{R} \cap \left(\vec{x} \cup \vec{y} \right) = \emptyset$
it also holds that $\left(\fv{M} \cup \fv{S}\right) \cap \left(\vec{x} \cup \vec{y} \right) = \emptyset$; and $\vec{x} \colon R\sigma \lts{\co{M\sigma}(u)} \mathopen{\nu \vec{z}.}\left( {\sub{u}{N\sigma}} \cpar S\sigma \right)$, assuming without loss of generality that $\sigma$ is fresh for $u$ and $\vec{x}$. 
Now, assuming $\vec{z} \cap \left(\fv{P} \cup \fv{Q} \cup \vec{x} \cup \vec{y}\right) = \emptyset$, by the \textsc{Alias}, \textsc{Res} and \textsc{Par-r} rules,
we have transition $\vec{x} \colon \sigma \cpar P \cpar R\sigma \lts{\co{M}(u)} \mathopen{\nu \vec{z}.}\left( {\sub{u}{N}} \cpar \sigma \cpar P \cpar S \right)$.
Thereby, since $\fv{R} \cap \vec{x} = \emptyset$,
we have
$\mathopen{\nu \vec{x}.}\left( \sigma \cpar A \cpar R\sigma \right) \lts{\co{M}(u)} \mathopen{\nu \vec{x},\vec{z}.}\left( {\sub{u}{N}} \cpar \sigma \cpar A \cpar S \right)$.
Now, since $\mathopen{\nu\vec{x}.}\left( \sigma \cpar  A\right) \mathrel{\mathcal{R}} \mathopen{\nu\vec{y}.}\left( \theta \cpar  B\right)$,
and $\fv{S} \cap \left(\vec{x} \cup \vec{y} \right) = \emptyset$,
by definition of $\mathcal{S}$, we have $\mathopen{\nu\vec{x}.}\left( \sigma \cpar A \cpar S\sigma \right) \mathrel{\mathcal{S}} \mathopen{\nu\vec{y}.}\left( \theta \cpar  B \cpar S\theta \right)$;
and, since $\mathcal{S}$ is open (Def.~\ref{def:close}), we have the following.
\[
\mathopen{\nu\vec{x},\vec{z}.}\left( {\sub{x}{N\sigma}} \cpar \sigma \cpar A \cpar S\sigma \right) \mathrel{\mathcal{S}} \mathopen{\nu\vec{y},\vec{z}.}\left( {\sub{u}{N\theta}} \cpar \theta \cpar  B \cpar S\theta \right)
\]
Furthermore, by monotonicity, $\vec{x} \colon R\theta \lts{\co{M\theta}(u)} \mathopen{\nu \vec{z}.}\left( {\sub{u}{N\theta}} \cpar S\theta \right)$
and hence,
 by \textsc{Par-r}, \textsc{Alias}, and \textsc{Res} rules, $\mathopen{\nu\vec{y}.}\left( \theta \cpar  B \cpar R \right) \lts{M(u)} \mathopen{\nu\vec{y},\vec{z}.}\left( \sub{u}{N\theta} \cpar \theta \cpar  B \cpar S\theta \right)$, as required.

\item
Consider the case where $\vec{x} \colon A \lts{M\sigma\,N\sigma} A'$ and $R \lts{\co{M}(u)} \mathopen{\nu \vec{z}.}\left({\sub{u}{N}} \cpar S\right)$ and without loss of generality  $\vec{z} \cap \left( \fv{A} \cap \fv{B} \right) = \emptyset$.
We have
$\left(\fv{M} \cup \fv{N} \cup \fv{S}\right) \cap \left(\vec{x} \cup \vec{y}\right) = \emptyset$, since $\fv{R} \cap \left(\vec{x} \cup \vec{y}\right) = \emptyset$.
By monotonicity, we have $\vec{x} \colon R\sigma \lts{\co{M\sigma}(u)} \mathopen{\nu \vec{z}.}\left({\sub{u}{N\sigma}} \cpar S\sigma\right)$.
By rule \textsc{Close-r}, $\vec{x} \colon A \cpar R\sigma \lts{\tau} \mathopen{\nu \vec{z}.}\left(  A' \cpar S\sigma \right)$, since $\vec{z} \cap \left( \fv{A} \cap \fv{B} \right) = \emptyset$.
Thereby, by \textsc{Alias} and \textsc{Res}, we have $\mathopen{\nu\vec{x}.}\left( \sigma \cpar A \cpar R\sigma \right) \lts{\tau} \mathopen{\nu \vec{x}, \vec{z}.}\left( \sigma \cpar A' \cpar S\sigma \right)$.
Now, since we assumed $\mathopen{\nu\vec{x}.}\left( \sigma \cpar A \right) \mathrel{\mathcal{R}} \mathopen{\nu\vec{y}.}\left( \theta \cpar B \right)$,
and $\left(\fv{M} \cup \fv{N}\right) \cap \vec{y} = \emptyset$
there exists $B'$ such that
$\mathopen{\nu\vec{y}.}\left( \theta \cpar B \right) \lts{M\,N} \mathopen{\nu\vec{y}.}\left( \theta \cpar B' \right)$ and $\mathopen{\nu\vec{x}.}\left( \sigma \cpar A'\right) \mathrel{\mathcal{R}} \mathopen{\nu\vec{y}.}\left( \theta \cpar B'\right)$. 
Hence it must be the case that $\vec{y} \colon B \lts{M\theta\,N\theta} B'$.
By monotonicity we have $\vec{y} \colon R\theta \lts{\co{M\theta}(u)} \mathopen{\nu \vec{z}.}\left({\sub{u}{N\theta}} \cpar S\right)$;
hence, by rule $\textsc{Close-l}$ we have $\vec{y} \colon B \cpar R\theta \lts{\tau} \mathopen{\nu \vec{z}.}\left(  B' \cpar S\theta \right)$, so
by rules \textsc{Alias} and \textsc{Res}, we have $\mathopen{\nu \vec{y}.}\left( \theta \cpar B \cpar R\theta\right) \lts{\tau} \mathopen{\nu \vec{y},\vec{z}.}\left( \theta \cpar B' \cpar S\theta \right)$.
Since $\fv{S} \cap \left(\vec{x} \cup \vec{y}\right) = \emptyset$, by definition of $\mathcal{S}$ we have 
$\mathopen{\nu \vec{x},\vec{z}.}\left( \sigma \cpar A' \cpar S\sigma \right) \mathrel{\mathcal{S}} \mathopen{\nu \vec{y},\vec{z}.}\left( \theta \cpar B' \cpar S\theta \right)$, as required.

\item
Consider the case where $\vec{x} \colon A \lts{\co{M\sigma}(u)} \mathopen{\nu \vec{v}.}\left({\sub{u}{K}} \cpar A'\right)$ and $R \lts{M\,u} S$, for fresh $u$.
Since $\left(\vec{x} \cup \vec{y}\right) \cap \fv{R} = \emptyset$, we have $\left(\vec{x} \cup \vec{y}\right) \cap \left(\fv{M} \cup \fv{S}\right) = \emptyset$.
By monotonicity and freshness of $u$, we have $\vec{x} \colon R\sigma \lts{M\sigma\,K} S\sigma\sub{u}{K}$; hence $\vec{x} \colon A \cpar R\sigma \lts{\tau} \mathopen{\nu \vec{v}.}\left( A' \cpar S\sigma\sub{u}{K} \right)$, by rule \textsc{Close-l};
and, furthermore,  
$\mathopen{\nu \vec{x}.}\left( \sigma \cpar A \cpar R\sigma \right) \lts{\tau} \mathopen{\nu \vec{x},\vec{v}.}\left( \sigma \cpar A' \cpar S\sigma\sub{u}{K} \right)$,
by rules \textsc{Res} and \textsc{Alias}.
Now, since $\vec{x} \cap \fv{M} = \emptyset$,
we have $\mathopen{\nu\vec{x}.}\left( \sigma \cpar A \right) \lts{\co{M}(u)} \mathopen{\nu \vec{x}, \vec{v}.}\left(\sigma \cpar  {\sub{u}{K}} \cpar A'\right)$, by rules \textsc{Res} and \textsc{Alias}.
Thereby, $\mathopen{\nu\vec{x}.}\left( \sigma \cpar  A\right) \mathrel{\mathcal{R}} \mathopen{\nu\vec{y}.}\left( \theta \cpar  B\right)$
and $\mathcal{R}$ is a quasi-open bisimulation,
there exists $B'$, $\vec{w}$ and $L$ such that
$\mathopen{\nu\vec{y}.}\left( \theta \cpar B\right) \lts{\co{M}(u)} \mathopen{\nu \vec{y}, \vec{w}.}\left(\sigma \cpar {\sub{u}{L}} \cpar B'\right)$
and
$\mathopen{\nu \vec{x}, \vec{v}.}\left(\sigma \cpar  {\sub{u}{K}} \cpar A'\right) \mathrel{\mathcal{R}} \mathopen{\nu \vec{y}, \vec{w}.}\left(\sigma \cpar {\sub{u}{L}} \cpar B'\right)$.
Now make two observations. Firstly, by unfolding rules we have 
$\vec{y} \colon B \lts{\co{M\theta}(u)} \mathopen{\nu \vec{w}.}\left({\sub{u}{L}} \cpar B'\right)$;
and, by monotonicity and freshness of $u$, we have $\vec{y} \colon R\theta \lts{M\theta\,L} S\theta\sub{u}{L}$.
Hence, by rule \textsc{Close-l}, we have that $\vec{y} \colon B \cpar R\theta \lts{\tau} \mathopen{\nu \vec{w}.}\left( B' \cpar S\theta\sub{u}{L} \right)$;
and so
$\mathopen{\nu \vec{y}.}\left( \theta \cpar B \cpar R\theta \right) \lts{\tau} \mathopen{\nu \vec{y}, \vec{w}.}\left( \theta \cpar B' \cpar S\theta\sub{u}{L} \right)$,
by rules \textsc{Res} and \textsc{Alias}.
Secondly, since $\fv{S} \cap \left(\vec{x} \cup \vec{y} \right) = \emptyset$, by definition of $\mathcal{S}$, 
we have 
$\mathopen{\nu \vec{x}, \vec{v}.}\left( \sigma \cpar A' \cpar S\sigma\sub{u}{K} \right) \mathrel{\mathcal{S}} \mathopen{\nu \vec{y},\vec{w}.}\left(\theta \cpar B' \cpar S\theta\sub{u}{L} \right)$,
as required.

\item Remaining cases, where $A$ or $R$ act independently, are similar to the first case above.
\end{itemize}
Thereby $\mathcal{S}$ is a quasi-open bisimulation such that $P \cpar R \mathrel{\mathcal{S}} Q \cpar R$; hence $P \cpar R \sim Q \cpar R$.
\end{proof}

\begin{lemma}\label{lemma:rep}
If $P \sim Q$, then $\bang P \sim \bang Q$.
\end{lemma}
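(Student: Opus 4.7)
The plan is to exhibit an explicit quasi-open bisimulation containing $(\bang P, \bang Q)$. Fix a quasi-open bisimulation $\mathcal{R}$ with $P \mathrel{\mathcal{R}} Q$ and define $\mathcal{S}$ to be the least open symmetric relation such that, whenever $\mathopen{\nu\vec{x}.}(\sigma \cpar A) \sim \mathopen{\nu\vec{y}.}(\theta \cpar B)$ with $\vec{x} \cup \vec{y}$ disjoint from $\fv{P} \cup \fv{Q}$, we have $\mathopen{\nu\vec{x}.}(\sigma \cpar A \cpar \bang P) \mathrel{\mathcal{S}} \mathopen{\nu\vec{y}.}(\theta \cpar B \cpar \bang Q)$. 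The pair $(\bang P, \bang Q)$ arises as the base case $A = B = 0$.

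Checking that $\mathcal{S}$ is a quasi-open bisimulation reduces to verifying static equivalence and matching transitions. Static equivalence is inherited from the bisimilar residuals, since $\bang P$ and $\bang Q$ add no aliases to the visible frame. For matching transitions, analyse the derivation of $T \lts{\pi} T'$ by the root rule: transitions arising purely from the residual $A$ via \textsc{Par-l} are matched directly by the bisimilarity of residuals; when $\bang P$ fires \textsc{Rep-act} from $P \lts{\pi} C$, match with $\bang Q$'s \textsc{Rep-act} via the corresponding $Q \lts{\pi} D$ supplied by $P \sim Q$, where the enlarged residual $A \cpar C$ remains bisimilar to $B \cpar D$ by Lemma~\ref{lemma:par}, placing the successor pair in $\mathcal{S}$; communication between the residual $A$ and an unfolding of $\bang P$ via \textsc{Close-l} or \textsc{Close-r} combines these two patterns.

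The main obstacle is the \textsc{Rep-close} case. Suppose $P \lts{\co{M}(x)} \mathopen{\nu\vec{z}.}(\sub{x}{N} \cpar Q_1)$ and $P \lts{M\,N} R$, yielding $\bang P \lts{\tau} \mathopen{\nu\vec{z}.}(Q_1 \cpar R \cpar \bang P)$. By $P \sim Q$, select the matching output $Q \lts{\co{M}(x)} \mathopen{\nu\vec{w}.}(\sub{x}{N'} \cpar Q_1')$ with bisimilar output frames, and (after observing $P \lts{M\,N'} R''$ and matching via $P \sim Q$) the matching input $Q \lts{M\,N'} R'$ at the same extruded message $N'$; applying \textsc{Rep-close} then yields $\bang Q \lts{\tau} \mathopen{\nu\vec{w}.}(Q_1' \cpar R' \cpar \bang Q)$. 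The technical heart is then establishing the residual bisimilarity $\mathopen{\nu\vec{z}.}(Q_1 \cpar R) \sim \mathopen{\nu\vec{w}.}(Q_1' \cpar R')$, from which the successor pair falls in $\mathcal{S}$.

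To prove this last bisimilarity, combine the bisimilar output frames with auxiliary copies of $P$ and $Q$ respectively via Lemma~\ref{lemma:par}, obtaining $\mathopen{\nu\vec{z}.}(\sub{x}{N} \cpar Q_1 \cpar P) \sim \mathopen{\nu\vec{w}.}(\sub{x}{N'} \cpar Q_1' \cpar Q)$. Then use the \textsc{Alias} rule with substitution $\sub{x}{N}$ to realise $P$'s input of $N$ as an $M\,x$-labelled transition of the left compound; by the bisimilarity, this transition is matched on the right by a corresponding $M\,x$-labelled transition, realisable by $Q$'s input of $N'$ via \textsc{Alias} with $\sub{x}{N'}$, yielding $\mathopen{\nu\vec{z}.}(\sub{x}{N} \cpar Q_1 \cpar R) \sim \mathopen{\nu\vec{w}.}(\sub{x}{N'} \cpar Q_1' \cpar R')$. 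Since $x$ was introduced fresh by the bound output, $x$ is absent from $Q_1$, $R$, $Q_1'$, and $R'$; restricting $x$ on both sides and invoking the auxiliary equivalence $\mathopen{\nu x.}(\sub{x}{M'} \cpar X) \sim X$ when $x \notin \fv{X}$ (a routine consequence of \textsc{Alias}, \textsc{Res}, and static equivalence) eliminates the recipe substitution and delivers the desired residual bisimilarity.
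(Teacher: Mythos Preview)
Your \textsc{Rep-close} argument has two genuine gaps. First, when you play the $M\,x$-labelled transition on $\mathopen{\nu\vec{z}.}(\sub{x}{N} \cpar Q_1 \cpar P)$ and invoke bisimilarity, the matching move on the right is realised (via \textsc{Alias}) by an $M\,N'$-input of $Q_1' \cpar Q$, but nothing forces that input to come from $Q$ rather than from $Q_1'$. If $Q_1'$ supplies it you obtain a right-hand residual $\mathopen{\nu\vec{w}.}(\sub{x}{N'} \cpar Q_1'' \cpar Q)$, which corresponds to no single $\tau$-step of $\bang Q$, so the successor pair cannot be placed in your $\mathcal{S}$. Second, the step ``restricting $x$ on both sides'' is ill-typed in the normal-form presentation used here: extended processes are subject to the restriction that $\dom{\sigma}$ is fresh for the bound names, so $\mathopen{\nu x.}(\sub{x}{M'} \cpar X)$ is simply not a legal object, and your claimed auxiliary equivalence is undefined. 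What you really need is a frame garbage-collection lemma (dropping an alias whose domain variable is otherwise unused preserves $\sim$), which is plausible but neither stated nor routine; it does not follow from \textsc{Alias}, \textsc{Res} and static equivalence alone, and openness (Def.~\ref{def:close}) does not supply it because reachability forbids binding a variable already in the frame's domain.

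The paper sidesteps both problems by abandoning the ``up to $\sim$'' style and building the witness relation directly. It stratifies as $\mathcal{S}_0 = \{(\bang P, \bang Q)\}$ and lets $\mathcal{S}_{n+1}$ combine a pair $\mathopen{\nu\vec{x}.}(\sigma \cpar A) \mathrel{\mathcal{R}} \mathopen{\nu\vec{y}.}(\theta \cpar B)$ with a pair $\mathopen{\nu\vec{v}.}(\varsigma \cpar C) \mathrel{\mathcal{S}_n} \mathopen{\nu\vec{w}.}(\vartheta \cpar D)$ by \emph{cross-applying} each pair's active substitution to the other's process, yielding $\mathopen{\nu\vec{x},\vec{v}.}(A\varsigma \cpar C\sigma) \mathrel{\mathcal{S}_{n+1}} \mathopen{\nu\vec{y},\vec{w}.}(B\vartheta \cpar D\theta)$. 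For \textsc{Rep-close} the input is matched at a fresh variable $u$ (giving $R \mathrel{\mathcal{R}} S$), and the asymmetric instantiations $R\sub{u}{N}$ on the left versus $S\sub{u}{L}$ on the right then fall out automatically from this cross-substitution when the output pair $\mathopen{\nu\vec{v}.}(\sub{u}{N}\cpar P') \mathrel{\mathcal{R}} \mathopen{\nu\vec{w}.}(\sub{u}{L}\cpar Q')$ is stacked on top---no need to locate a particular matching move inside a compound, and no leftover alias to discharge.
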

\begin{proof}
Assume $P \sim Q$, hence there exists a quasi-open bisimulation $\mathcal{R}$ such that $P \mathrel{\mathcal{R}} Q$.
Now define $\mathcal{S}_0$ to be the singleton relation such that $\bang P \mathrel{\mathcal{S}_0} \bang Q$.
Inductively, define $\mathcal{S}_{n+1}$ to be the least relation such that if
$\mathopen{\nu\vec{x}.}\left(\sigma \cpar A\right) \mathrel{\mathcal{R}} \mathopen{\nu\vec{y}.}\left(\theta \cpar B\right)$
and
$\mathopen{\nu\vec{v}.}\left(\varsigma \cpar C\right) \mathrel{\mathcal{S}_{n}} \mathopen{\nu\vec{w}.}\left(\vartheta \cpar D\right)$
then we have
$\mathopen{\nu\vec{x},\vec{v}.}\left(A\varsigma \cpar C\sigma\right) \mathrel{\mathcal{S}_{n+1}} \mathopen{\nu\vec{y},\vec{w}.}\left(B\vartheta \cpar D\theta\right)$.
Define $\mathcal{S}$ to be the least open relation containing $\bigcup_{n\in\omega} \mathcal{S}_n$.  
Now, assume
$\mathopen{\nu \vec{x}.}\left( \sigma \cpar P_1 \cpar \hdots P_n \cpar \bang P\right) \mathrel{\mathcal{S}_n} \mathopen{\nu \vec{y}.}\left( \theta \cpar Q_1 \cpar \hdots Q_n \cpar \bang Q\right)$
and consider the following.
\begin{itemize}
\item 
Assume $P \lts{\co{M}(u)} \mathopen{\nu \vec{v}.}\left( \sub{u}{N} \cpar P'\right)$ and $P \lts{M\,u} R$, for fresh $u$ and $\fv{P} \cap \vec{v} = \emptyset$.
By monotonicity, $P \lts{M\,N} R\sub{u}{N}$.
By \textsc{Rep-close}, $\bang P \lts{\tau} \mathopen{\nu \vec{v}.}\left(P' \cpar R\sub{u}{N} \cpar \bang P\right)$.
Now since $P \mathrel{\mathcal{R}} Q$,
there exists $\vec{w}$, $L$ and $Q'$ such that $Q \lts{\co{M}(u)} \mathopen{\nu \vec{w}.}\left( \sub{u}{L} \cpar Q'\right)$
and also we have 
$\mathopen{\nu \vec{v}.}\left( \sub{u}{N} \cpar P'\right) \mathrel{\mathcal{R}} \mathopen{\nu \vec{w}.}\left( \sub{u}{L} \cpar Q'\right)$.
Furthermore, there exists $S$ such that $Q \lts{M\,u} S$ and $R \mathrel{\mathcal{R}} S$;
hence, by monotonicity, $Q \lts{M\,L} S\sub{u}{L}$.
Without loss of generality we can assume $\vec{w} \cap \fv{Q} = \emptyset$;
thus, by \textsc{Rep-close}, $\bang Q \lts{\tau} \mathopen{\nu \vec{w}.}\left( Q' \cpar S\sub{u}{L} \cpar \bang Q\right)$.
Furthermore, by definition of $\mathcal{S}_{2}$, we have
$\mathopen{\nu \vec{v}.}\left(P' \cpar R\sub{u}{N} \cpar \bang P\right) \mathrel{\mathcal{S}_{2}} \mathopen{\nu \vec{w}.}\left( Q' \cpar S\sub{u}{L} \cpar \bang Q\right)$.
Extending inductively, over the definition of $\mathcal{S}_n$, we have the following, as required.
\[
\mathopen{\nu \vec{x},\vec{v}.}\left( \sigma \cpar P_1 \cpar \hdots P_n \cpar P' \cpar R\sub{u}{N} \cpar \bang P\right) \mathrel{\mathcal{S}_{n+2}} \mathopen{\nu \vec{y},\vec{w}.}\left( \theta \cpar Q_1 \cpar \hdots Q_n \cpar Q' \cpar S\sub{u}{L} \cpar \bang Q\right)
\]

\item There are several more cases to consider, where in each case an action on the left of $\mathcal{S}_n$ can be matched by an action on the right, such that the resulting processes stay within $\mathcal{S}$.
\begin{itemize}
\item Some $P_i \lts{\pi} P'_i$ acts independently, staying within $\mathrel{\mathcal{S}_n}$. 
\item For $i \not= j$, $P_i \lts{\co{M}(u)} A_i$ and $P_j \lts{M\,N} P'_j$, resulting a $\tau$ transition, staying within $\mathrel{\mathcal{S}_n}$.
\item $P \lts{\pi} P'$ acts independently, applying rule \textsc{Rep-act}, progressing to $\mathrel{\mathcal{S}_{n+1}}$.
\item $P_i \lts{\co{M}(u)} A_i$ and $P \lts{M\,N} P'$, resulting in a $\tau$ transition progressing to $\mathrel{\mathcal{S}_{n+1}}$.
\item $P \lts{\co{M}(u)} A$ and $P_j \lts{M\,N} P_j'$, resulting in a $\tau$ transition progressing to $\mathrel{\mathcal{S}_{n+1}}$.
\end{itemize}
The proofs for these cases do not differ significantly from what is already presented for parallel composition and \textsc{Rep-close}, hence are ommitted.
\end{itemize}
Thereby $\mathcal{S}$ is a quasi-open bisimulation such that $\bang P \mathrel{\mathcal{S}} \bang Q$; hence $\bang P \sim \bang Q$.
\end{proof}

Given Theorem~\ref{theorem:congruence},
the soundness of quasi-open bisimilarity with respect to open barbed bisimilarity is standard.
For a self-contained presentation, we recall the proof.
\begin{corollary}[soundness]\label{theorem:soundness}
If $P \sim Q$ then $P \bsim Q$.
\end{corollary}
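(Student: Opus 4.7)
The plan is to show that quasi-open bisimilarity $\sim$ is itself an open barbed bisimulation, from which $P \sim Q \implies P \bsim Q$ follows immediately by maximality of $\bsim$. The three clauses in Definition~\ref{def:open-barbed} must be checked: closure under all contexts, preservation of barbs, and simulation of $\tau$ transitions.

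First I would dispatch the context-closure clause directly using Theorem~\ref{theorem:congruence}: if $P \sim Q$ then $\context{P} \sim \context{Q}$ for every context $\context{\ \cdot\ }$, which is exactly what open barbed bisimulation demands. Second, for preservation of barbs, assume $\barb{P}{M}$. By definition there exists either a bound-output transition $P \lts{\co{M}(z)} A$ or a free-input transition $P \lts{M\,N} A$ for some $A$. Since $P \sim Q$, by the transition-matching clause of Definition~\ref{definition:quasi-open}, $Q$ must perform a transition with exactly the same label, yielding $\barb{Q}{M}$. (Note that the literal channel term $M$ appears on the label, so the matching is immediate; no appeal to static equivalence is needed here.) Third, the simulation of $\tau$ transitions is a special case of the transition-matching clause when $\pi = \tau$, and the resulting processes are again related by $\sim$.

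The only mildly delicate point is to verify that $\sim$ really is a \emph{symmetric} relation, as required for an open barbed bisimulation. Symmetry follows because the class of quasi-open bisimulations is closed under relational converse (swap $A$ and $B$ everywhere in Definition~\ref{definition:quasi-open}; the symmetric requirement makes this trivial), hence so is its union $\sim$. Putting the three clauses together, $\sim$ is an open barbed bisimulation, so $\sim \,\subseteq\, \bsim$, which is the claim. I do not expect any real obstacle here, since all of the substantive work has been done in establishing Theorem~\ref{theorem:congruence}; this corollary is a straightforward repackaging.
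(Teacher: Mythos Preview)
Your proposal is correct and follows essentially the same approach as the paper: show that $\sim$ itself satisfies the three clauses of an open barbed bisimulation, invoking Theorem~\ref{theorem:congruence} for context closure and the transition-matching clause of Definition~\ref{definition:quasi-open} for both barb preservation and $\tau$-simulation. The paper's proof is virtually identical, differing only in that it spends slightly less ink on symmetry (which, as you note, is already built into the definition of a quasi-open bisimulation).
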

\begin{proof}
Assume $P \sim Q$.
Symmetry follows immediately from the definition.
By Theorem~\ref{theorem:congruence}, $\context{P} \sim \context{Q}$.
By closure under transitions, if $P \lts{\tau} P'$ then there exists $Q'$ such that $Q \lts{\tau} Q'$ and $P' \sim Q'$.
If $\barb{P}{M}$ then there exists $A$ such that $P \lts{\co{M}(x)} A$ or $P \lts{M\,N} A$.
In the former case, there exists $B$ such that $Q \lts{\co{M}(x)} B$, 
similarly, in the latter case, there exists $B$ such that $Q \lts{M\,N} B$. Hence in either case $\barb{Q}{M}$.
Hence $\sim$ is an open barbed bisimulation; thus $P \bsim Q$, as required.
\end{proof}

For completeness, we require that open barbed bisimilarity is preserved under any substitution.
\begin{lemma}\label{lemma:sub}
If $P \bsim Q$, then, for any substitution $\sigma$, $P\sigma \bsim Q\sigma$.
\end{lemma}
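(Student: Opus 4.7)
The plan is to exhibit an open barbed bisimulation containing the pair $(P\sigma, Q\sigma)$. Define
\[
\mathcal{R} = \{(A\rho, B\rho) \mid A \bsim B \text{ and } \rho \text{ is a substitution}\}.
\]
Since $\bsim$ is by definition the greatest open barbed bisimulation, it suffices to verify that $\mathcal{R}$ is itself an open barbed bisimulation. Symmetry is then immediate from symmetry of $\bsim$, and the required pair $(P\sigma, Q\sigma)$ lies in $\mathcal{R}$ by taking $A = P$, $B = Q$, $\rho = \sigma$.

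For context closure, given $(A\rho, B\rho) \in \mathcal{R}$ with $A \bsim B$ and any context $\context{\ \cdot\ }$, I would first $\alpha$-rename the bound variables of $\context{\ \cdot\ }$ to be fresh for $\dom{\rho} \cup \bigcup_{x \in \dom{\rho}} \fv{x\rho}$, and rename every free occurrence in $\context{\ \cdot\ }$ of a variable in $\dom{\rho}$ to a fresh variable, producing $\hat{\context}{\ \cdot\ }$ with a distinguished list $\vec{x}'$ of renamed holders. Letting $\hat{\rho}$ be $\rho$ extended in parallel with the mapping $\vec{x}' \mapsto \vec{x}$, a direct syntactic check gives $\context{A\rho} = (\hat{\context}{A})\hat{\rho}$ and $\context{B\rho} = (\hat{\context}{B})\hat{\rho}$. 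Since $\bsim$ is itself closed under every context, $\hat{\context}{A} \bsim \hat{\context}{B}$; hence the pair is in $\mathcal{R}$ as required.

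For barbs and $\tau$-transitions, the idea is to exploit that context closure of $\bsim$ holds at every step of the bisimulation game. Writing $\rho = \sub{\vec{x}}{\vec{M}}$, construct the context $\context_\rho{\ \cdot\ }$ composed of parallel outputs $\cout{a_i}{M_i}.0$ on fresh channels $\vec{a}$ together with a nested input prefix $a_1(x_1).\cdots a_n(x_n).\{\ \cdot\ \}$. Any transition $A\rho \lts{\tau} C$ lifts to an $(n{+}1)$-step $\tau$-sequence in $\context_\rho{A}$ whose first $n$ steps are the forced communications instantiating $\rho$; by $A \bsim B$ and context closure of $\bsim$, $\context_\rho{B}$ must match this exact sequence, and since the first $n$ communications are pinned down by the freshness of $\vec{a}$, the final $\tau$ must come from a transition $B\rho \lts{\tau} D$ with $(C, D)$ back in $\mathcal{R}$; the argument for barbs is analogous.

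The main obstacle I anticipate is the intuitionistic mismatch rule: $A\rho$ may enable transitions that arise only after $\rho$ resolves (or fails to resolve) an equation between open terms, and hence are not visible from $A$ in the empty environment. The delicate part is confirming that the communication-based encoding $\context_\rho{\ \cdot\ }$ forces a deterministic alignment of the strong $\tau$-transitions — so that the $(n{+}1)$-th $\tau$ on the $B$-side cannot be absorbed by some unrelated reduction arising from the context, and so that the mismatch-enabling effect of $\rho$ is faithfully mirrored on both sides via the active substitutions generated by the forced communications together with the equational theory.
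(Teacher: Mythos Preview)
Your core device --- encoding the substitution $\sigma = \sub{z_1,\ldots,z_n}{K_1,\ldots,K_n}$ as communications on fresh channels, then playing the $\tau$-game --- is exactly what the paper does. The paper, however, packages it much more directly: from $P \bsim Q$ and context closure of $\bsim$ it obtains
\[
\cout{c}{K_1}.\cdots\cout{c}{K_n} \cpar c(z_1).\cdots c(z_n).P
\;\bsim\;
\cout{c}{K_1}.\cdots\cout{c}{K_n} \cpar c(z_1).\cdots c(z_n).Q,
\]
performs the $n$ forced $\tau$-steps (deterministic, since $P,Q$ are blocked under the input prefixes and $c$ is fresh), lands at $0 \cpar P\sigma \bsim 0 \cpar Q\sigma$, and strips the $0$ using $0 \cpar R \sim R$ together with soundness of $\sim$ with respect to $\bsim$. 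That is the entire proof.

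Your detour through the candidate relation $\mathcal{R}$ is unnecessary overhead. Notice that your own $\tau$-closure argument for $\mathcal{R}$, once you complete it, already yields $0^n \cpar C \bsim 0^n \cpar D$; to place $(C,D)$ in $\mathcal{R}$ you still have to strip the $0$'s, i.e.\ conclude $C \bsim D$, at which point you have proved the lemma outright and the rest of the $\mathcal{R}$-verification (in particular the syntactic context-closure gymnastics with $\hat{\context}$ and $\hat{\rho}$) was never needed. Your anticipated obstacle about intuitionistic mismatch is also a non-issue: because $P$ and $Q$ sit under the input prefixes $c(z_1).\cdots c(z_n)$, they are inert during the first $n$ steps, so no mismatch guard in $P$ or $Q$ can fire early or interfere with the forced alignment; after those steps the process \emph{is} literally $P\sigma$ (respectively $Q\sigma$) up to the residual $0$'s.
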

\begin{proof}
Assume $P \bsim Q$ and consider substitution $\sigma$ defined such that $\sub{z_1, \hdots, z_n}{K_1, \hdots, K_n}$.
Since open barbed bisimilarity is preserved under all contexts,
for fresh names $c$,
we have the following.
\[
{\cout{c}{K_1}.\cout{c}{K_2}\hdots \cout{c}{K_n}} \cpar c(z_1).c(z_2)\hdots c(z_n).P \bsim {\cout{c}{K_1}.\cout{c}{K_2}\hdots \cout{c}{K_n}} \cpar c(z_1).c(z_2)\hdots c(z_n).Q
\]
Each of these processes can perform the same number of $\tau$-transitions to reach the states $0 \cpar P\sigma$ and $0 \cpar Q\sigma$
Since open barbed bisimilarity is closed under $\tau$-transitions, we have $0 \cpar P\sigma \bsim 0 \cpar Q\sigma$.
By Lemma~\ref{lemma:algebra}, $0 \cpar R \sim R$ holds, and, by Corollary~\ref{theorem:soundness}, $0 \cpar R \bsim R$; hence $P\sigma \bsim Q\sigma$ as required.
\end{proof}


The following result supports our claim that our definition of quasi-open bisimilarity for the applied $\pi$-calculus, Definition~\ref{definition:quasi-open}, is correct, and a canonical choice of (strong interleaving) bisimilarity.
Recall open barbed bisimilarity has an objective language-independent definition.

\begin{theorem}[completeness]\label{theorem:openbarbed}
Quasi-open bisimilarity coincides with open barbed bisimilarity.
\end{theorem}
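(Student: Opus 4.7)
The plan is to establish completeness: if $P \bsim Q$ then $P \sim Q$; soundness is already Corollary~\ref{theorem:soundness}. The strategy is to show that open barbed bisimilarity $\bsim$, viewed as a relation between extended processes whose active-substitution domains agree, is itself a quasi-open bisimulation. The three verification conditions from Definition~\ref{definition:quasi-open} must each be discharged by exhibiting suitable distinguishing contexts.

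First I would verify that $\bsim$ is open in the sense of Definition~\ref{def:close}. Closure under substitutions $\sigma$ is exactly Lemma~\ref{lemma:sub}. Closure under environment extensions $\mathopen{\nu\vec{z}.}\rho$ is an instance of congruence: wrapping both sides in the context $\mathopen{\nu\vec{z}.}\left( \rho \cpar \{\cdot\} \right)$ preserves $\bsim$ by definition. Next, for static equivalence, suppose $A \bsim B$ but with distinguishing recipes $M, N$ so that, without loss of generality, $M\sigma =_E N\sigma$ while $M\theta \neq_E N\theta$. Place both in the context $\match{M=N}\cout{c}{c} \cpar \{\cdot\}$ with fresh $c$; the composition with $A$ exhibits barb $c$ (using \textsc{Alias} to apply the substitution under the outer restrictions) while the composition with $B$ does not, contradicting closure under contexts and barbs.

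For closure under labelled transitions, $\tau$ is immediate. For a free input $P \lts{M\,N} P'$ choose fresh $c$ and compose with $\cout{M}{N}.\cout{c}{c} \cpar \{\cdot\}$; the composition with $P$ has a $\tau$-step to a state with barb $c$, forcing $Q$ composed with the same context to produce a matching $\tau$-step leading to a barb on $c$, which is only possible if $Q$ performs the same free input. For a bound output $P \lts{\co{M}(x)} A'$ use the dual context $M(x).\cout{c}{c} \cpar \{\cdot\}$. To pass the matching of resulting states through to the next round of the bisimulation game, the key observation is that the residuals after the forced $\tau$-step are again related by $\bsim$ by construction, and the extruded variable $x$ is identified on both sides because it is bound to the same (now active) substitution variable on each side, whose domain was chosen fresh.

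The main obstacle is the bound output case: the testing context must not only force $Q$ to commit to a bound output with the same observable channel $M$, but also keep the extruded value addressable on both sides so that subsequent bisimulation steps can be played under the same reachability. The trick is to ensure the context shares a private name with the extruded slot through the \textsc{Alias} mechanism, so that, after matching, the residual pair is exactly of the form ${\sub{x}{N}} \cpar P'$ versus ${\sub{x}{L}} \cpar Q'$ under a common freshness discipline. Given this, iterating Lemma~\ref{lemma:sub} and Theorem~\ref{theorem:congruence} at every step keeps the relation closed under the open reachability of Definition~\ref{definition:reachable}, completing the argument that $\bsim$ satisfies the clauses of Definition~\ref{definition:quasi-open}, so $\bsim\,\subseteq\,\sim$ and, combined with soundness, the two coincide.
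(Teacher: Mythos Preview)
Your high-level strategy is the paper's: show $\bsim\subseteq\sim$ by checking the clauses of Definition~\ref{definition:quasi-open} via testing contexts. But there is a genuine gap. Open barbed bisimilarity (Definition~\ref{def:open-barbed}) is a relation on \emph{processes}, while a quasi-open bisimulation relates \emph{extended processes} carrying active substitutions. Your phrase ``viewed as a relation between extended processes whose active-substitution domains agree'' never says how $\bsim$ is lifted, and the problem bites immediately after a bound output: your context $M(x).\cout{c}{c}$ simply discards the extruded value, so the residual after the forced $\tau$-step is a plain process with no record of what was sent. You therefore cannot continue the game on the pair $A' = \mathopen{\nu\vec{v}.}(\sub{x}{N}\cpar P')$ versus the matching $B'$, and your final-paragraph ``trick'' of sharing a private name via \textsc{Alias} is not a construction. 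The same issue resurfaces in your static-equivalence argument, which relies on \textsc{Alias} applying $\sigma$ to the test $\match{M=N}\cout{c}{c}$ --- but that presupposes you are already composing a context with an extended process inside $\bsim$, which the definition does not license.

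The paper supplies the missing device: encode an active substitution $\sub{x_1,\ldots,x_n}{M_1,\ldots,M_n}$ as the \emph{process} $\prod_i\bang\cout{a_i}{M_i}$ on fresh channels, and define $\mathcal{R}$ so that $\mathopen{\nu\vec{y}.}(\sigma\cpar P)\mathrel{\mathcal{R}}\mathopen{\nu\vec{z}.}(\rho\cpar Q)$ exactly when the encoded processes are related by $\bsim$. Every testing context is then prefixed by $a_1(x_1).\ldots.a_n(x_n)$ to load the substitution before the test; in particular the bound-output context is $a_1(x_1).\ldots.a_n(x_n).\left(\cout{s}{s}+K(\hat u).\bang\cout{b}{\hat u}\right)$, which stores the extruded value as a new replicated output on fresh $b$, extending the encoding by one entry so that the residuals land back in $\mathcal{R}$. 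The $\cout{s}{s}+\ldots$ choice is also doing work you omit: it marks the unique state reached after reading all $a_i$, so that losing the $s$-barb forces the intended interaction rather than some other internal $\tau$. This replicated-output encoding is the idea your proposal lacks.
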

\begin{proof}
Define relation $\mathrel{\mathcal{R}}$
such that
$\mathopen{\nu \vec{y}.}\left( \sigma \cpar P \right)
\mathrel{\mathcal{R}} 
\mathopen{\nu \vec{z}.}\left( \rho \cpar Q \right)$,
where $\sigma = \sub{x_1, \hdots x_n}{M_1, \hdots M_n}$ and
$\rho = \sub{x_1, \hdots x_n}{N_1, \hdots N_n}$,
whenever
for some fresh names $\vec{a} = \left\{ a_1, a_2, \hdots, a_n \right\}$ and $I = \left\{1, \hdots n\right\}$
we have
 $P_1 \bsim Q_1$,
such that
$P_1 \triangleq \mathopen{\nu \vec{y}.}\left(\prod_{i\in I}{\bang \cout{a_i}{M_i}} \cpar P \right)$
and
$Q_1 \triangleq \mathopen{\nu \vec{z}.}\left(\prod_{i\in I}{\bang \cout{a_i}{N_i}} \cpar Q \right)$.

Note $\prod_{i\in I}{S_i}$ is an abbreviation for $S_1 \cpar \hdots \cpar S_n$.

The symmetry of $\mathcal{R}$ is immediate from the symmetry of $\bsim$.
The rest of the proof shows $\mathcal{R}$ is a quasi-open bisimulation.
In each clause assume
$\mathopen{\nu \vec{y}.}\left( \sigma \cpar P \right)
\mathrel{\mathcal{R}} 
\mathopen{\nu \vec{z}.}\left( \rho \cpar Q \right)$.
Hence, by definition, we have $P_1 \bsim Q_1$, as define above.

\textbf{Static equivalence.}
Consider the following context, where $s$ is a fresh name.
\[
\context{\ \cdot\ } \triangleq a_1(x_1).a_2(x_2).\hdots \mathopen{a_n(x_n).}{\match{M = N}{\cout{s}{s}}} \cpar \left\{\ \cdot\ \right\}
\]
Also assume $\left(\vec{y}\cup \vec{z}\right) \cap \left(\fv{M} \cup \fv{N}\right) = \emptyset$.
Since open barbed bisimilarity is closed under all contexts, $\context{P_1} \bsim \context{Q_1}$.
By closure under $\tau$-transitions the following are open barbed bisimilar.
\[
\mathopen{\nu \vec{y}.}\left({\match{M\sigma = N\sigma}{\cout{s}{s}}} \cpar \prod_{i\in I}{\bang \cout{a_i}{M_i}} \cpar P \right)
\bsim
\mathopen{\nu \vec{z}.}\left({\match{M\rho = N\rho}{\cout{s}{s}}} \cpar \prod_{i\in I}{\bang \cout{a_i}{N_i}} \cpar Q \right)
\]
By definition of open barbed bisimilarity if either of the above processes exhibits barb $s$ then so must the other. By unfolding the rules for labelled transitions, it must be the case that $M\sigma = N\sigma$ if and only if $M\rho = N\rho$.
Thus for any choice of $M$ and $N$ such that $\left(\vec{y}\cup \vec{z}\right) \cap \left(\fv{M} \cup \fv{N}\right) = \emptyset$, we have $M\sigma = N\sigma$ if and only if $M\rho = N\rho$.
Thus 
$\mathopen{\nu \vec{y}.}\left( \sigma \cpar P \right)$
and
$\mathopen{\nu \vec{z}.}\left( \rho \cpar Q \right)$
are statically equivalent.

\textbf{Closure under input transitions.}
Assume $\mathopen{\nu \vec{y}.}\left( \sigma \cpar P \right) \lts{K\,L} A$, such that, without loss of generality, $\left(\fv{K}\cup\fv{L}\right) \cap \left(\vec{y} \cup \vec{z}\right) = \emptyset$.
Consider the following context, where $s$ is a fresh name.
\[
\context{\ \cdot\ } \triangleq a_1(x_1).a_2(x_2).\hdots \mathopen{a_n(x_n).}\left({\cout{s}{s}} + {\cout{K}{L}}\right) \cpar \left\{\ \cdot\ \right\}
\]
Since open barbed bisimilarity is closed under all contexts,
$\context{P_1} \bsim \context{Q_1}$ holds.
Both processes have only one way to perform $n$ transitions to reach a pair of states exhibiting barb $s$.
Thus we have the following, by closure of an open barbed bisimulation under $\tau$-transitions.
\[
\mathopen{\nu \vec{y}.}\left( \left({\cout{s}{s}} + {\cout{K\sigma}{L\sigma}} \right) \cpar \prod_{i\in I}{\bang \cout{a_i}{M_i}} \cpar P \right)
\quad\bsim\quad
\mathopen{\nu \vec{z}.}\left( \left({\cout{s}{s}} + {\cout{K\rho}{L\rho}} \right) \cpar \prod_{i\in I}{\bang \cout{a_i}{N_i}} \cpar Q \right)
\]

Now, since $\mathopen{\nu \vec{y}.}\left( \sigma \cpar P \right) \lts{K\,L} A$, it must the the case that $A = \mathopen{\nu \vec{y}.}\left( \sigma \cpar P' \right)$
 for some $P'$.
Hence, by unfolding the definition of labelled transitions, we have $\vec{y} \colon P \lts{K\sigma\,L\sigma} P'$.
Now by the \textsc{Out} and \textsc{Sum-r} rules, $\vec{y} \colon {{\cout{s}{s}}} + {\cout{K\sigma}{L\sigma}} \lts{\co{K\sigma}(w)} {\sub{w}{L\sigma}} \cpar 0$ for fresh $w$.
Hence, by rules \textsc{Par-l}, \textsc{Close-l} and \textsc{Res}, we can construct the following transition.
\[
\mathopen{\nu \vec{y}.}\left( \left({{\cout{s}{s}}} + {{\cout{K\sigma}{L\sigma}}} \right) \cpar \prod_{i\in I}{\bang \cout{a_i}{M_i}} \cpar P \right)
\lts{\tau}
\mathopen{\nu \vec{y}.}\left( 0 \cpar \prod_{i\in I}{\bang \cout{a_i}{M_i}} \cpar P' \right)
\]
Notice the above transition reaches a state where there is no barb $s$. By the definition of an open barbed bisimulation, there exists a transition of the following form, where $\barb{R}{s}$ does not hold.
\[
\mathopen{\nu \vec{z}.}\left( \left({\cout{s}{s}} + {\cout{K\rho}{L\rho}} \right) \cpar \prod_{i\in I}{\bang \cout{a_i}{N_i}} \cpar Q \right) \lts{\tau} R
\]
Furthermore it must be the case that $\mathopen{\nu \vec{y}.}\left( 0 \cpar \prod_{i\in I}{\bang \cout{a_i}{M_i}} \cpar P' \right) \bsim R$.
Since $R$ does not have barb $s$, and $s$ was chosen fresh, hence there is no input on channel $s$, the following output must have been performed: $\vec{z} \colon {{\cout{s}{s}}} + {\cout{K\rho}{L\rho}} \lts{\co{K\rho}(w)} {\sub{w}{L\rho}} \cpar 0$ . Thus there exists $Q'$ such that $\vec{z} \colon Q \lts{K\rho\,L\rho} Q'$ and hence $R = \mathopen{\nu \vec{z}.}\left( 0 \cpar \prod_{i\in I}{\bang \cout{a_i}{N_i}} \cpar Q' \right)$.

From the above observations, and since $\left(\fv{K}\cup\fv{L}\right) \cap \vec{z} = \emptyset$, we can construct transition $\mathopen{\nu \vec{z}.}\left( \rho \cpar Q \right) \lts{K\,L} \mathopen{\nu \vec{z}.}\left( \rho \cpar Q' \right)$.
By Lemma~\ref{lemma:algebra}, $0 \cpar S \sim S$, hence, by Corollary~\ref{theorem:soundness}, $0 \cpar S \bsim S$, for all processes. Thereby we have 
$\mathopen{\nu \vec{y}.}\left( \prod_{i\in I}{\bang \cout{a_i}{M_i}} \cpar P' \right) \bsim \mathopen{\nu \vec{z}.}\left( \prod_{i\in I}{\bang \cout{a_i}{N_i}} \cpar Q' \right)$.
So, by definition of $\mathcal{R}$, we have 
$\mathopen{\nu \vec{y}.}\left( \sigma \cpar P' \right) \mathrel{\mathcal{R}} \mathopen{\nu \vec{z}.}\left( \rho \cpar Q' \right)$, as required.

\textbf{Closure under output transitions.}
Assume $\mathopen{\nu \vec{y}.}\left( \sigma \cpar P \right) \lts{K(u)} \mathopen{\nu \vec{y},\vec{w}.}\left( \sub{u}{L} \cpar \sigma \cpar P' \right)$, where, without loss of generality, $\fv{K} \cap \left(\vec{x} \cup \vec{y}\right) = \emptyset$. 
Consider the following context for fresh $s$ and $b$.
\[
\context{\ \cdot\ } \triangleq a_1(x_1).a_2(x_2).\hdots \mathopen{a_n(x_n).}\left({\cout{s}{s}} + {\cin{K}{\var}.\bang\cout{b}{\var}}\right) \cpar \left\{\ \cdot\ \right\}
\]
Since open barbed bisimilarity is closed under all contexts,
$\context{P_1} \bsim \context{Q_1}$ holds.
Both processes can perform $\tau$-transitions $m$ times, to reach the following pair of states exhibiting barb $s$.
\[
\mathopen{\nu \vec{y}.}\left( \left({\cout{s}{s}} + {\cin{K\sigma}{\var}.\bang\cout{b}{\var}}\right) \cpar \prod_{i\in I}{\bang \cout{a_i}{M_i}} \cpar P \right)
\bsim
\mathopen{\nu \vec{z}.}\left( \left({\cout{s}{s}} + {\cin{K\rho}{\var}.\bang\cout{b}{\var}}\right) \cpar \prod_{i\in I}{\bang \cout{a_i}{N_i}} \cpar Q \right)
\]
The above are open barbed bisimilar, since open barbed bisimilarity is preserved under $\tau$-transitions.

Now, by unfolding the definition of labelled transitions, we have $\vec{y} \colon P \lts{K\sigma(u)} \mathopen{\nu \vec{v}.}\left({\sub{u}{L}} \cpar P' \right)$ and $\vec{v} \cap \fv{K} = \emptyset$. Also, by rules \textsc{Inp} and \textsc{Sum-r}, we have $\vec{y} \colon \left({\cout{s}{s}} + {\cin{K\sigma}{\var}.\bang\cout{b}{\var}}\right) \lts{{K\sigma}\,L\sigma} {\bang\cout{b}{L\sigma}}$.
Hence, by rules \textsc{Par-l}, \textsc{Close-r} and \textsc{Res} we have the following interaction.
\[
\mathopen{\nu \vec{y}.}\left( \left({\cout{s}{s}} + {\cin{K\sigma}{\var}.\bang\cout{b}{\var}}\right) \cpar \prod_{i\in I}{\bang \cout{a_i}{M_i}} \cpar P \right)
\lts{\tau}
\mathopen{\nu \vec{y}, \vec{v}.}\left( {\bang\cout{b}{L\sigma}} \cpar \prod_{i\in I}{\bang \cout{a_i}{M_i}} \cpar P' \right)
\]
Notice the above transition reaches a state with  barb $b$. By the definition of an open barbed bisimulation, there exists a transition of the following form.
\[
\mathopen{\nu \vec{z}.}\left( \left({\cout{s}{s}} + {\cin{K\rho}{\var}.\bang\cout{b}{\var}}\right) \cpar \prod_{i\in I}{\bang \cout{a_i}{N_i}} \cpar Q \right) \lts{\tau} R
\]
Furthermore, it must be the case that $\mathopen{\nu \vec{x}.}\left( 0 \cpar \prod_{i\in I}{\bang \cout{a_i}{M_i}} \cpar P' \right) \bsim R$ where $\barb{R}{b}$ holds.
Since $\barb{R}{b}$ and $b$ was chosen fresh, input transition 
$\vec{z} \colon {\cout{s}{s}} + {\cin{K\rho}{\var}.\bang\cout{b}{\var}} \lts{K\rho\,L'} {\bang\cout{b}{L'}}$
must be triggered, for some $L'$ (not necessarily equivalent to $L$).
Thus we must have that $\vec{z} \colon Q \lts{K\rho(u)} \mathopen{\nu \vec{w}.}\left( \sub{u}{L'} \cpar Q' \right)$ and $\vec{w} \cap \fv{K} = \emptyset$;
and hence, by rules \textsc{Par-l}, \textsc{Close-r} and \textsc{Res}, $R = \mathopen{\nu \vec{z}, \vec{w}.}\left( {\bang\cout{b}{L'}} \cpar \prod_{i\in I}{\bang \cout{a_i}{N_i}} \cpar Q' \right)$.

From the above, we can construct transition $\mathopen{\nu \vec{z}.}\left( \rho \cpar Q \right) \lts{K(u)} \mathopen{\nu \vec{z}, \vec{w}.}\left( \sub{u}{L'} \cpar \rho \cpar Q' \right)$.
Since 
$\mathopen{\nu \vec{y},\vec{v}.}\left( {\bang\cout{b}{L\sigma}} \cpar \prod_{i\in I}{\bang \cout{a_i}{M_i}} \cpar P' \right) \bsim \mathopen{\nu \vec{z}, \vec{w}.}\left( {\bang\cout{b}{L'}} \cpar \prod_{i\in I}{\bang \cout{a_i}{N_i}} \cpar Q' \right)$, by definition of $\mathcal{R}$, we have 
$\mathopen{\nu \vec{y},\vec{v}.}\left( \sigma \cpar \sub{u}{L} \cpar P' \right) \mathrel{\mathcal{R}} \mathopen{\nu \vec{z},\vec{w}.}\left( \rho' \cpar \sub{u}{L'} \cpar Q' \right)$, as required.

\textbf{Closure under $\tau$ transitions.} Assume $\mathopen{\nu \vec{y}.}\left( \sigma \cpar P \right) \lts{\tau} \mathopen{\nu \vec{y}.}\left( \sigma \cpar P'\right)$.
Unfolding rules labelled transitions, $\vec{y} \colon P \lts{\tau} P'$; hence
$
\mathopen{\nu \vec{y}.}\left( \prod_{i\in I}\bang \cout{a_i}{M_i\sigma} \cpar P \right)
\lts{\tau} 
\mathopen{\nu \vec{y}.}\left( \prod_{i\in I}\bang \cout{a_i}{M_i\sigma} \cpar P' \right)$.
Hence, since open barbed bisimulations are closed under $\tau$-transitions, for some $R$, we have
$Q_1 \lts{\tau} R$ such that $\mathopen{\nu \vec{y}.}\left( \prod_{i\in I}\bang \cout{a_i}{M_i\sigma} \cpar P' \right) \bsim R$.
Since $a_i$ are fresh they cannot be involved in $\tau$-transitions, so by unfolding the rules of the labelled transition system, for some $Q'$, we have $R = \mathopen{\nu \vec{z}.}\left(\prod_{i\in I}{\bang \cout{a_i}{N_i}} \cpar Q' \right)$
and $Q \lts{\tau} Q'$.
From the above we can construct $\tau$-transition $\mathopen{\nu \vec{z}.}\left( \rho \cpar Q \right) \lts{\tau} \mathopen{\nu \vec{z}.}\left( \rho \cpar Q' \right)$.
Furthermore, by definition of $\mathcal{R}$, we have 
$
\mathopen{\nu \vec{y}.}\left( \sigma \cpar P' \right)
\mathrel{\mathcal{R}}
\mathopen{\nu \vec{z}.}\left( \rho \cpar Q' \right)
$, as required.

\textbf{Closure under reachability.}
Assume idempotent substitutions $\theta$ and $\vartheta$ are fresh for $\vec{x}$.
To avoid clashes between names $\vec{a}$ and $\theta$ and $\vartheta$, select fresh names $\left\{ b_i \right\}_{i \in \left\{1,\hdots,n\right\}}$.
Let substitution $\theta'$, with domain $\vec{a}$ be such that $a_i\theta' = b_i$.
So, by Lemma~\ref{lemma:sub}, $P_1\theta' \bsim Q_1\theta'$,
hence, since $a_i$ are fresh for $M_i$ and $P$, we have
$\mathopen{\nu \vec{y}.}\left(\prod_{i\in I}{\bang \cout{b_i}{M_i}} \cpar P \right) \bsim \mathopen{\nu \vec{z}.}\left(\prod_{i\in I}{\bang \cout{b_i}{N_i}} \cpar Q \right)$.
Without loss of generality, applying $\alpha$-conversion,
assume $\theta$ and $\vartheta$ are fresh for $\vec{y}\cup\vec{z}$;
hence, by Lemma~\ref{lemma:sub},
$\mathopen{\nu \vec{y}.}\left(\prod_{i\in I}{\bang \cout{b_i}{M_i\theta\vartheta}} \cpar P\theta\vartheta \right)
\bsim
\mathopen{\nu \vec{z}.}\left(\prod_{i\in I}{\bang \cout{b_i}{N_i\theta\vartheta}} \cpar Q\theta\vartheta \right)$.
Now define $\sigma' = \sub{x_1, \hdots x_n}{M_1\theta\vartheta, \hdots M_n\theta\vartheta}$ and $\rho' = \sub{x_1, \hdots x_n}{N_1\theta\vartheta, \hdots N_n\theta\vartheta}$.
Since $\sigma$ and $\rho$ are idempotent and $\theta$ and $\vartheta$ are fresh for $\vec{x}$; $\sigma'$ and $\rho'$ are idempotent.
Finally, consider context $\mathopen{\nu \vec{w}.}\left({\prod_{j\in J}\bang\cout{c_i}{K_i} } \cpar \left\{\ \cdot\ \right\}\right)$, for $\vartheta = \sub{v_1,\hdots,v_m}{K_1,\hdots,K_m}$, $J = \left\{ 1, \hdots, m\right\}$ and fresh $\left\{c_1, \hdots, c_m\right\}$.
Note, by Lemma~\ref{lemma:algebra}, $R \cpar \mathopen{\nu x.}S \sim \mathopen{\nu x.}\left(R \cpar S\right)$, for $x \not\in\fv{R}$, hence, by Corollary~\ref{theorem:soundness}, 
$R \cpar \mathopen{\nu x.}S \bsim \vec{x}.\left(R \cpar S\right)$.
Observe that, by closure of open barbed bisimilarity under contexts, and the aforementioned scope extrusion property we have the following.
\[
\mathopen{\nu \vec{w},\vec{y}.}\left({\prod_{j\in J}\bang\cout{c_i}{K_i} } \cpar \prod_{i\in I}{\bang \cout{b_i}{M_i\theta\vartheta}} \cpar P\theta\vartheta \right) \bsim \mathopen{\nu \vec{w},\vec{z}.}\left({\prod_{j\in J}\bang\cout{c_i}{K_i} } \cpar \prod_{i\in I}{\bang \cout{b_i}{N_i\theta\vartheta}} \cpar Q\theta\vartheta \right)
\]
Hence, by definition of $\mathcal{R}$, we have
$\mathopen{\nu \vec{w},\vec{y}.}\left( \vartheta' \cpar \sigma' \cpar P\theta\vartheta \right) \mathbin{\mathcal{R}} \mathopen{\nu \vec{w},\vec{z}.}\left( \vartheta' \cpar \rho' \cpar Q\theta\vartheta \right)$, as required.

Thus the relation $\mathcal{R}$ is a quasi-open barbed bisimulation. Furthermore if $P \bsim Q$ then $P \mathrel{\mathcal{R}} Q$. Thereby $P \bsim Q$ implies $P\sim Q$ (the converse to Corollary~\ref{theorem:soundness}).
\end{proof}

It is interesting to compare the above proof to the corresponding proof for the $\pi$-calculus~\cite{Sangiorgi2001}.
In the corresponding proof for the $\pi$-calculus checks are built into bound output transitions to ensure extruded private names are fresh.
In the above proof no such checks are required for output transitions; such checks are subsumed by checking static equivalence.

\section{Characterising open barbed bisimilarity for the applied $\pi$-calculus using an intuitionistic modal logic}\label{section:modal}

A modal logic characterises a bisimilarity whenever bisimilar processes satisfy the same formulae~\cite{hennessy85jacma}.
Recent insight~\cite{Ahn2017}, has shown that intuitionistic modal logics can be used to characterise bisimilarity congruences.
In this section, we consider how the modal logic called intuitionistic $\FM$~\cite{Horne2018}, characterising open barbed bisimilarity
lifts to the setting of the applied $\pi$-calculus.

A syntax for $\FM$ is presented below.
\begin{gather*}
\begin{array}{l}
\left.
\begin{array}{rlr}
\phi \Coloneqq& \ttt & \mbox{top} \\
          \mid& \fff & \mbox{bottom} \\
          \mid& M = N & \mbox{equality} \\
          \mid& \phi \wedge \phi & \mbox{conjunction} \\
          \mid& \phi \vee \phi & \mbox{disjunction} \\
          \mid& \phi \yields \phi & \mbox{implication} \\
\end{array}
\right\}
\mbox{intuitionistic logic}
\\
\left.
\begin{array}{rlr}
\hspace{17pt}
          \mid& \diam{\pi}\phi & \hspace{21pt} \mbox{diamond} \\
          \mid& \boxm{\pi}\phi & \mbox{box}
\end{array}
\right\}
\mbox{modalities}
\end{array}
\begin{array}{l}
\mbox{common abbreviations:}
\\[4pt]
\neg \phi \triangleq \phi \yields \fff
\\
M \not= N \triangleq \neg (M = N)
\end{array}
\end{gather*}
In the syntax above, observe connectives cover the standard conjunction, disjunction, implication, top and bottom of intuitionistic logic with equalities. 
The two modalities box and diamond range over all observable actions.
Observable actions $\pi$, as defined in Sec.~\ref{section:api-open}, range over $\tau$, bound outputs and free inputs. 
Intuitionistic negation is defined as a standard abbreviation.

\begin{figure}[h]
\begin{gather*}
\begin{array}{lcl}
A \vDash \ttt &&\mbox{always holds.} \\
\mathopen{\nu \vec{x}.}\left( \sigma \cpar P \right) \vDash M = N
&\mbox{iff}& M\sigma \mathrel{=_E} N\sigma  ~\mbox{and}~ \vec{x}\cap\left(\fv{M}\cup\fv{N}\right) = \emptyset
\\
A \vDash \phi_1 \land \phi_2 &\mbox{iff}&
  A \vDash \phi_1 ~\mbox{and}~ A \vDash \phi_2.
\\
A \vDash \phi_1 \lor \phi_1 &\mbox{iff}&
  A \vDash \phi_1 ~\mbox{or}~ A \vDash \phi_2.
\\
A \vDash \phi_1 \yields \phi_2 &\mbox{iff}&
\mbox{whenever } \reachable{A}{\sigma,\vec{x}.\rho}{A'},  ~\mbox{we have}~
A' \vDash \phi_1\sigma ~\mbox{implies}~ A' \vDash \phi_2\sigma.
\\
A \vDash \diam{\pi}\phi &\mbox{iff}&
  \mbox{there exists }B \mbox{ such that } A \lts{\pi} B ~\mbox{and}~ B \vDash \phi.
\\
A \vDash \boxm{\pi}\phi &\mbox{iff}&
\mbox{whenever } \reachable{A}{\sigma,\vec{x}.\rho}{A'}, \mbox{ and } 
    A' \lts{\pi\sigma} B,  ~\mbox{we have}~
    B \vDash \phi\sigma.
\end{array}
\end{gather*}
\vspace*{-.75em}
\caption{The semantics of intuitionistic modal logic $\FM$, adapted for the applied $\pi$-calculus.}\label{figure:FM}
\end{figure}

There are two differences between 
intuitionistic $\FM$ for the applied $\pi$-calculus, presented in Fig.~\ref{figure:FM}, and previous work on intuitionistic $\FM$ for the $\pi$-calculus with mismatch.
Firstly, any messages, not just variables, can appear in equalities and on labels.
Secondly, the definition is surprisingly more concise: there is no free output modality; and extruded private names are recorded in the extended processes so need not be accounted for in the satisfaction relation.

Soundness of quasi-open bisimilarity with respect to intuitionistic $\FM$ is established by a straightforward induction of the structure of formulae.
\begin{theorem}[soundness]\label{theorem:sound-modal}
If $P \sim Q$, then for all $\phi$,
$P \vDash \phi$ if and only if $Q \vDash \phi$.
\end{theorem}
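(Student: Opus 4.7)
The plan is to proceed by structural induction on $\phi$, exploiting the three defining properties of quasi-open bisimilarity: that it is open (Def.~\ref{def:close}), that bisimilar processes are statically equivalent, and that transitions are matched. Since the satisfaction relation is symmetric in its treatment of $P$ and $Q$ once we have $P \sim Q$, by symmetry of $\sim$ it suffices to prove one direction in each case.

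For the base cases, $P \vDash \ttt$ is trivial. For $P \vDash M = N$, write $P = \mathopen{\nu\vec{x}.}(\sigma \cpar P_0)$ and $Q = \mathopen{\nu\vec{y}.}(\theta \cpar Q_0)$; from $P \sim Q$ we have static equivalence, so $M\sigma =_E N\sigma$ iff $M\theta =_E N\theta$, which gives the biconditional (after handling the freshness side-condition on $\vec{x}\cup\vec{y}$ via $\alpha$-conversion). Conjunction and disjunction are immediate from the induction hypothesis.

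The crucial auxiliary fact, used for implication and box, is the following \emph{reachability lifting}: if $P \sim Q$ and $\reachable{P}{\sigma,\nu\vec{z}.\rho}{P'}$, then there exists $Q'$ with $\reachable{Q}{\sigma,\nu\vec{z}.\rho}{Q'}$ and $P' \sim Q'$. This holds because static equivalence forces the active-substitution domains of $P$ and $Q$ to coincide, so the freshness conditions of Def.~\ref{definition:reachable} on $\sigma$ and $\rho$ transfer verbatim from $P$ to $Q$; openness of $\sim$ then yields $P' \sim Q'$. With this in hand, for $\phi_1 \yields \phi_2$: suppose $P \vDash \phi_1 \yields \phi_2$ and $\reachable{Q}{\sigma,\nu\vec{z}.\rho}{Q'}$ with $Q' \vDash \phi_1\sigma$. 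Lift to obtain $P'$ with $\reachable{P}{\sigma,\nu\vec{z}.\rho}{P'}$ and $P' \sim Q'$; the induction hypothesis gives $P' \vDash \phi_1\sigma$, the premise on $P$ yields $P' \vDash \phi_2\sigma$, and a second application of the induction hypothesis delivers $Q' \vDash \phi_2\sigma$.

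For the diamond case, $P \vDash \diam{\pi}\phi$ supplies $P'$ with $P \lts{\pi} P'$ and $P' \vDash \phi$; since $\sim$ matches transitions, there is $Q'$ with $Q \lts{\pi} Q'$ and $P' \sim Q'$, whence $Q' \vDash \phi$ by induction. The box case is the main obstacle, since it combines both reachability and transition matching. Suppose $P \vDash \boxm{\pi}\phi$. To show $Q \vDash \boxm{\pi}\phi$, take any $\reachable{Q}{\sigma,\nu\vec{z}.\rho}{Q'}$ and any $B$ with $Q' \lts{\pi\sigma} B$. By the reachability lifting lemma obtain $P'$ with $\reachable{P}{\sigma,\nu\vec{z}.\rho}{P'}$ and $P' \sim Q'$; by transition matching (applied in the direction $Q' \to P'$, using symmetry of $\sim$) there is $A$ with $P' \lts{\pi\sigma} A$ and $A \sim B$. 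The hypothesis $P \vDash \boxm{\pi}\phi$ then forces $A \vDash \phi\sigma$, and the induction hypothesis transports this to $B \vDash \phi\sigma$, as required. This case is the delicate one because the quantifier alternation in the semantics of $\boxm{\pi}$ requires the reachability lemma and transition-matching to be invoked in the right order.
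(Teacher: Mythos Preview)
Your proposal is correct and matches the paper's approach exactly: the paper says only that soundness ``is established by a straightforward induction on the structure of formulae,'' and you have supplied precisely that induction with the appropriate auxiliary reachability-lifting step. One small point: your justification that ``static equivalence forces the active-substitution domains of $P$ and $Q$ to coincide'' is not quite right as stated---domain equality is rather an \emph{invariant} of any quasi-open bisimulation containing a pair of plain processes (each output step adds the same fresh alias to both sides), which is what actually licenses transferring the freshness side-conditions in Def.~\ref{definition:reachable} from $Q$ to $P$.
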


For what follows we restrict to finitary message theories.
\begin{definition}
An equational theory is \textit{finitary} whenever, 
for all messages $M$ and $N$, there is a finite set of substitutions $\left\{ \sigma_i \right\}_{i \in I}$ 
such that, for all $i\in I$, we have $M\sigma_i =_E N\sigma_i$ and, for all $\theta$ such that $M\theta =_E N\theta$, there exists $j \in I$ such that $\sigma_j \leq \theta$ (i.e., for some $\varsigma$, we have $\sigma_j\cdot\varsigma = \theta$).
\end{definition}
For example Dolev-Yao, and our example theory in Fig.~\ref{figure:messages} are finitary.
Message theories with an associative operator, such a string concatenation, are not finitary in general.
However, theories with an associative-commutative operator~\cite{Ayala-Rincon2017} are finitary.

The following contrapositive to completeness holds
under certain assumptions sufficient to ensure a finite formula can be constructed.
\begin{theorem}[distinguishing formulae]\label{theorem:complete-modal}
For fragments of the applied $\pi$-calculus that are decidable with a finitary equational theory,
if $P \not\sim Q$, there exists $\phi_L$ such that $P \vDash \phi_L$ and $Q \not\vDash \phi_L$, and also 
there exists $\phi_R$ such that $Q \vDash \phi_R$ and $P \not\vDash \phi_R$.
\end{theorem}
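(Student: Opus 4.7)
The plan is to prove the contrapositive via a Hennessy-Milner style construction: assuming no distinguishing formula exists for $(P,Q)$, build a quasi-open bisimulation containing $(P,Q)$ and invoke Theorem~\ref{theorem:openbarbed}. Concretely, define $\mathrel{\mathcal{R}}$ to relate extended processes that agree on every formula of intuitionistic $\FM$, and show that $\mathrel{\mathcal{R}}$ satisfies all clauses of Definition~\ref{definition:quasi-open}. Since the two distinguishing formulae $\phi_L$ and $\phi_R$ are obtained symmetrically, it suffices to construct $\phi_L$; the construction for $\phi_R$ is dual (and in the intuitionistic setting the two need not be negations of each other).

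The three closure conditions of a quasi-open bisimulation are handled as follows. \textbf{Static equivalence:} if $A$ and $B$ fail to be statically equivalent, then there exist recipes $M, N$ (fresh for the extruded variables) with $M\sigma =_E N\sigma$ but $M\theta \not=_E N\theta$; the atom $M = N$ is a distinguishing formula. \textbf{Transition mismatch:} suppose $A \lts{\pi} A'$ but no $\pi$-transition of $B$ matches, i.e., for each $B \lts{\pi} B_i$ we have $A' \not\sim B_i$. By induction hypothesis on the depth of processes (which is finite in the decidable fragment), we obtain distinguishing formulae $\phi_i$ with $A' \vDash \phi_i$ and $B_i \not\vDash \phi_i$; then $\diam{\pi}\bigwedge_i \phi_i$ distinguishes $A$ from $B$, provided the set of $B_i$ is finite. \textbf{Openness:} if $A$ and $B$ agree on all formulae but some reachable pair $A \leq_{\sigma,\nu\vec{z}.\rho} A'$ and $B \leq_{\sigma,\nu\vec{z}.\rho} B'$ disagree, witnessed by some $\phi$ with $A' \vDash \phi$ and $B' \not\vDash \phi$, then the implication $\psi \yields \phi$ recovers a distinguishing formula at the top level, where $\psi$ is a characterisation of the reachability witness $(\sigma,\nu\vec{z}.\rho)$ using equalities and, where needed, box modalities over the extension; this is precisely the reason the semantics of $\yields$ in Fig.~\ref{figure:FM} quantifies over reachability.

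The main obstacle is guaranteeing finiteness at every step of the induction so that the constructed $\phi_L$ is a genuine finite formula. Image-finiteness of the labelled transition system is needed for the diamond/box case; this follows from decidability of the fragment (which bounds both the number of sub-processes and the number of message patterns produced under a finitary equational theory). For the openness clause, one must show there are only finitely many ``most general'' reachability witnesses to consider, so that a single finite implication suffices rather than an infinite conjunction. This reduces to the finitary assumption on $=_E$: every equational condition between recipes has a finite complete set of unifiers, so mismatches and match guards can be discharged by enumerating finitely many substitutions together with the environment extensions required to make private names fresh. Once these finiteness facts are in place, a well-founded induction on depth combining the three cases above yields the desired distinguishing formula, and by exchanging the roles of $P$ and $Q$ (and using that intuitionistic negation is not involutive, so $\phi_R$ must be constructed, not derived from $\phi_L$) we obtain both $\phi_L$ and $\phi_R$.
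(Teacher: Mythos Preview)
Your proposal is correct and follows essentially the same approach the paper sketches: the paper defers the argument to the corresponding result for the $\pi$-calculus with mismatch~\cite{Horne2018} and notes only that static equivalence provides one additional reason processes may be distinguished, which you correctly handle via an equality atom $M = N$. Your treatment of the transition clause (diamond over a finite conjunction) and the openness clause (implication with equality premises, exploiting that the semantics of $\yields$ and $\boxm{\pi}$ quantify over reachability), together with your attention to the finiteness hypotheses (finitary theory for unifiers, decidability for image-finiteness), matches the intent of the paper's brief sketch and its remark that the proof ``yields an algorithm for generating distinguishing formulae from distinguishing strategies obtained from where the search for a bisimulation fails.''
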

The proof is similar to the proof for the $\pi$-calculus with mismatch except that there is an additional reason processes may be distinguished in the non-bisimulation strategy --- namely the processes are not statically equivalent.


Note Theorem~\ref{theorem:complete-modal} may hold under weaker conditions, lifting the restriction that we consider only fragments where quasi-open bisimilarity is decidable.
However, the above result is still useful, adequate for a large class of useful theories and processes.
The proof yields an algorithm for generating distinguishing formulae from distinguishing strategies obtained from where the search for a bisimulation fails.


\subsection{Examples of distinguishing formulae expressed using intuitionistic $\FM$.}
We present examples illustrating subtleties of the logic and also provide distinguishing formulae for examples discussed previously.

\paragraph{Subtle formulae requiring absence of law of excluded middle.}
In Section~\ref{section:compositionality}, we presented a distinguishing strategy for the following processes.
\[
A' \triangleq {\cout{a}{r}}
\qquad
\mbox{v.s.}
\qquad
C' \triangleq 
\texttt{if}\,x = \pk{k}\,\texttt{then}\,\cout{a}{\aenc{m}{\pk{k}}}\,\texttt{else}\,\cout{a}{r}
\]
A more subtle distinguishing strategy than that presented in Section~\ref{section:compositionality} also exists.
The more subtle strategy exploits the absence of the law of excluded middle as follows.
Observe $A' \lts{\co{a}(u)} {\sub{u}{r}} \cpar 0$, cannot be matched by any transition of $C'$ without additional assumptions about $x$ and $k$.
Thus for a distinguishing formula biased to the left we have the following.
\[
 A' \vDash \diam{a(u)}\ttt
\qquad
\mbox{and}
\qquad
 C' \not\vDash \diam{a(u)}\ttt
\]
Now observe that $C'$ can only perform an output transition either: under substitutions $\sigma$ such that $x\sigma =_E \pk{k}\mathclose{\sigma}$; or,
under substitutions $\rho$ and environments $\vec{n}$ such that $\vec{n} \vDash x\rho \not= \pk{k}\mathclose{\rho}$.
For a distinguishing formula biased to the right we write a box modality followed by the strongest post-condition after an output is performed, i.e., either $x = \pk{k}$ or $x \not= \pk{k}$, as follows.
\[
 A' \not\vDash \boxm{a(u)}\left( x = \pk{k} \vee x \not= \pk{k} \right)
\qquad
\mbox{and}
\qquad
 C' \vDash \boxm{a(u)}\left( x = \pk{k} \vee x \not= \pk{k} \right)
\]
Observe that in a classical setting neither of the above formulae would be distinguishing.
In a classical modal logic we have $\boxm{a(u)}\left( x = \pk{k} \vee x \not= \pk{k} \right)$ is a tautology, due to the law of excluded middle.
Thus the absence of the law of excluded middle for intuitionistic $\FM$ provides additional distinguishing power.

\paragraph{Is absence of law of excluded middle necessary?}
For the above example there are distinguishing formulae where the absence of the law of excluded middle is not necessary for the formula to be distinguishing.
For example, we have the following distinguishing formula biased to the right, which would also be distinguishing in a classical variant of $\FM$.
\[
A' \not\vDash x = \pk{k} \yields \diam{\co{a}(u)}\left( u = \aenc{m}{\pk{k}} \right)
\qquad
C' \vDash x = \pk{k} \yields \diam{\co{a}(u)}\left( u = \aenc{m}{\pk{k}} \right)
\]
There are however examples for which the intuitionistic nature of $\FM$ is necessary for a distinguishing formula to exist.
In the \textit{classical} setting of labelled bisimilarity the following are equivalent.
\[
\mbox{let}~
D' \triangleq 
\cout{a}{\aenc{\pair{m}{r}}{\pk{k}}}
~\mbox{in}
\quad
R \triangleq a(x).C' + a(x).A' + a(x).D'
\quad
\mbox{v.s.}
\quad
S \triangleq a(x).A' + a(x).D'
\]
Intuitively, both processes $R$ and $S$ above model servers that either behave as $A'$ or $D'$ regardless of the input.
In addition, process $R$ above also has the option of receiving an input then making a decision based on the input whether to behave as $A'$ or as $D'$.
Classically, $R$ and $S$ are equivalent, since upon receiving an input corresponding to the prefix of $a(x).C'$ we have decided immediately that either $x = \pk{k}$ or $x \not= \pk{k}$ holds; hence the appropriate branch in the first process is taken.

However, in the \textit{intuitionistic} setting, as required for quasi-open bisimilarity, the above processes are distinguished. The distinguishing strategy is as follows: $a(x).C' + a(x).A' + a(x).D' \lts{a\,x} C'$ can only be matched by either $a(x).A' + a(x).D' \lts{a\,x} A'$ or $a(x).A' + a(x).D' \lts{a\,x} D'$, and neither $C' \sim A'$ nor $C' \sim D'$ hold.
This strategy leads to the following distinguishing formulae in intuitionistic $\FM$ biased to process $R$ and $S$ respectively.
\[
R \vDash \diam{a\,x}\boxm{a(u)}\left( x = \pk{k} \vee x \not= \pk{k} \right)
\qquad\mbox{and}
\qquad
S \vDash \boxm{a\,x}\diam{a(u)}\ttt
\]
The \textit{a posteriori} reason for not assuming the law of excluded middle everywhere is that this assumption is necessary to characterise a bisimilarity congruence.
An \textit{a priori} justification for examples such as the above is less obvious.
We attempt an explanation as follows. In the above example, variable $x$ on the input label is not a ground term hence has not been fully read. Thus the program can proceed without fully reading $x$, lazily reading any sub-term only when required (in this case, to determine whether or not $x = \pk{k}$).
By analogy, when you download this paper you do not read every character before proceeding with your next task, but in the future you may study a detail of Theorem~\ref{theorem:openbarbed} (efficient rather than lazy would be a better term).

In contrast, if we ground the names as follows the following are bisimilar, even in the intuitionistic setting of quasi-open bisimilarity.
\[
\mathopen{\nu k.\cout{a}{\pk{k}}.}\left( a(x).C' + a(x).A' + a(x).D' \right)
\quad\sim\quad
\mathopen{\nu k.\cout{a}{\pk{k}}.}\left( a(x).A' + a(x).D' \right)
\]

\paragraph{Static equivalence examples.}

The processes $\nu m, n. {\cout{a}{m}.\cout{a}{n}} \mathrel{\not\lsim} \nu n.\cout{a}{n}.\cout{a}{\hash{n}}$ are not open bisimilar.
A distinguishing strategy is that both processes can perform output transitions
$\co{a}(u)$ and $\co{a}(v)$ reaching the pair of processes 
$\mathopen{\nu m, n.}\left( \sub{u,v}{m,n} \cpar 0 \right) \not\lsim \mathopen{\nu n.}\left(\sub{u,v}{n,\hash{n}} \cpar 0\right)$. As discussed in Section~\ref{section:static}, these processes are not statically equivalent; with distinguishing messages $v$ and $\hash{u}$.
Thereby we can construct the following distinguishing formulae biased to each respective process.
\[
\nu m, n. \cout{a}{m}.\cout{a}{n} \vDash \diam{\co{a}(u)}\diam{\co{a}(v)}\left(v \not= \hash{u}\right)
\quad
\nu n.\cout{a}{n}.\cout{a}{h(n)} \vDash \boxm{\co{a}(u)}\boxm{\co{a}(v)}\left(v = \hash{u}\right)
\]

Recall from Section~\ref{section:api-open}, we have $\nu x.{\cout{a}{\aenc{x}{z}}} \mathrel{\not\lsim} \nu x.\cout{a}{\aenc{\pair{x}{y}}{z}}$.
The distinguishing strategy, described previously, involved substitution $\sub{z}{\pk{w}}$, and distinguishing recipes $\snd{\adec{u}{w}}$ and $y$, each of which are recorded in the following distinguishing formula biased to the right.
\[
\nu x.\cout{a}{\aenc{\pair{x}{y}}{z}} \vDash \diam{\co{a}(u)}\left(z = \pk{w} \yields \snd{\adec{u}{w}} = y\right)
\]
From the same strategy, we can construct the following distinguishing formula biased to the left.
\[
\nu x.\cout{a}{\aenc{x}{z}} \vDash \boxm{\co{a}(u)}\left(\snd{\adec{u}{w}} \not= y\right)
\]
Sub-formula $z = \pk{w} \yields \adec{u}{k} = x \yields x = y$ requires explanation. After applying substitution $\sub{u}{\aenc{y}{z}}\cdot \sub{z}{\pk{k}}$ to $\adec{u}{k} =_E x$, resulting messages $\adec{\aenc{y}{\pk{k}}}{k}$ and $x$ are not related by the equational theory. However, there is a strongest postcondition $x=y$ enabling this equation since $\adec{\aenc{y}{\pk{k}}}{k} =_E y$. We write the strongest postcondition after the distinguishing recipe $\adec{u}{k} = x$.

\paragraph{Mobility example with messages as channels.}
For the mobility example,
$\nu m.\cout{a}{\pair{m}{n}}.m(x)
\not\lsim
\nu m.\cout{a}{\pair{m}{n}}$ from Section~\ref{section:api-open},
the distinguishing strategy presented previously yields the following formula biased to the respective processes.
\[
\nu m.\cout{a}{\pair{m}{n}}.m(x)
\vDash
\diam{\co{a}(u)}\diam{\fst{u}\,x}\ttt
\qquad
\nu m.\cout{a}{\pair{m}{n}}
\vDash
\boxm{\co{a}(u)}\boxm{\fst{u}\,x}\fff
\]
Notice that message $\fst{u}$ is used as a channel name for the second output action.

\section{Examples of the Theory Applied to Security and Privacy Properties}
\label{section:privacy}

We illustrate here the power of the theory developed on two more substantial examples.
The first is an established privacy example, demanding mismatch.
The second is an example involving a larger message theory (blind signatures). Neither scenario could previously be verified using a bisimilarity congruence in the literature, such as open bisimilarity for the spi-calculus without mismatch.

\subsection{Privacy property for which mismatch is necessary.}

We provide a more elaborate version of the running example of a private server, adapted from the literature~\cite{Abadi2004,Cheval2017}.
In this protocol, there are two servers: the first responds in a way only the owner of private key $a$ can detect; while the second only responds to the owner of private key $b$.
The aim of the protocol is to ensure that an external observer cannot determine the intended recipient of data.
The scenario can be modelled by the following processes.
\[
P \triangleq
\match{\snd{\adec{y}{c}} = \pk{a}}
\nu n.\cout{x}{\aenc{\pair{\fst{\adec{y}{c}}}{\pair{n}{\pk{c}}}}{\pk{a}}}
\]
\[
\context{\ \cdot\ } \triangleq \nu a,b,c.\cout{x}{\pk{a}}.\cout{x}{\pk{b}}.\cout{x}{\pk{c}}.\mathclose{\left\{\ \cdot\ \right\}}
\]
\[
Broken\_Server \triangleq
\begin{array}[t]{l}
\context{x(y).P}
\end{array}
\qquad
\mbox{and}
\qquad
Broken\_Server' \triangleq
\begin{array}[t]{l}
\context{x(y).P\sub{a}{b}}
\end{array}
\]
Notice $Broken\_Server$ and $Broken\_Server'$ differ only by the name $a$ or $b$ respectively in sub-process $P$.
The former, after the input action, will only respond to a message containing public key $\pk{a}$ with a message readable by the owner of secret key $a$; while the latter only responds to a message containing $\pk{b}$, producing a message readable by the owner of secret key $b$.

$Broken\_Server$ and $Broken\_Server'$ are not quasi-open bisimilar,
as witnessed by the following distinguishing formula biased to $Broken\_Server$.
\[
Broken\_Server \vDash \diam{\co{x}(u)}\diam{\co{x}(v)}\diam{\co{x}(w)}\diam{x\,\aenc{\pair{z}{u}}{w}}\diam{\co{x}(s)}\ttt
\]
\[
Broken\_Server' \not\vDash \diam{\co{x}(u)}\diam{\co{x}(v)}\diam{\co{x}(w)}\diam{x\,\aenc{\pair{z}{u}}{w}}\diam{\co{x}(s)}\ttt
\]
This indicates that the protocol does not preserve the privacy of the client to whom the server uniquely responds. By following the distinguishing strategy given by the formula, an attacker can distinguish the server responding to $\pk{a}$ from the server responding to $\pk{b}$.

The privacy of the above protocol can be fixed by inserting dummy messages, so the attacker cannot distinguish between a response intended for the owner of secret key $a$ and a response intended for the owner of $b$.
To prove this privacy property, we establish the following two processes are quasi-open bisimilar,
where $P$ is as defined above.
\[
Fixed\_Server \triangleq
\context{
\mathopen{x(y).}\left(P + \match{\snd{\adec{y}{c}} \not= \pk{a}}\nu m. \cout{x}{m} \right)
}
\]
\[
Fixed\_Server' \triangleq
\context{
\mathopen{x(y).}\left(P\sub{a}{b} + \match{\snd{\adec{y}{c}} \not= \pk{b}}\nu m. \cout{x}{m} \right)
}
\]
After three sends,
$\co{x}(u)$, $\co{x}(v)$, and $\co{x}(w)$,
we reach the following extended processes.
\[
A \triangleq 
\mathopen{\nu a,b,c.}
\left(
\sub{u,v,w}{\pk{a},\pk{b},\pk{c}}
\cpar
\mathopen{x(y).}\left(P + \match{\snd{\adec{y}{c}} \not= \pk{a}}\nu m. \cout{x}{m} \right)
\right)
\]
\[
B \triangleq 
\mathopen{\nu a,b,c.}
\left(
\sub{u,v,w}{\pk{a},\pk{b},\pk{c}}
\cpar
\mathopen{x(y).}\left(P\sub{a}{b} + \match{\snd{\adec{y}{c}} \not= \pk{b}}\nu m. \cout{x}{m} \right)
\right)
\]
There are three cases to consider at this point triggering different behaviours.
\begin{itemize}
\item Message $\aenc{\pair{z}{u}}{w}$ is input on channel $x$. Since $u$ is an alias for $\pk{a}$ this represents trying to trigger the server to respond to the owner of secret key $a$ (as defined inside $P$).

\item Message $\aenc{\pair{z}{v}}{w}$ is input on channel $x$. Since $v$ is an alias for $\pk{b}$ this represents trying to trigger the server to respond to the owner of secret key $b$ (as in $P\sub{a}{b}$).

\item Any other message is input on channel $x$. 
\end{itemize}

In the first case above, after input $\aenc{\pair{z}{u}}{w}$ 
the match guard in sub-process $P$ is triggered.
This is because after the input the match guard is $\snd{\adec{\aenc{\pair{z}{u}}{w}}{c}} = \pk{a}$ under substitution $\sub{u,v,w}{\pk{a},\pk{b},\pk{c}}$, amounting to 
$\snd{\adec{\aenc{\pair{z}{\pk{a}}}{\pk{c}}}{c}} =_E \pk{a}$, which holds.
For $B$, the mismatch guard is triggered, 
since $a,b,c \vDash \pk{a} \not= \pk{b}$ holds. 
Thereby, both $A$ and $B$ can perform an output $\co{x}(t)$ to reach the following extended processes.
\[
A \lts{x\,\aenc{\pair{z}{u}}{w}}
\lts{\co{x}(t)}
\mathopen{\nu a,b,c,n.}
\left(
\sub{u,v,w,t}{\pk{a},\pk{b},\pk{c},\aenc{\pair{z}{\pair{n}{\pk{c}}}}{\pk{a}}}
\cpar
0
\right)
\]
\[
B \lts{x\,\aenc{\pair{z}{u}}{w}}
\lts{\co{x}(t)}
\mathopen{\nu a,b,c,m.}
\left(
\sub{u,v,w,t}{\pk{a},\pk{b},\pk{c},m}
\cpar
0
\right)
\]
The above extended processes are statically equivalent, even
under all substitutions fresh for $\left\{u,v,w,a,b,c\right\}$, as required for an open relation.
The private key $a$ and nonce $n$ are never revealed; hence $\aenc{\pair{z}{\pair{n}{\pk{c}}}}{\pk{a}}$ cannot be decrypted or reconstructed by an attacker, and thereby cannot be distinguished from a random cyphertext represented by $m$.

The second case is symmetric to the first case. Simply swap $a$ and $b$ in the argument above.

In the third case any other input, say $M$,
triggers the mismatch in both branches,
since, for any messages $M$ other that those equivalent to $\pk{a}$ or $\pk{b}$, we have 
$a,b,c \vDash M \not= \pk{a}$
and
$u,v,w \vDash M \not= \pk{b}$.
Thereby there is a quasi-open bisimulation $\mathcal{R}$ such that $A \mathrel{\mathcal{R}} B$, from which $Fixed\_Server \sim Fixed\_Server'$ follows.

Future work will explain larger privacy examples involving mismatch, that can be analysed using quasi-open bisimilarity. For example, the established attack on unlinkability of the French e-passport~\cite{Arapinis2010} can be discovered. Note, contrary to claims in that paper, there is an attack on the UK e-passport that is discovered quickly using quasi-open bisimilarity, and is confirmed by equivalence checking tools based on trace equivalence~\cite{Cheval2018}.


\subsection{Example using an extended message theory, featuring blind signatures.}

Now extend the message theory. Extend messages with signatures $\sign{M}{N}$.
Firstly, observe the following are quasi-open bisimilar.
\[
R \triangleq \nu n.\cout{a}{n}.a(x).\nu k.\cout{a}{\sign{x}{k}}.a(y).\match{
  y = \sign{n}{k}
}\tau
\]\[
S \triangleq\nu n.\cout{a}{n}.a(x).\nu k.\cout{a}{\sign{x}{k}}.a(y).\match{
  y = \sign{n}{k}
}\match{x = n}\tau
\]
Exploiting the compositionality of quasi-open bisimilarity, it is sufficient to check the following sub-processes are quasi-open bisimilar.
\[
\nu k.\cout{a}{\sign{x}{k}}.a(y).\match{
 y = \sign{n}{k}
}\tau
\]\[
\nu k.\cout{a}{\sign{x}{k}}.a(y).\match{
  y = \sign{n}{k}
}\match{x = n}\tau
\]
Since quasi-open bisimilarity is closed under context $\mathopen{\nu n.\cout{a}{n}.a(x).}\left\{\ \cdot\ \right\}$, it follows that $R \sim S$.

Now, extend further the example message language to model blind signatures~\cite{Chaum1983}. We have \textit{blinding} $\blind{M}{N}$, \textit{unblinding} $\unblind{M}{N}$ and the following additional equation.
\[
 \unblind{\sign{\blind{M}{N}}{K}}{N} = \sign{M}{K}
\]
Under this extended theory with blind signatures the above processes $R$ and $S$ are no longer quasi-open bisimilar.
A distinguishing strategy is explained by the following distinguishing formula biased to process $R$.
\[
R
\vDash
\diam{\co{a}(u)}\diam{a\,\blind{u}{z}}\diam{\co{a}(v)}\diam{a\,\unblind{v}{z}}\diam{\tau}\ttt
\]
The above formula describes strategies by which the above processes can be distinguished under a blind signature theory.
To see why, observe after four transitions we reach the following extended process, which is expected to satisfy formula $\diam{\tau}\ttt$.
\[\small
\mathopen{\nu k, n.}\left( \sub{u,v}{n,\sign{\blind{u}{z}}{k}} \cpar \match{
  \unblind{\sign{\blind{n}{z}}{k}}{z} = \sign{n}{k}
}\tau \right)
\]
The guard is satisfied by using the equation for blind signatures,
thus the $\tau$-transition is enabled, as required.
The above example, illustrates a \textit{signature forgery attack}, where two distinct signatures, valid for distinct messages are produced from a single signature.



An approach to avoiding signature forgery is to enforce a hash-and-sign approach to signatures.
This is illustrated by the following example where a hash function is inserted.
The following processes are quasi-open bisimilar, hence the first process cannot pass the test of providing two distinct messages signed with key $k$.
\[
\begin{array}{l}
\nu k. a(x).\cout{a}{\sign{x}{k}}.a(y).a(z).\match{y = \sign{\hash{m}}{k}}\match{z = \sign{\hash{n}}{k}}\match{m \not= n}\tau
\\
\qquad\qquad\qquad\qquad\qquad\qquad\qquad\qquad\qquad\qquad\qquad\qquad\quad
\lsim
\nu k. a(x).\cout{a}{\sign{x}{k}}.a(y).a(z)
\end{array}
\]
The key observation for verifying the above equivalence is, after four actions $a\,x$, $\co{a}(u)$, $a\,y$, $a\,z$ we reach the following extended process.
\[
\mathopen{\nu k.}
\left(
\sub{u}{\sign{x}{k}}
\cpar
\match{y = \sign{\hash{m}}{k}}\match{z = \sign{\hash{n}}{k}}\match{m \not= n}\tau
\right)
\quad
\mbox{v.s.}
\quad
\mathopen{\nu k.}
\left(
\sub{u}{\sign{x}{k}}
\cpar
0
\right)
\]
Regardless of what values for $x$, $y$ and $z$ there is no permitted choice such that all match and mismatch guards can be satisfied.
Any solution for passing the match guards forces $m = n$ to hold; hence the $\tau$ transition cannot be enabled.
Note this can be verified by solving a system of what are known as \textit{deducibility constraints}\footnote{Deducibility constraints $\vdash x$ and $\sign{x}{k} \vdash \sign{\hash{m}}{k}$ and $\sign{x}{k} \vdash \sign{\hash{n}}{k}$, where $k$ is a private name.} generated from the order messages are communicated, as explained in related work~\cite{Bursuc2014}.


\section{Comparison to Related Work on Labelled Bisimilarity}\label{section:related}



The bisimilarities presented so far are strong; in contrast to weak semantics that allows internal $\tau$-transitions to be ignored.
All previous work on bisimilarity for the applied $\pi$-calculus concerns weak semantics, mainly due to the hybrid reduction/label transition style used in the original paper, that does not permit a strong semantics to be expressed.

To define a weak semantics modify the labelled transitions in Fig.~\ref{figure:active} such that rules for choice, match and mismatch are replaced by the following direct definition of \texttt{if-then-else}.
\[
\begin{prooftree}
M =_E N
\justifies
\env \colon \texttt{if}\,M = N\,\texttt{then}\,P\,\texttt{else}\,Q
\lts{\tau} P
\end{prooftree}
\qquad
\qquad
\begin{prooftree}
\env \vDash M \not= N
\justifies
\env \colon \texttt{if}\,M = N\,\texttt{then}\,P\,\texttt{else}\,Q
\lts{\tau} Q
\end{prooftree}
\]
For the weak variant of open barbed bisimilarity, each $\tau$-transition can be matched by zero or more $\tau$-transitions;
as represented by $Q \Lts{} Q'$ in the following definition.
\begin{definition}[weak open barbed bisimilarity]\label{def:weak}
Define $\Barb{R}{M}$ whenever there exists $R'$ such that $R \Lts{} R'$ and $\barb{R'}{M}$.
A weak open barbed bisimulation $\rel$ is a symmetric relation over processes
such that whenever $P \rel Q$ holds 
the following hold:
\begin{itemize}
\item For all contexts $\context{\ \cdot\ }$, $\context{P} \rel \context{Q}$.
\item If $\barb{P}{M}$ then $\Barb{Q}{M}$.
\item If $P \lts{\tau} P'$, there exists $Q'$ such that $Q \Lts{} Q'$ and $P' \rel Q'$ holds.
\end{itemize}
Weak open barbed bisimilarity $\approxeq$ is the greatest weak open barbed bisimulation.
\end{definition}
Weak quasi-open bisimilarity differs from quasi-open bisimilarity with respect to the rules for \texttt{if-then-else},
and in the use of weak transitions. In the following $B \Lts{\pi} B'$ permits zero or more $\tau$-transitions before and after action $\pi$, and in the case $\pi = \tau$, $B \Lts{\pi} B'$ is $B \Lts{} B'$, defined above, thereby permitting one $\tau$-transition to be matched by zero transitions.
\begin{definition}[labelled bisimilarity]\label{definition:labelled}
A symmetric relation between extended processes $\mathrel{\mathcal{R}}$ is a labelled bisimulation whenever,
if $A \mathrel{\mathcal{R}} B$ then the following hold:
\begin{itemize}
\item $A$ and $B$ are statically equivalent.
\item If $A \lts{\pi} A'$ there exists $B'$ such that $B \Lts{\pi} B'$ and $A' \mathrel{\mathcal{R}} B'$.
\end{itemize}
Weak quasi-open bisimilarity $\approx$ is the greatest \textbf{open} labelled bisimulation (Def.~\ref{def:close}).
Labelled bisimilarity $\sim_\ell$ is the greatest labelled bisimulation, where the inequality between messages in the \texttt{else} branch is evaluated classically rather than intuitionistically.
\end{definition}
Theorem~\ref{theorem:openbarbed} extends to the weak case: \textit{weak open barbed bisimilarity} coincides with \textit{weak quasi-open bisimilarity}.
Changes to the proof to handle a weak semantics are minimal (a few extra observables are required in the proof of completeness). 
Note the above rules for \texttt{if-then-else} avoid well known compositionality problems with respect to choice in the weak setting.
Since $0 \mathrel{\approxeq} \tau.0$, and weak open barbed bisimilarity is a congruence, we have $\texttt{if}\,M = N\,\texttt{then}\,P\,\texttt{else}\,0 \mathrel{\approxeq} \texttt{if}\,M = N\,\texttt{then}\,P\,\texttt{else}\,\tau.0$.
However, this congruence property would fail for the weak variant of quasi-open bisimilarity, if, instead of the rules for \texttt{if-then-else} above, ``reactive'' choice in Fig.~\ref{figure:active} was employed.
Consequently, the above rules are essential for the soundness of weak quasi-open bisimilarity with respect to weak open barbed bisimilarity.

Weak quasi-open bisimilarity is clearly sound with respect to \textit{labelled bisimilarity}.
The only difference between labelled bisimilarity and weak quasi-open bisimilarity is the keyword \textit{open} from the definition above, and the use of classical negation when interpreting guards. Thereby labelled bisimilarity equates strictly more processes than weak quasi-open bisimilarity.
Indeed, in the classical setting of labelled bisimilarity, all free variables in processes are treated as ground terms; which makes redundant the set of names in the environment of the labelled transition system.
Thus the only difference compared to weak quasi-open bisimulation is that labelled bisimilarity is not preserved under \textit{reachability} (Def.~\ref{definition:reachable}).

\textit{Labelled bisimilarity} has been proven to coincide with \textit{observational equivalence}~\cite{Abadi16}.
Observational equivalence is a constrained version of Definition~\ref{def:weak} where contexts are restricted to ``evaluation contexts'', essentially of the form $\left\{\ \cdot\ \right\} \cpar P$.
Hence, by definition, weak open barbed bisimilarity is contained in observational equivalence.
Thereby, as an immediate consequence of the coincidence of labelled bisimilarity and observational equivalence, we have that weak open barbed bisimilarity is (strictly) finer than labelled bisimilarity.
\begin{corollary}[soundness w.r.t.\ labelled bisimilarity]
If $P \approxeq Q$, then $P \mathrel{\sim_\ell} Q$.
\end{corollary}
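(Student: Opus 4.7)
The plan is to chain two inclusions: weak open barbed bisimilarity is contained in observational equivalence, and observational equivalence coincides with labelled bisimilarity (a result from~\cite{Abadi16}).

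First I would unpack the definitions. Observational equivalence, as recalled in the paragraph preceding the corollary, is the greatest symmetric relation closed under weak barbs, closed under $\tau$-transitions (up to weak $\tau$-closure), and closed under \emph{evaluation contexts} of the form $\{\,\cdot\,\} \cpar P$. Weak open barbed bisimilarity, by Definition~\ref{def:weak}, imposes the same weak barb and $\tau$-closure conditions but closes under \emph{all} contexts $\context{\,\cdot\,}$. Since the class of evaluation contexts is a subclass of all contexts, every weak open barbed bisimulation is automatically a candidate observational equivalence relation. Hence $\approxeq\ \subseteq\ $ observational equivalence.

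Next I would invoke the result of~\cite{Abadi16} that labelled bisimilarity $\sim_\ell$ coincides with observational equivalence in the applied $\pi$-calculus. Combining with the inclusion above gives $P \approxeq Q$ implies $P \mathrel{\sim_\ell} Q$, which is exactly the statement.

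The only point that needs any care is checking that the flavour of barbs and $\tau$-transitions used in Def.~\ref{def:weak} matches the ones used in the definition of observational equivalence in~\cite{Abadi16}. Both are defined on extended processes, use weak barbs in the sense of being reachable after a sequence of $\tau$-transitions, and use the same underlying reduction relation (induced by the labelled transition system in Fig.~\ref{figure:active} restricted to $\tau$). Hence the containment is immediate and no translation between formalisms is required. I expect no real obstacle here: the corollary is essentially a bookkeeping consequence of the definitions together with the cited coincidence result, and the strictness of the inclusion (noted in the surrounding discussion) follows from the examples in Section~\ref{section:compositionality}, where a pair of processes equated by labelled bisimilarity fails to be quasi-open bisimilar under a substitution instance, hence fails to be weak open barbed bisimilar by Theorem~\ref{theorem:openbarbed} (extended to the weak case).
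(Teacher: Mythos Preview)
Your proposal is correct and follows essentially the same argument as the paper: weak open barbed bisimilarity is contained in observational equivalence because the latter closes only under evaluation contexts (a subclass of all contexts), and observational equivalence coincides with labelled bisimilarity by~\cite{Abadi16}. The paper presents exactly this chain of inclusions in the text immediately preceding the corollary, with the same remark on strictness.
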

Of course, by the weak variant of Theorem~\ref{theorem:openbarbed}, this means any property proven using weak quasi-open bisimilarity is also valid for labelled bisimilarity.
Most notions of bisimilarity previously introduced for cryptographic calculi coincide with observational equivalence~\cite{abadi98njc,abadi01popl,boreale02sjc,borgstrom05mscs,Bengtson2009,Borgstrom2009,Liu2010,Johansson2010,Johansson2012,Abadi16}.
Intermediate results on symbolic bisimulations~\cite{borgstrom04concur,Delaune2007} also closely approximate observational equivalence.
Thus this work, respects the aims of all such papers, without loss of power for capturing security and privacy properties; while, in addition, providing the benefits of a bisimilarity congruence.

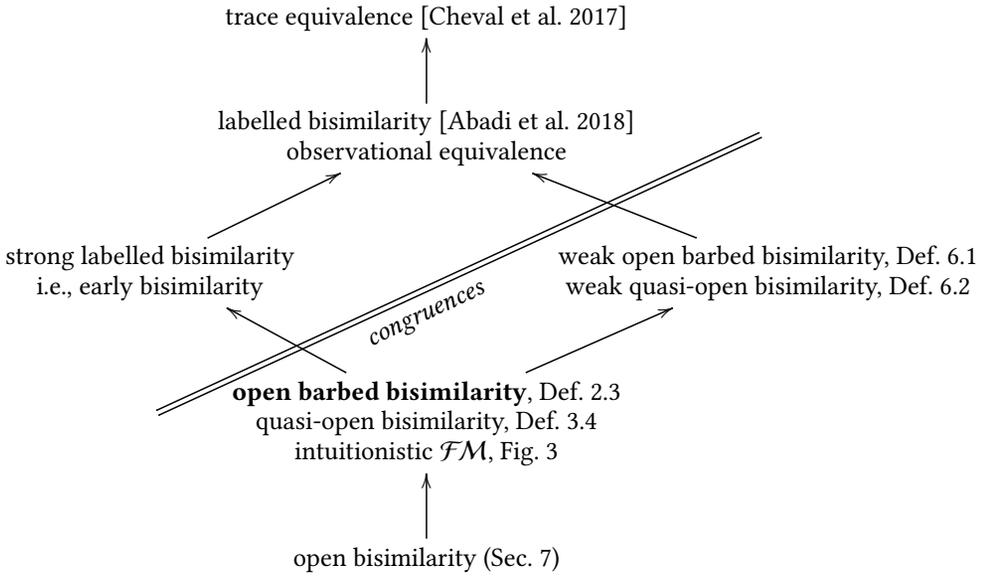
\begin{figure}
\xymatrix@C=-35pt{
& \txt{trace equivalence~\cite{Cheval2017}} \\
& \ar[u]\txt{labelled bisimilarity~\cite{Abadi16} \\ observational equivalence} & \\
\ar[ur]\txt{strong labelled bisimilarity \\
i.e.,\ early bisimilarity
}
&&
\ar[ul]\txt{weak open barbed bisimilarity, Def.~\ref{def:weak} \\ weak quasi-open bisimilarity, Def.~\ref{definition:labelled}}
\\
\ar@{=}^{\raisebox{-30pt}[-100pt][-100pt]{\rotatebox{24.8}{\textit{congruences}}}}[uurr]
&\ar[ul]\ar[ur] \txt{\textbf{open barbed bisimilarity}, Def.~\ref{def:open-barbed}\\
quasi-open bisimilarity, Def.~\ref{definition:quasi-open}
\\ intuitionistic $\FM$, Fig.~\ref{figure:FM} 
} \\
& \ar[u]\txt{open bisimilarity (Sec.~\ref{section:open})} 
&
}
\caption{Part of the spectrum of bisimilarities surrounding open barbed bisimilarity.}\label{fig:spectrum}
\end{figure}

The relations described in this subsection are summarised in Fig.~\ref{fig:spectrum}.
Observe the weak and strong variants of open barbed bisimilarity, below the double line are congruences; in contrast to labelled bisimilarity.
Note strong labelled bisimilarity is obtained from Def.~\ref{definition:quasi-open} by removing keyword \textit{open}.

Related work~\cite{Huttel2007} logically characterises observational equivalence using a classical modal logic.
The classical modal logic provided in that work is quite different from the classical variant of $\FM$, obtained by removing the requirement that implication and box are preserved under reachability.
Other work~\cite{Parrow2017}, introduces a more abstract classical modal logic characterising weak bisimilarities, that may be instantiated for the applied $\pi$-calculus~\cite{Parrow2015}.
The intuitionistic modal logic $\FM$, in Sec.~\ref{section:modal}, complements those papers by characterising open barbed bisimilarity for the applied $\pi$-calculus.

Since concepts such as static equivalence are standard, many aspects of existing implementations of equivalence checkers can be reused. Future work includes adapting existing decision procedures~\cite{Tiu2010CSF,Cortier2017} to open barbed bisimilarity for the applied $\pi$-calculus.  We believe open barbed bisimilarity can be used to tame the problem with trace equivalence, where unnecessarily many constraints are generated~\cite{Cheval2017}. In related process models~\cite{Paige1987,Pistore1996} bisimilarity is demonstrably more efficient than trace equivalence. Open barbed bisimilarity complements existing work on partial-order reduction~\cite{Baelde2015} in this direction.

\section{Open bisimilarity: lazier approach; too fine for privacy properties}
\label{section:open}

Open bisimilarity for the applied $\pi$-calculus is referred to in the introduction and conclusion.
Open bisimilarity for applied $\pi$-calculus has not been defined in the literature, so we provide a short explanation.
This appendix clarifies why open bisimilarity is not the main subject of the text, and instead we focus on \textit{open barbed bisimilarity}.

By shifting to a late labelled transition system, we know how to define open bisimilarity for the applied $\pi$-calculus.
Previous definitions of open bisimilarity for a cryptographic calculus, the \textit{spi-calculus}, have complex definitions~\cite{briais06entcs,tiu07aplas}.
The approach of the current paper requires less machinery since the applied $\pi$-calculus is more abstract than the spi-calculus; hence abstracts away from details regarded as implementation concerns.
Open bisimilarity is defined in terms of a late labelled transition system, differing for the rules presented in Fig.~\ref{fig:late}.



\begin{figure}[h]
\[
\begin{gathered}
\begin{array}{c}
\begin{prooftree}
\justifies
\mathopen{\cin{M}{x}.}P \lts{M(x)} {P}
\using
\mbox{\textsc{Inp}}
\end{prooftree}
\qquad
\begin{prooftree}
 P \lts{\co{M}(x)} \nu \mathopen{\vec{z}.}\left(\sub{x}{N} \cpar P'\right)
\quad
 Q \lts{M(x)} Q' 
\quad
 \vec{z} \cap \fv{Q} = \emptyset
\justifies
 {P \cpar Q}\lts{\tau}{\nu \vec{z}.\left(P' \cpar Q'\sub{x}{N}\right)} 
\using
\mbox{\textsc{Close-l}}
\end{prooftree}
\\[15pt]
\begin{prooftree}
 P \lts{\co{M}(x)} \mathopen{\nu \vec{z}.}\left(\sub{x}{N} \cpar Q\right)
\qquad
 P \lts{M(x)} R
\qquad
 \vec{z} \cap \fv{P} = \emptyset
\justifies
 \bang P \lts{\tau} \mathopen{\nu \vec{z}.}\left( Q \cpar R\sub{x}{N}  \cpar \bang P \right)
\using
\mbox{\textsc{Rep-close}}
\end{prooftree}
\end{array}
\end{gathered}
\]
\caption{Rules of an open late labelled transition system, plus symmetric rules for parallel composition.
Note, for the fragment without mismatch, transition rules do not carry an environment.
}\label{fig:late}
\end{figure}
Instead of sets of private names, we employ histories representing the order in which messages are sent and received.
Respectful substitutions, defined over histories, are key to the lazy approach of open bisimilarity.
\begin{definition}[histories]\label{def:open-respects}
A \textit{history} is defined by grammar $h \Coloneqq \epsilon \mid h \cdot x^o \mid h \cdot M^i$.
Substitution $\sigma$ respects history $h$,
whenever for all histories $h'$ and $h''$
such that $h = h' \cdot x^o \cdot h''$, 
$x\sigma = x$,
and $y \in \fv{h'}$ implies $x \not\in y\sigma$.
\end{definition}

For clarity, we restrict to the fragment without mismatch. 
Hence the late labelled transitions in Fig.~\ref{fig:late} do not need to carry around environment information to resolve mismatches. 
This fragment is adequate for this discussion on related work, since open bisimilarity for the spi-calculus as only previously defined without mismatch.

\begin{definition}[open bisimilarity]\label{definition:open}
A symmetric relation indexed by a history $\mathrel{\mathcal{R}}$ is an open bisimulation whenever:
if $A \mathrel{\mathcal{R}}^{h } B$ the following hold, for $x$ fresh for $A$, $B$, $h$:
\begin{itemize}
\item $A$ and $B$ are statically equivalent.
\item Whenever $\sigma$ respects $h$, we have $A\sigma \mathrel{\mathcal{R}}^{h\sigma} B\sigma$.
\item If $A \lts{\tau} A'$ there exists $B'$ such that $ B \lts{\tau} B'$ and $A' \mathrel{\mathcal{R}}^{h } B'$.
\item If $A \lts{\co{M}(x)} A'$, for some $B'$, we have $ B \lts{\co{M}(x)} B'$
and $A' \mathrel{\mathcal{R}}^{h \cdot x^o} B'$.
\item If $A \lts{{M}(x)} A'$, for some $B'$, we have $B \lts{{M}(x)} B'$
and $A' \mathrel{\mathcal{R}}^{h \cdot x^i} B'$.
\end{itemize}
Open bisimilarity $\lsim_o$ is defined such that $P \mathrel{\lsim_o} Q$ holds whenever there exists open bisimulation $\mathcal{R}$
such that $P \mathrel\mathcal{R}^{x_1^i\cdot \hdots x_n^i} Q$ holds, where $\fv{P}\cup\fv{Q} \subseteq \left\{x_1, \hdots x_n\right\}$.
\end{definition}

Open bisimilarity is a congruence relation. The proof follows the same pattern as Theorem~3.6.
\begin{theorem}[congruence]
If $P \mathrel{\sim_o} Q$, then for all contexts $\context{\ \cdot\ }$, we have $\context{P} \mathrel{\sim_o} \context{Q}$.
\end{theorem}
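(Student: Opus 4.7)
The plan is to adapt the strategy of Theorem~\ref{theorem:congruence}, decomposing the result into lemmas establishing closure of $\lsim_o$ under each process constructor of the applied $\pi$-calculus. Given $P \lsim_o Q$ witnessed by an open bisimulation $\mathcal{R}$ with $P \mathrel{\mathcal{R}}^{h_0} Q$ for the initial history $h_0 = x_1^i \cdot \hdots \cdot x_n^i$ covering the free variables of $P$ and $Q$, for each constructor I would construct a new history-indexed symmetric relation $\mathcal{S}$ that contains $\mathcal{R}$ and the pair of constructed processes, is closed under respecting substitutions (Def.~\ref{def:open-respects}), and absorbs the new transitions introduced by the constructor; then verify $\mathcal{S}$ is an open bisimulation.

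For the structural cases, closure under input prefix is established by taking $\mathcal{S}^{h}$ containing $\cin{M}{x}.P \mathrel{\mathcal{S}^{h}} \cin{M}{x}.Q$ with $x$ chosen fresh for $h$; the late \textsc{Inp} rule fires $\cin{M}{x}.P \lts{M(x)} P$, and we must have $P \mathrel{\mathcal{S}^{h \cdot x^i}} Q$, which is immediate after inflating $\mathcal{R}^{h_0}$ by the respecting-substitution clause. Closure under output and restriction uses the \textsc{Alias} and \textsc{Res} rules exactly as in Theorem~\ref{theorem:congruence}, with bound output variables $u$ chosen fresh and recorded in the history as $u^o$. Match and sum mirror the corresponding cases of the quasi-open proof directly.

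The main obstacle is closure under parallel composition, in the spirit of Lemma~\ref{lemma:par}. I would define $\mathcal{S}^{h}$ so that $\mathopen{\nu\vec{x}.}(\sigma \cpar A \cpar R\sigma) \mathrel{\mathcal{S}^{h}} \mathopen{\nu\vec{y}.}(\theta \cpar B \cpar R\theta)$ whenever $\mathopen{\nu\vec{x}.}(\sigma \cpar A) \mathrel{\mathcal{R}^{h}} \mathopen{\nu\vec{y}.}(\theta \cpar B)$ with $R$ fresh for $\vec{x} \cup \vec{y}$, and close under respecting substitutions. The delicate case is the \textsc{Close-l} synchronisation: when $A$ emits $\co{M}(u)$ and $R\sigma$ fires $M(x)$, the rule itself applies $\sub{x}{N}$ in the residual of $R\sigma$. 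Because the LTS is late, $x$ is chosen fresh for $h$ and $A$, so this local substitution never violates the respecting-history constraint, while the bound output variable $u$ only needs to be added to the history when the output escapes the parallel composition rather than being consumed internally. The symmetric case where $R$ outputs and $A$ inputs requires matching the input transition in $\mathcal{R}^{h}$ with the same bound variable, applying the \textsc{Close-l}-induced substitution after the fact, and appealing to the respecting-substitution closure of $\mathcal{R}$ to keep the residual in $\mathcal{S}$. Closure under replication then follows by an inductive construction in the style of Lemma~\ref{lemma:rep}, defining $\mathcal{S}_{n+1}$ from $\mathcal{S}_n$ to account for further unfoldings of the replicated process, with \textsc{Rep-close} handled by the same argument developed for \textsc{Close-l}.
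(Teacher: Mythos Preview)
Your proposal is correct and takes essentially the same approach as the paper, which simply states that the proof follows the same pattern as Theorem~\ref{theorem:congruence}. You have correctly identified the adaptations needed for the late labelled transition system and history-indexed relations, including the key point that in the \textsc{Close} cases the internally consumed bound variable need not be recorded in the history, while the late-style substitution applied to the input residual respects the history because freshly introduced input variables impose no constraint in Def.~\ref{def:open-respects}.
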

The fact that open bisimilarity is preserved in all contexts is sufficient to show open bisimilarity is sound with respect to the greatest bisimilarity congruence, open barbed bisimilarity.
The proof is absolutely identical to Corollary~3.9.
\begin{corollary}[soundness]
If $P \mathrel{\sim_o} Q$ (open bisimilarity) then $P \simeq Q$ (open barbed bisimilarity).
\end{corollary}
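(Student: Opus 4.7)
The plan is to show directly that open bisimilarity $\sim_o$ is itself an open barbed bisimulation, from which $P \sim_o Q$ implies $P \simeq Q$ follows by definition of $\simeq$ as the greatest such relation. The overall structure mirrors Corollary 3.9 exactly: verify symmetry, context closure, $\tau$-matching, and barb preservation. Assume $P \sim_o Q$, so $P \mathrel{\mathcal{R}}^{x_1^i\cdots x_n^i} Q$ for some open bisimulation $\mathcal{R}$, where $\fv{P}\cup\fv{Q}\subseteq \{x_1,\ldots,x_n\}$.

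Symmetry is immediate from the symmetry clause in Definition~\ref{definition:open}. Closure under all contexts is the preceding congruence theorem for open bisimilarity. For $\tau$-matching, note that open barbed bisimilarity (Def.~\ref{def:open-barbed}) asks about $\tau$-transitions in the early LTS of Fig.~\ref{figure:active}, whereas open bisimilarity tracks the late LTS of Fig.~\ref{fig:late}; however, the $\tau$-transitions on processes produced by the two systems coincide. Indeed, the early \textsc{Close-l} combines a bound output $\co{M}(x)$ with an early input $M\,N$ delivering $N$ directly, while the late \textsc{Close-l} combines the same bound output with a late input $M(x)$ and applies $\sub{x}{N}$ to $Q'$ afterwards; these produce identical resulting extended processes (and similarly for \textsc{Rep-close}). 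Hence the $\tau$-clause of $\sim_o$ directly yields the $\tau$-clause required for open barbed bisimilarity.

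For barb preservation, suppose $\barb{P}{M}$, which by definition means either $P \lts{\co{M}(z)} A$ or $P \lts{M\,N} A$ in the early LTS. Bound output transitions $\co{M}(z)$ are generated by identical rules in the late and early systems, so the output clause of open bisimilarity gives $Q \lts{\co{M}(z)} B$ for some $B$, witnessing $\barb{Q}{M}$. An early free input $P \lts{M\,N} A$ decomposes as a late input $P \lts{M(z)} P'$ for fresh $z$ with $A = P'\sub{z}{N}$; then the input clause of open bisimilarity yields $Q \lts{M(z)} B$ for some $B$, from which the early transition $Q \lts{M\,N} B\sub{z}{N}$ is immediate, so again $\barb{Q}{M}$.

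The main obstacle, as flagged in the excerpt, is really just checking rigorously that the late-versus-early correspondence holds uniformly for $\tau$ and for input barbs; once this bridging observation is nailed down, every other step literally reuses the reasoning of Corollary~\ref{theorem:soundness}. No further machinery is required because the congruence theorem has already discharged the clause that cost the most work, and statically equivalent frames and the respectful-substitution discipline play no role in this particular soundness direction — they only constrain which processes end up related by $\sim_o$ in the first place.
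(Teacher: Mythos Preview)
Your proposal is correct and follows the same approach as the paper, which simply states that ``the proof is absolutely identical to Corollary~\ref{theorem:soundness}.'' You have in fact been more careful than the paper: you explicitly address the bridging between the late LTS used by open bisimilarity and the early LTS used to define barbs and $\tau$-transitions in open barbed bisimilarity, a point the paper leaves implicit when it claims the argument is identical.
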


We make two observations. Firstly, open bisimilarity is not adequate for certain privacy properties, such as the running example from the introduction.
Open bisimilarity can be extended to handle mismatch, by indexing open bisimulations and labelled transitions by both a history and a finite set of inequalities.
However, any conservative extension of open bisimilarity does not induce the law of excluded middle for guards involving messages that behave like private names; and hence can detect attacks that do not exist.
Secondly, the definition of quasi-open bisimilarity is undeniably simpler than Def.~\ref{definition:open} requiring only the keyword ``open'', compared to labelled bisimilarity.
Philosophically speaking, by Occam's razor, the simpler model is more likely the better choice.

\subsection{Open bisimilarity is conservative with respect to the spi-calculus.}

To strongly situate the current work with respect to notions of bisimilarity for cryptographic calculi in literature, we compare the applied $\pi$-calculus to the spi-calculus~\cite{abadi99ic}.
The spi-calculus is a more concrete predecessor of the applied $\pi$-calculus which is hard-wired with a fixed Dolev-Yao model for messages.

In the fixed message theory of the spi-calculus we assume we have pairs $\pair{M}{N}$ and symmetric encryption $\enc{M}{N}$.
Furthermore, to capture the expressive power of the spi-calculus using static equivalence in the applied $\pi$-calculus,
we also require the corresponding deconstructors $\dec{M}{N}$, $\fst{M}$ and $\snd{M}$
and equational theory $D$ such that $\fst{\pair{M}{N}} =_D M$, $\snd{\pair{M}{N}} =_D N$, and $\dec{\enc{M}{K}}{K} =_D M$.
In addition, in order for static equivalence to have the standard distinguishing power of the spi-calculus, which is type aware,
we require terms $\texttt{is\_enc}(N)$ and $\ok$ along with equation $\texttt{is\_enc}(\enc{M}{K}) =_D \ok$.

Spi-calculus processes, as with applied $\pi$-calculus processes, feature deadlock, input, output, parallel compositions, new name restriction, match and replication. 
Message terms however are only formed from variables and constructors $\pair{M}{N}$ and $\enc{M}{N}$.
Instead of deconstructors, explicit processes terms \texttt{case} and \texttt{let} are provided for decrypting messages and decomposing pairs.
The syntax of spi-calculus processes is defined as follows.
\begin{gather*}
\begin{array}{c}
x, y \quad\mbox{variables}
\\\\
\begin{array}{rlr}
M, N \Coloneqq & x \\
       \mid & \enc{M}{N} \\
       \mid & \pair{M}{N} 
\end{array}
\end{array}
\qquad\qquad
\begin{array}{rlr}
P \Coloneqq & 0 \\
       \mid & \cin{M}{x}.P \\
       \mid & \cout{M}{N}.P \\
       \mid & P \cpar P \\
       \mid & \nu x. P \\
       \mid & \match{M = N}P \\
       \mid & \bang P \\
       \mid & \ccase{M}{x}{N}{P} \\
       \mid & \clet{\pair{x}{y}}{M}{P} 
\end{array}
\end{gather*}
Spi-calculus processes are embedded directly as applied $\pi$-calculus processes by using the following mapping from \texttt{case} and \texttt{let} statements to applied $\pi$-calculus processes with deconstructors.
\[
\begin{array}{rl}
\embed{\ccase{M}{x}{K}{P}}~~~~=& \mathopen{\left[\enc{\dec{M}{K}}{K} = M \right]}\embed{P\sub{x}{\dec{M}{K}}}
\\
\embed{\clet{\pair{x}{y}}{M}{P}}~~~~=& \mathopen{\left[\pair{\fst{M}}{\snd{M}} = M \right]}\embed{P\sub{x,y}{\fst{M},\snd{M}}}
\end{array}
\]
Notice that the hard-wired theory for the spi-calculus permits successful decryption to be detected, using the guard $\enc{\dec{M}{K}}{K} = M$. Similarly, we can detect whether a message is a pair by using guard $\pair{\fst{M}}{\snd{M}} = M$. Thus the guards above only permit progress when decryption or projection, respectively, is successful.

By restricting to the above fragment of the applied $\pi$-calculus, we obtain the following result.
Open bisimilarity for the ``spi-calculus fragment'' of the applied $\pi$-calculus
coincides with open bisimilarity for the spi-calculus.
\begin{proposition}
If $P$ and $Q$ are spi-calculus processes then $P$ is open bisimilar to $Q$, as defined in related work~\cite{briais06entcs,tiu07aplas},
if and only if $\embed{P}$ is open bisimilar to $\embed{Q}$ according to Def.~\ref{definition:open}.  
\end{proposition}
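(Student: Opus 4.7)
The plan is to prove this by establishing two bridging lemmas and then transporting bisimulations across the embedding in each direction. The spi-calculus open bisimilarity of Briais--Nestmann and Tiu is typically formulated using hedges (or equivalent machinery) to track the attacker's knowledge, whereas Def.~\ref{definition:open} uses static equivalence on active substitutions. Both approaches embed the respective labelled transition systems, so the proof reduces to showing that the embedding $\embed{\cdot}$ is operationally faithful and that the hedge-based indistinguishability agrees with static equivalence over the fragment generated by spi-calculus terms.

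First I would establish an \emph{operational correspondence lemma}: for any spi-calculus process $P$, $P \lts{\pi} P'$ in the spi-calculus late labelled transition system if and only if $\embed{P} \lts{\pi} \embed{P'}$ in the late applied $\pi$-calculus system of Fig.~\ref{fig:late}. The non-trivial cases are $\ccase{M}{x}{K}{\_}$ and $\clet{\pair{x}{y}}{M}{\_}$, where the spi-calculus requires the discriminee to actually be an encryption with $K$ or a pair; the translation encodes exactly this via the match guards $\mathopen{\left[\enc{\dec{M}{K}}{K} = M\right]}$ and $\mathopen{\left[\pair{\fst{M}}{\snd{M}} = M\right]}$, which under the equational theory $D$ hold iff the corresponding deconstruction succeeds. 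The inductive step for communication (\textsc{Close-l}, \textsc{Rep-close}) is direct since the embedding is homomorphic on the communication fragment.

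Second, I would prove a \emph{static equivalence correspondence lemma}: for spi-calculus extended processes $\mathopen{\nu\vec{x}.}(\sigma \cpar P)$ and $\mathopen{\nu\vec{y}.}(\theta \cpar Q)$ whose substitutions range over spi-calculus messages, the hedged indistinguishability used by Briais--Nestmann/Tiu coincides with static equivalence in the sense of Def.~\ref{def:static}, restricted to recipes built from $\pair{\cdot}{\cdot}, \enc{\cdot}{\cdot}, \dec{\cdot}{\cdot}, \fst{\cdot}, \snd{\cdot}, \texttt{is\_enc}(\cdot)$. This is essentially the content of existing results equating frame-based and hedge-based Dolev-Yao indistinguishability; the role of $\texttt{is\_enc}$ and $\ok$ captures the type-awareness of the spi-calculus attacker who can tell an encryption from a nonce. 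Respectfulness of substitutions with respect to histories (Def.~\ref{def:open-respects}) lines up directly on both sides, since extruded variables play the same role as the ``output side'' of a hedge.

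Given these two lemmas, the biconditional is proven by transporting bisimulations across $\embed{\cdot}$. For the forward direction, if $\mathcal{R}$ is a spi-calculus open bisimulation with $P \mathrel{\mathcal{R}}^{h} Q$, define $\mathcal{R}'$ as $\{(\embed{A},\embed{B}) \mid A \mathrel{\mathcal{R}}^{h} B\}$ closed under respecting substitutions, and verify the three clauses of Def.~\ref{definition:open} using the two lemmas. For the converse, observe that if $\mathcal{S}$ is an open bisimulation over applied $\pi$-calculus processes containing $(\embed{P},\embed{Q})$, then its restriction to pairs in the image of $\embed{\cdot}$ is closed under spi-calculus transitions (by the operational correspondence) and under hedged indistinguishability (by the static equivalence lemma), yielding a spi-calculus open bisimulation. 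The main obstacle will be the static equivalence correspondence: one must carefully check that every applied $\pi$-calculus recipe whose free variables are bound by the frame can be normalised, using the equations of $D$, to a spi-calculus recipe---otherwise an attacker in the applied $\pi$-calculus could in principle distinguish more than the spi-calculus attacker, breaking the backward direction. The inclusion of $\texttt{is\_enc}$ and $\ok$ together with the fact that the equational theory is subterm-convergent gives the required normalisation and makes the correspondence go through.
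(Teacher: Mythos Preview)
Your approach is essentially the same as the paper's, which gives only a brief sketch: it says the proof ``involves translating between the styles of the spi-calculus and applied $\pi$-calculus'' and then spells out the hedge-to-frame correspondence (your static equivalence lemma), namely that the hedge for $\mathopen{\nu\vec{y}.}(\sigma \cpar P)$ and $\mathopen{\nu\vec{z}.}(\theta \cpar Q)$ is the list $[(x_1\sigma,x_1\theta),\ldots,(x_n\sigma,x_n\theta)]$ with $\{x_1,\ldots,x_n\}=\dom{\theta}$. Your two-lemma decomposition (operational correspondence plus knowledge-notion correspondence) is the natural elaboration of that sketch and is more explicit than what the paper provides.

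One technical point to tighten: your operational correspondence as stated, $P \lts{\pi} P'$ iff $\embed{P}\lts{\pi}\embed{P'}$, is not literally true. After a \textsc{Mat} step through the guard $[\enc{\dec{M}{K}}{K}=M]$, the continuation carries the residual $\dec{M}{K}$ (and similarly $\fst{M},\snd{M}$ for \texttt{let}), so the target process is not syntactically in the image of $\embed{\cdot}$; it is only $=_D$-equal to $\embed{P'}$. Consequently, in the converse direction, ``restricting $\mathcal{S}$ to pairs in the image of $\embed{\cdot}$'' is not closed under transitions as written. The fix is routine---state the correspondence up to $=_D$ (or normalise destructor terms, which terminates since $D$ is subterm-convergent) and define the transported relation on $=_D$-classes---but it should be made explicit.
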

The proof of the above proposition involves translating between the styles of the spi-calculus and applied $\pi$-calculus.
Open bisimilarity~\cite{briais06entcs,tiu07aplas} for the spi-calculus is defined using ``hedges''~\cite{borgstrom05mscs} --- a data structure representing indistinguishable pairs of messages exposed to the environment. 
For extended processes $\mathopen{\nu \vec{y}.}\left( \sigma \cpar P \right)$ and $\mathopen{\nu \vec{z}.}\left( \theta \cpar Q \right)$, the hedge is a list of pairs $[(x_1\sigma,x_1\theta), \hdots (x_n\sigma,x_n\theta)]$, where $\left\{ x_1,\hdots, x_n \right\} = \dom{\theta}$.

\subsection{Note on the implementation of open bisimilarity and quasi-open bisimilarity.}

The main purpose of the observation in this section is to situate the current work with respect to existing work on bisimilarity congruences for cryptographic calculi. In addition, this observation emphasises that established decision procedures for open bisimilarity developed for the spi-calculus~\cite{Tiu2010CSF,Tiu2016} lift to the setting of the applied $\pi$-calculus. An ongoing challenge for future work is to adapt decision procedures to further message theories. This is already an established research direction in cryptographic protocol analysis, since the problem can be reduced to deciding whether static equivalence holds, under all respectful substitutions.

Although open bisimilarity cannot verify the privacy of Server B from the introduction; it can be used to discover the attack on Server C, or even unlinkability attacks on e-passports. 
Since open bisimilarity is less expensive than quasi-open bisimilarity, we propose the following methodology. Firstly, search for an open bisimulation. If the search fails, construct an attack in intuitionistic $\FM$. If the construction fails, we have not found a real attack; hence continue to search for a quasi-open bisimulation. 
In this way, the more expensive quasi-open bisimilarity is only employed lazily, when necessary. Future work will evaluate the effectiveness of this implementation strategy.


\section{Conclusion}

This is the first thorough investigation into bisimilarities for the applied $\pi$-calculus that are congruences.
In cryptographic calculi in general, a bisimilarity congruence, \textit{open bisimilarity}, has previously been introduced for the spi-calculus~\cite{briais06entcs,tiu07aplas}.
However, work on the spi-calculus did not handle mismatch, and is less abstract, being hard-wired with a fixed message theory.
By moving to the coarser setting of \textit{open barbed bisimilarity} and lifting to the applied $\pi$-calculus, we are able to handle mismatch and any message theory,
in such a way that privacy-type properties can be verified.
Privacy properties addressed, elaborated on in Sec.~\ref{section:privacy}, involve \textit{if-then-else} with a guard depending on private information.
Equivalences significantly finer, such as diff-equivalence~\cite{Blanchet2008,Cheval2013}, are incomplete and hence may suggest attacks that do not exist.
Equivalences coarser than quasi-open bisimilarity are either not congruences or are not bisimilarities, by Theorem~\ref{theorem:openbarbed}.

Although nothing in cryptography is simple, definitions we introduce are concise and general. 
In order to define open barbed bisimilarity (Def.~\ref{def:open-barbed}), we require only the following ingredients:
\begin{itemize}
\item A notion of fresh substitution (Def.~\ref{def:respects}).
\item An open early labelled transition system (Fig.~\ref{figure:active}).
\end{itemize}
Open barbed bisimilarity is then defined using three succinct clauses.
Quasi-open bisimilarity (Def~\ref{definition:quasi-open}), the labelled alternative to open barbed bisimilarity,
is also concise.
For quasi-open bisimilarity the additional device required is the standard definition of static equivalence (Def.~\ref{def:static}).
Open barbed bisimilarity provides an objective reference point --- any (strong) bisimilarity congruence must be sound with respect to open barbed bisimilarity. 
The main result of this paper, Theorem~\ref{theorem:openbarbed}, verifies quasi-open bisimilarity coincides with the more objective language-independent open barbed bisimilarity.
Such an objective reference allows design decisions to be resolved, such as how to handle expressive message theories and \texttt{if-then-else}.

In terms of definitions, the gap between open barbed bisimilarity and the ``classical'' \textit{observational equivalence} is small --- ensure the relation is preserved in all contexts, not just contexts that introduce a new process in parallel.
The gap between quasi-open bisimilarity and ``classical'' \textit{labelled bisimilarity} is smaller still --- ensure the relation is preserved under reachability (Def.~\ref{def:close}).
However, the gap is significant, since in this work we obtain a congruence relation. 
Furthermore, a recent breakthrough~\cite{Ahn2017} provided us with the insight to logically characterise open barbed bisimilarity.
The insight is that, closing a suitable modal logic under reachability, we obtain a characteristic intuitionistic modal logic (Theorems~\ref{theorem:sound-modal}~and~\ref{theorem:complete-modal}).
This we believe is the first logical characterisation of any bisimilarity congruence for any cryptographic calculus.
Characteristic formulae can be used, for example, to describe privacy attacks whenever two processes are distinguished.
Note, the intuitionistic modal logic $\FM$ is likely to have applications beyond describing attacks on privacy.

For the ``classical'' labelled bisimilarity~\cite{Abadi16}, there will always be the following hanging question.
\begin{quote}
Can I reason compositionally, proving sub-protocols are correct, with the reassurance that correctness will still hold in a larger context?
\end{quote}
For labelled bisimilarity, the answer to the above question is not immediate --- it depends on the processes and the context.
An example of such a potential pitfall is explained in Section~\ref{section:compositionality}.
In contrast, open barbed bisimilarity removes the need to ask the above question. 
Any property verified using open barbed bisimilarity can be reused anywhere in another proof.

\begin{acks}
I thank Alwen Tiu and Ki Yung Ahn for their collaboration in project MOE2014-T2-2-076 (Singapore MOE Tier 2 grant), under which this theory was developed.
Our joint work on mismatch, presented at LICS'18, was a prerequisite for extending the theory presented to the full applied $\pi$-calculus.
\end{acks}

\bibliography{biblio}



\end{document}